\numberwithin{equation}{section}
\newtheorem{Theorem}{theorem}
\newtheorem{Lemma}{Lemma}
\newtheorem{Corollary}{Corollary}
\newtheorem{Proposition}{Proposition}
\newtheorem{Example}{Example}
\definecolor{DarkRed}{rgb}{.7,0,.4}
\def\red{\textcolor{DarkRed}}
\def\bco{\iffalse}
\def\Cov{{\rm Cov}}
\def\Var{{\rm Var}}
\def\ci{\cite}
\def\cp{\citep}
\DeclareMathOperator*{\argmin}{argmin}
\def\red{\textcolor{red}}
\def\inv{^{-1}}
\def\d{\mathrm{d}}
\def\mc{\mathcal}
\def\dw{d_W}
\def\dinf{d_\infty}
\def\mk{\mathfrak}
\def\ast{^*}
\def\son{\sum_{i = 1}^n}
\DeclareMathOperator*{\esssup}{ess\,sup}
\def\i01{\int_0^1}
\def\xbar{\bar{X}}
\def\ybar{\bar{Y}}
\def\muy{\mu_Y}
\def\muz{\mu_Z}
\def\R{\mathbb{R}}
\def\fp{f_\oplus}
\def\hfp{\hat{f}_\oplus}
\def\Fp{F_\oplus}
\def\hFp{\hat{F}_\oplus}
\def\Qp{Q_\oplus}
\def\hQp{\hat{Q}_\oplus}
\def\qp{q_\oplus}
\def\hqp{\hat{q}_\oplus}
\def\hSigma{\hat{\Sigma}}
\def\D{\mc{D}}
\def\Vp{\Var_\oplus}
\def\Rp{\R^p}
\def\I{\mc{I}}
\def\Ist{{\I\ast}}
\def\Istc{{\bar{\I}\ast}}
\def\id{\mathrm{id}}
\def\T{^\top}
\def\zo{[0,1]}
\def\hlambda{\hat{\lambda}}
\def\hLambda{\hat{\Lambda}}
\def\Sigyz{\Sigma_{YZ}}
\def\Sigyy{\Sigma_{YY}}
\def\Sigzy{\Sigma_{ZY}}
\def\Sigzz{\Sigma_{ZZ}}
\def\hSigyy{\hat{\Sigma}_{YY}}
\def\hSigzy{\hat{\Sigma}_{ZY}}
\def\piy{\pi_Y}
\def\tgammay{\tilde{\gamma}_Y}
\def\tpiy{\tilde{\pi}_Y}
\def\bfp{f_{\oplus}\ast}
\def\bFp{F_\oplus\ast}
\def\bQp{Q_\oplus\ast}
\def\bqp{q_\oplus\ast}
\def\fzp{f_{0,\oplus}}
\def\sz{s_0}
\def\hfzp{\hat{f}_{0,\oplus}}
\def\hFzp{\hat{F}_{0,\oplus}}
\def\hbfp{\hat{f}_\oplus\ast}
\def\hbFp{\hat{F}_\oplus\ast}
\def\hbQp{\hat{Q}_\oplus\ast}
\def\hfi{\hat{\mk{F}}_i}
\def\hQi{\hat{Q}_i}
\def\hfzi{\hat{\mk{F}}_{0,i}}
\DeclarePairedDelimiterX{\normE}[1]{\lVert}{\rVert_E}{#1} % \norm{x} will give ||x||
\DeclarePairedDelimiterX{\normF}[1]{\lVert}{\rVert_F}{#1} % \norm{x} will give ||x||
\def\ipLt#1#2{\langle #1, #2\rangle_{L^2}}
\def\normLt#1{\lVert #1 \rVert_{L^2}}
\def\PX{P_{\mathbb{X}}}
\def\opx{o_{\PX}}
\def\Opx{O_{\PX}}
\def\EX{E_{\mathbb{X}}}
\def\VX{\Var_{\mathbb{X}}}
\def\rsa{\rightsquigarrow}
\def\CX{\Cov_{\mathbb{X}}}
\begin{document}

\begin{frontmatter}

% "Title of the paper"
\title{Wasserstein $F$-tests and confidence bands for the Fr\'echet regression of density response curves}
\runtitle{Wasserstein Inference}

% indicate corresponding author with \corref{}
% \author{\fnms{John} \snm{Smith}\corref{}\ead[label=e1]{smith@foo.com}\thanksref{t1}}
% \thankstext{t1}{Thanks to somebody} 
% \address{line 1\\ line 2\\ printead{e1}}
% \affiliation{Some University}

\author{\fnms{Alexander} \snm{Petersen}\thanksref{t1,m1}\ead[label=e1]{petersen@pstat.ucsb.edu}}
, 
\author{\fnms{Xi} \snm{Liu}\thanksref{m1}\ead[label=e2]{xliu@pstat.ucsb.edu}}
\and
\author{\fnms{Afshin A.} \snm{Divani}\thanksref{m2}\ead[label=e3]{adivani@salud.unm.edu}}
\affiliation{University of California, Santa Barbara\thanksmark{m1} and University of New Mexico\thanksmark{m2}}
%\affiliation{University of California, Santa Barbara\thanksmark{m1}}
\thankstext{t1}{Corresponding Author, supported by National Science Foundation grant DMS-1811888}

\address{Address of Alexander Petersen and Xi Liu\\
Department of Statistics and Applied Probability\\
University of California Santa Barbara\\
Santa Barbara, CA 93106-3110\\
\printead{e1}\\
\phantom{E-mail:\ }\printead*{e2}}

\address{Address of Afshin A. Divani \\
Department of Neurology \\
University of New Mexico \\
Albuquerque, NM 87131 \\
\printead{e3}}

\runauthor{Petersen, Liu and Divani}
%\runauthor{Petersen and Liu}

\begin{abstract}
Data consisting of samples of probability density functions are increasingly prevalent, necessitating the development of methodologies for their analysis that respect the inherent nonlinearities associated with densities.  In many applications, density curves appear as functional response objects in a regression model with vector predictors.  For such models, inference is key to understand the importance of density-predictor relationships, and the uncertainty associated with the estimated conditional mean densities, defined as conditional Fr\'echet means under a suitable metric.  Using the Wasserstein geometry of optimal transport, we consider the Fr\'echet regression of density curve responses and develop tests for global and partial effects, as well as simultaneous confidence bands for estimated conditional mean densities.  The asymptotic behavior of these objects is based on underlying functional central limit theorems within Wasserstein space, and we demonstrate that they are asymptotically of the correct size and coverage, with uniformly strong consistency of the proposed tests under sequences of contiguous alternatives.  The accuracy of these methods, including nominal size, power, and coverage, is assessed through simulations, and their utility is illustrated through a regression analysis of post-intracerebral hemorrhage hematoma densities and their associations with a set of clinical and radiological covariates.
\end{abstract}

\begin{keyword}[class=MSC]
\kwd[Primary 62J99, 62F03, 62F25]{}
\kwd[; secondary 62F05, 62F12]{}
\end{keyword}

\begin{keyword}
\kwd{Random Densities}
\kwd{Least Squares Regression}
\kwd{Wasserstein Metric}
\kwd{Tests for Regression Effects}
\kwd{Simultaneous Confidence Band}
\end{keyword}

\end{frontmatter}

\section{Introduction}
\label{sec: intro}

Samples of probability density functions arise naturally in many modern data analysis settings, including population age and mortality distributions across different countries or regions \cp{mena:16:2,bigo:17,mull:19:3}, and distributions of functional connectivity patterns in the brain \cp{mull:16:1}.  Methods for the analysis of density data began with the work of \ci{knei:01}, who applied standard functional principal components analysis (FPCA) in order to quantify a mean density and prominent modes of variability about that mean.  However, due to inherent constraints of density functions, which must be nonnegative and integrate to one, nonlinear methods are steadily replacing such standard procedures.  For example, \ci{mena:16:2} and \ci{mull:16:1} both proposed to apply a preliminary transformation to the densities, mapping them into a Hilbert space, after which linear methods such as FPCA can be suitably applied.  The transformation of \ci{mena:16:2} was specifically motivated by the extension of the Aitchison geometry \cp{aitc:86} to infinite-dimensional compositional data due to the work of \ci{egoz:06}, while those in \ci{mull:16:1} were generic and not motivated by any particular geometry.

Parallel developments in the analysis of nonlinear data have been made in the broader field of object-oriented data analysis \cp{marr:14}, where a complex data space is endowed with a chosen metric that, in turn, defines the parameters of the model.  Particular attention has been paid to manifold-valued data (e.g., \ci{flet:04, pana:14}), with the main objects of interest being the Fr\'echet mean and variance, as well as dimension reduction tools designed to optimally retain variability in the data \cp{patr:15}.  Within this context, \ci{sriv:07} and \ci{bigo:17} developed manifold-based dimension reduction techniques specifically designed for samples of density functions, the former utilizing the Fisher-Rao geometry and the latter the Wasserstein geometry of optimal transport.  Of these two, the Wasserstein metric has proved in recent years to have greater appeal both theoretically, given its clear interpretation as an optimal transport cost \cp{vill:03,ambr:08}, as well as in applied settings \cp{bols:03,broa:06,mull:11:4,pana:16}.

In this paper, we study a regression model with density functions as response variables under the Wasserstein geometry, with predictors being Euclidean vectors.  Such data are frequently encountered in practice (e.g., \ci{neri:07,tals:18}). The goal of the model is to perform inference, specifically tests for covariate effects and confidence bands for the fitted conditional mean densities.  For this reason, we assume a global regression model that does not require any smoothing or other tuning parameter to fit.  Standard functional response models, such as the linear model \cp{fara:97}, are not suitable for the nonlinear space of density functions unless the densities are first transformed into a linear space, as demonstrated recently in \ci{tals:18} using the compositional approach.  However, there is no such transformation that is suitable for the Wasserstein geometry.  Global regression models for Riemannian manifolds have been developed \cp{flet:13,niet:11,corn:17}, but are also not directly applicable to the Wasserstein geometry.  Instead, we will develop our inferential techniques under the Fr\'echet regression model proposed in \ci{mull:19:3}, which defines a global regression function between response data in an arbitrary metric space and vector predictors.  Although the theory of estimation under this model was well-studied in the general metric space framework, this is not the case for other forms of inference.  

\begin{figure}[t]
\label{fig: ObsFitted}
\centering
\includegraphics{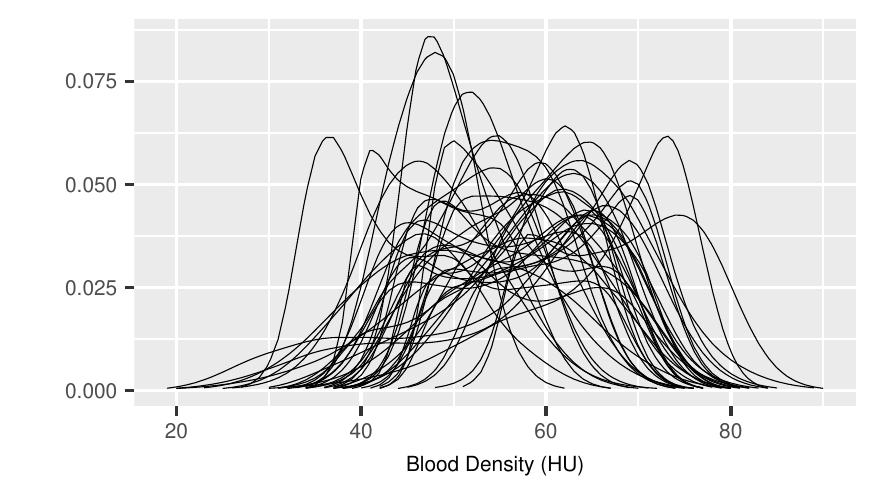} %\includegraphics[scale = 0.7]{FittedDensity}
\caption{\footnotesize Observed hematoma density distributions for a randomly selected subset of 40 post-intracerebral hemmorhage patients.\label{fig: ObsFitted}}
\end{figure}
\normalsize

In Section~\ref{sec: prelim}, we briefly describe the necessary components of the Wasserstein geometry and its implications when applied to the Fr\'echet regression model.  An additional component not considered in \ci{mull:19:3} that is available through this particular formulation is the random optimal transport map between the conditional Wasserstein mean density and the observed one.  These maps serve the purpose of the error term in the regression model, although they do not act additively or even linearly.  In Section~\ref{sec: tests}, we develop intuitive test statistics for covariate effects and derive their asymptotic distributions, leading to root-$n$ consistent testing procedures.  We also describe different methods for implementing these tests, including a bootstrap procedure involving residual optimal transport maps obtained from the fitted model.  Section~\ref{sec: cb} demonstrates how to compute asymptotic confidence bands about the estimated conditional mean distributions.  Finally, these methods are illustrated through extensive simulations (Section~\ref{sec: sims}) and the analysis of distributions of hematoma density for stroke patients (Section~\ref{sec: stroke}), as captured by computed tomography scans, and their dependency on patient-specific covariates (Figure~\ref{fig: ObsFitted}).

\section{Preliminaries}
\label{sec: prelim}

We begin with a brief definition of the Wasserstein metric in the language of optimal transport.   This Wasserstein metric has been known under other names, including ``Mallows'' and ``earth-mover's" \cp{levi:01}, and its use in statistics is rapidly expanding \cp{pana:19}.  Let $\mc{D}$ be the class of univariate probability density functions $f$ that satisfy $\int_{\R} u^2 f(u) \d u < \infty$, i.e. absolutely continuous distributions on with a finite second moment.  In fact, the Wasserstein metric is well-defined for such distributions without requiring a density, but for simplicity of presentation we deal specifically with this subclass.  For a comprehensive treatment in the more general case, see \ci{vill:03} or \ci{ambr:08}.  For $f,$ $g \in \D,$ consider the collection of maps $\mc{M}_{f,g}$ such that, if $U \sim f$ and $M \in \mc{M}_{f,g}$, then $M(U) \sim g.$  The squared Wasserstein distance between these two distributions is
$$
\dw^2(f, g) = \inf_{M \in \mc{M}_{f,g}} \int_\R \left(M(u) - u\right)^2 f(u)\d u.
$$
It is well known that the infimum above is attained by the optimal transport map $M_{f,g}^{\textrm{opt}} = G\inv \circ F$, where $F$ and $G$ are the cumulative distribution functions of $f$ and $g$, respectively, leading to the closed forms
\begin{equation}
\label{eq: wass2}
\dw^2(f, g) = \int_\R \left(M_{f,g}^{\textrm{opt}}(u) - u\right)^2 f(u) \d u  = \int_0^1 \left(F\inv(t) - G\inv(t)\right)^2\d t,
\end{equation}
where the last equality follows by the change of variables $t = F(u).$  A more proper term for this metric is the Wasserstein-2 distance, since it is just one among an entire class of Wasserstein metrics.  

Within this larger class of metrics is also the Wasserstein-$\infty$ metric, which will be useful in the formation of confidence bands.  For two densities $f,g \in \D,$ their Wasserstein-$\infty$ distance is
\begin{equation}
\label{eq: wassInf}
\dinf(f, g) = f\textrm{-}\esssup_{u \in \I} |M_{f,g}^{\textrm{opt}}(u) - u|, %= \sup_{t \in [0,1]} |F\inv(t) - G\inv(t)|,
\end{equation}
where $f\textrm{-}\esssup$ refers to the essential supremum with respect to $f$.  

\subsection{Random Densities}
\label{ss: randomDens}

A random density $\mk{F}$ is a random variable taking its values almost surely in $\mc{D}.$  It will also be useful to refer to the corresponding random cdf $F$, quantile function $Q = F\inv$ and quantile density $q = Q' = 1/(\mk{F} \circ Q)$ \cp{parz:79}.  For clarity, $u,v \in \R$ will consistently be used as density and cdf arguments throughout, whereas $s, t \in[0,1]$ will be used as arguments for the quantile and quantile density functions.  Following the ideas of \ci{frec:48}, the Wasserstein--Fr\'echet (or simply Wasserstein) mean and variance of $\mk{F}$ are
\begin{equation}
\label{eq: wMeanVar}
\bfp := \argmin_{f \in \D} E\left(\dw^2(\mk{F}, f)\right), \quad \Var_\oplus(\mk{F}):= E\left(\dw^2(\mk{F}, \bfp)\right).
\end{equation}
%Letting $\Fp$ denote the cdf corresponding to $\fp,$ set $T = Q \circ \Fp$ to be the optimal transport between the Wasserstein mean and the random density $\mk{F}.$  It is an easy exercise to show that $E(T(u)) = u$ for all $u$ such that $\fp(u) > 0,$ and $\Vp(\mk{F}) = \int_\I \Var(T(u))\fp(u)\,\d u.$

In the regression setting, we model the distribution of $\mk{F}$ conditional on a vector $X \in \R^p$ of predictors, where the pair $(X, \mk{F})$ is distributed according to a probability measure $\mc{G}$ on the product space $\R^p \times \mc{D}.$  In this sense, the objects in \eqref{eq: wMeanVar} are the marginal Fr\'echet mean and variance of $\mk{F}.$    Let $\mk{S}_X$ denote the support of the marginal distribution of $X.$ Our interest is in the Fr\'echet regression function, or function of conditional Fr\'echet means,
\begin{equation}
\label{eq: condWMean}
\fp(x) := \argmin_{f \in \D} E\left[\dw^2(\mk{F}, f) | X = x\right], \quad x \in \mk{S}_X.
\end{equation}
Let $\Fp(x)$, $\Qp(x)$, and $\qp(x)$ denote, respectively, the cdf, quantile, and quantile density functions corresponding to $\fp(x).$  We will use the notation $\fp(x,u)$ to denote the value of the conditional mean density $\fp(x)$ at argument $u \in \R,$ and similary for $\Fp(x,u),$ $\Qp(x,t)$, and $\qp(x,t),$ $t \in \zo.$ For a pair $(X,\mk{F})$, define $T = Q \circ \Fp(X)$ to be the optimal transport map from the conditional mean $\fp(X)$ to the random density $\mk{F}.$  By \eqref{eq: wass2}, it must be that $E(Q(t)|X=x) = \Qp(x,t),$ so that $E(T(u) |X = x) = u$ for $u$ such that $\fp(x,u) > 0.$ Then the conditional Fr\'echet variance is
\begin{equation}
\label{eq: condWVar}
\begin{split}
\Vp(\mk{F}|X= x) &= E\left[\dw^2(\mk{F}, \fp(x))|X = x\right] = \int_\R E\left[(T(u) - u)^2|X=x\right]\fp(x,u)\d u \\
&= \int_\R \Var(T(u)|X=x) \fp(x,u) \d u.
\end{split}
\end{equation}

In these developments, we have assumed that the marginal and conditional Wasserstein mean densities exist and are unique.  However, this is not automatic and some conditions are needed.  Previous work on existence, uniqueness, and regularity of Wasserstein means, or barycenters, for a finite collection of probability meaures on $\R^d$ was done by \ci{ague:11}, and extended to continuously-indexed measures by \ci{pass:13}.  For random probability measures with support on $\zo,$ \ci{pana:16} (see Proposition 2 therein) gave sufficient conditions for the existence and uniqueness of a Wasserstein mean measure, although it was not guaranteed to have a density.  However, none of these are sufficiently strong for the purposes of this paper, where existence, uniqueness, and regularity of both marginal \eqref{eq: wMeanVar} and conditional \eqref{eq: condWMean} Wasserstein means are needed.  To this end, consider the following assumptions on the joint distribution $\mc{G}.$
\begin{enumerate}
\item[(A1)] $\mk{F}(u) \in (0,\infty)$ for $u \in (Q(0),Q(1))$ almost surely, $\Var(Q(t))< \infty$ for all $t\in (0,1),$ and $\i01 \Var(Q(t)) \d t < \infty.$
\item[(A2)]  For any $t \in (0,1),$ there exists $\delta,$ $0 < \delta < \min\{t,1-t\}$ such that $E\left(\sup_{|s-t|<\delta} q(s)\right)< \infty.$
\item[(A3)] For all $x \in \mk{S}_X,$ $P\left(\sup_{u \in (Q(0), Q(1))} \mk{F}(u) < \infty | X = x\right) > 0.$
\end{enumerate}

Assumption (A1) is essentially the same as that made in Proposition 2 of \ci{pana:16}, with additional moment assumptions on $Q$ since  $\mk{F}$ is not assumed to be supported on any bounded interval, and implies existence and uniqueness of the Wasserstein mean measures.  Assumption (A2) is a regularity condition to ensure these Wasserstein means have densities in $\mc{D},$ and (A3) implies that these mean densities are bounded;  see \ci{pass:13} for a similar assumption.  The proof of the following and all other theoretical results can be found in the Appendix.
\begin{Proposition}
\label{prop: wMeanUnique}
Suppose conditions (A1)--(A3) hold.  Then the marginal Wasserstein mean $\bfp$ in \eqref{eq: wMeanVar} exists, is a unique element of $\mc{D},$ and satisfies $\sup_{u} \bfp(u) < \infty$ and \mbox{$\Vp(\mk{F}) < \infty.$}  For all $x \in \mk{S}_X,$ the conditional Wasserstein mean $\fp(x)$ in \eqref{eq: condWMean} exists, is a unique element of $\mc{D}$, and satisfies $\sup_{u}\fp(x,u) < \infty$ and $\Vp(\mk{F}|X = x) < \infty.$
\end{Proposition}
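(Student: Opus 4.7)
The plan is to exploit the quantile isometry \eqref{eq: wass2}, which identifies $(\D, \dw)$ with the convex cone of nondecreasing functions in $L^2(\zo)$ via $f \mapsto F\inv$; under this identification the Wasserstein--Fr\'echet functional
$$
f \mapsto E\left[\dw^2(\mk{F}, f)\right] = \i01 E\left[(Q(t) - F\inv(t))^2\right] \d t
$$
is strictly convex in $F\inv$, so any minimizer is unique and pointwise minimization picks out the candidate $\bQp(t) := E[Q(t)]$. Under (A1), $Q$ is continuous and strictly increasing almost surely, and $\bQp$ is well-defined, nondecreasing, and square integrable on $(0,1)$, with attained value $\Vp(\mk{F}) = \i01 \Var(Q(t))\d t < \infty$. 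The remaining task is to check that $\bQp$ is in fact the quantile function of some density $\bfp \in \D$ that additionally satisfies $\sup_u \bfp(u) < \infty$.

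For absolute continuity of $\bQp$ I would invoke (A2): on any compact subinterval $[t_1,t_2] \subset (0,1)$, a finite subcover by the neighborhoods supplied by (A2) yields $E\sup_{t \in [t_1,t_2]} q(t) < \infty$, so by Fubini
$$
\bQp(t_2) - \bQp(t_1) = E\int_{t_1}^{t_2} q(s)\d s = \int_{t_1}^{t_2} E[q(s)]\d s,
$$
which identifies $\bfp$ as the density whose quantile density is $\bqp(t) := E[q(t)]$, strictly positive on $(0,1)$. For the sup-norm bound I would use the identity $\sup_u f(u) = 1/\inf_{t \in (0,1)} q(t)$ valid for any $f \in \D$. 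Integrating (A3) against the marginal of $X$ yields $P(\sup_u \mk{F}(u) < \infty) > 0$, and since this event equals $\bigcup_{n \geq 1}\{\inf_t q(t) \geq 1/n\}$, there is $n_0$ with $P(A_{n_0}) > 0$, where $A_{n_0} := \{\inf_t q(t) \geq 1/n_0\}$; then $\bqp(t) \geq P(A_{n_0})/n_0 > 0$ uniformly in $t$, giving $\sup_u \bfp(u) \leq n_0/P(A_{n_0}) < \infty$.

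The conditional assertions follow by running each step verbatim with $E$ replaced by $E[\,\cdot\,|X=x]$: assumption (A3) is already stated conditionally for each $x \in \mk{S}_X$, while (A1)--(A2) carry over in their natural conditional counterparts. The main obstacle I anticipate is the Fubini interchange in the absolute-continuity step, since for densities with unbounded support $q(s)$ typically diverges as $s \to 0^+$ or $s \to 1^-$, ruling out any global bound of the form $E\sup_{t \in (0,1)} q(t) < \infty$; condition (A2) is precisely the minimal local integrability that allows the interchange on each compact subinterval of $(0,1)$, and threading this through carefully is the chief technical point of the argument.
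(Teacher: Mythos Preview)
Your proposal is correct and follows essentially the same architecture as the paper's proof: identify the Wasserstein mean via the quantile isometry as $\bQp(t)=E[Q(t)]$, use (A2) to show $\bQp$ is differentiable with $\bqp(t)=E[q(t)]$, and use (A3) to bound $\bqp$ uniformly away from zero. The only noteworthy difference is technical: the paper establishes $\bQp'(t)=E[q(t)]$ by bounding the difference quotient $|Q(t_k)-Q(t)|/|t_k-t|\le \sup_{|s-t|<\delta}q(s)$ and applying dominated convergence directly, whereas you pass through a compactness/covering argument to get $E\sup_{[t_1,t_2]}q<\infty$ on compact subintervals and then invoke Fubini on $E\int_{t_1}^{t_2}q(s)\,ds$; both routes rest on exactly the same local integrability supplied by (A2) and yield the same conclusion.
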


\subsection{Global Wasserstein--Fr\'echet Regression}
\label{ss: regModel}

In order to facilitate inference, specifically tests for no or partial effects of the covariates $X$ and confidence bands for the conditional Wasserstein means, we consider a particular global regression model for the conditional Wassersteins $\fp(x)$ defined in \eqref{eq: condWMean}.  This model, proposed by \ci{mull:19:3}, is termed Fr\'echet regression, and takes the form of a weighted Fr\'echet mean
\begin{equation}
\label{eq: Wmodel}
\fp(x) = \argmin_{f \in \D} E\left[s(X, x)\dw^2(\mk{F}, f)\right],
\end{equation}
where the weight function is
$$
s(X, x) = 1 + (X - \mu)^\top \Sigma\inv(x - \mu), \quad \mu = E(X), \, \Sigma = \Var(X),
$$ 
and $\Sigma$ is assumed to be positive definite.  The model is motivated by multiple linear regression, and is its direct generalization to the case of a response variable in a metric space.  Specifically, if a scalar response $Y$ is jointly distributed with $X$ and $E(Y|X = x)$ is linear in $x,$ \ci{mull:19:3} showed that an alternative characterization of linear regression is
$$
E(Y|X = x) = \argmin_{y \in \R} E\left[s(X,x)(Y-y)^2\right].
$$
Thus, model \eqref{eq: Wmodel} generalizes linear regression to the case of density response by substituting $\mk{F}$ for $Y$ and $(\mc{D},d_W)$ in place of the usual metric space $(\R, |\cdot|)$ implicitly used in multiple linear regression.  Although $\mc{D}$ is not a linear space, \eqref{eq: Wmodel} provides a sensible regression model for the conditional Wasserstein means that retains some properties of linear regression.  For example, since $s(z, \mu)\equiv 1,$ we have $\fp(\mu) = \bfp,$ so that the regression function passes through the point $(\mu, \bfp).$

For the remainder of the paper, and implicit in the statement of all theoretical results, we assume that the distribution $\mc{G}$ satisfies model \eqref{eq: Wmodel}, with $\bfp,$ $\fp(x)$ being unique elements of $\mc{D}.$  We now give a very basic example of this model, motivated by the well-known connection between the Wasserstein metric and location-scale families (e.g., \ci{bigo:17}).
\begin{Example}
\label{exm: LocScale}
Suppose $X \in \R$ and fix $f_0 \in \mc{D}.$ Letting $a_j, b_j \in \R$, define 
$$
\nu(x) = a_0 + a_1x, \quad \tau(x) = b_0 + b_1x,
$$
and suppose $\tau(X) > 0$ almost surely.  Let $V_1,V_2$ satisfy $E(V_1|X)= 0,$ $E(V_2|X) = 1,$ and $V_2 > 0$ almost surely. Then the location-scale model
$$
\mk{F}(u) = \frac{1}{V_2\tau(X)}f_0\left(\frac{u - V_1 - V_2\nu(X)}{V_2\tau(X)}\right)
$$
corresponds to \eqref{eq: Wmodel} with $$\fp(x,u) = \frac{1}{\tau(x)} f_0\left(\frac{u - \nu(x)}{\tau(x)}\right).$$

To show the above, it is sufficient to show that $E(Q(t)|X) = \Qp(X,t)$ almost surely.  Let $F_0$ be the cdf corresponding to $f_0,$ so that $Q_\oplus(x,t) = \nu(x) + \tau(x)Q_0(t).$ Since
$$
Q(t) = V_1 + V_2\nu(X) + V_2\tau(X)Q_0(t),
$$
it is easily verified that $E(Q(t)|X) = \Qp(X,t)$ by the properties of $V_1,$ $V_2.$  Moreover, note that the optimal transport map from $\fp(X)$ to $\mk{F}$ is
$$
T(u) = Q \circ \Fp(X,u) = V_1 + V_2u,
$$
and satisfies $E(T(u)|X) = u$ almost surely.
\end{Example}

Thus far, the regression model \eqref{eq: Wmodel} provides us with a formula for the conditional Wasserstein mean of $\mk{F},$ whereas one also needs information on the conditional variance in order to conduct inference.  To this end, observe that $Q = T \circ \Qp(X),$ so that $T$ acts as a residual transport, although it acts on the quantile function, and not the density, and does so through composition and not additively.  As oberved in Section~\ref{ss: randomDens}, the first order behavior of the residual transport $T$ is completely specified by the model as $E(T(u)|X) = u$ almost surely.  We also impose the following assumption on the covariance.
\begin{enumerate}
\item[(T1)] The covariance function 
$
C_T(u, v) = \Cov(T(u), T(v))
$
is continuous, $\sup_{u \in \R} \Var(T(u)) < \infty,$ and $\Cov(T(u), T(v) | X) = C_T(u, v)$ almost surely.
\end{enumerate}
This corresponds to the classical constant variance or exogeneity assumption requiring that the second order behavior of the residual transport be independent of the predictors. Define \mbox{$S = \Qp(X) \circ \bFp$}, which is the optimal transport map from the marginal Wasserstein mean to the conditional one.  Again, it is easily verified that \mbox{$E(T\circ S(u)) = u$} for $u$ such that $\bfp(u) > 0.$ These observations lead to the following decomposition of the Wasserstein variance.
\begin{Proposition}
\label{prop: wanova}
Suppose that assumption (T1) is satisfied.  Then
\begin{equation}
\label{eq: wanova}
\begin{split}
\Vp(\mk{F}) &= E\left[\Vp(\mk{F}|X)\right] + \Vp(\fp(X)) \\
&= \int_\R E\left[C_T(S(u), S(u))\right] \bfp(u) \d u + \int_\R(S(u) - u)^2\bfp(u) \d u.
\end{split}
\end{equation}
\end{Proposition}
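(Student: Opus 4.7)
The plan is to write $\dw^2(\mk{F}, \bfp)$ as an integral against $\bfp$ via the optimal transport from $\bfp$ to $\mk{F}$, decompose the integrand along the intermediate density $\fp(X)$ using the transport factorization $T \circ S$, and extract both terms of the Pythagorean-style identity by conditioning on $X$. The crucial structural observation is that both $S = \Qp(X) \circ \bFp$ and $T = Q \circ \Fp(X)$ are non-decreasing (as compositions of quantile and cdf functions), so $T \circ S = Q \circ \bFp$ is the monotone rearrangement from $\bfp$ to $\mk{F}$, and \eqref{eq: wass2} yields
$$
\dw^2(\mk{F}, \bfp) = \int_\R (T(S(u)) - u)^2 \bfp(u) \d u.
$$
Taking expectation and applying Fubini (justified by $\Vp(\mk{F}) < \infty$ from Proposition~\ref{prop: wMeanUnique}), one obtains $\Vp(\mk{F}) = \int_\R E[(T(S(u)) - u)^2] \bfp(u) \d u$.

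I would then expand $(T(S(u)) - u)^2 = (T(S(u)) - S(u))^2 + 2(T(S(u)) - S(u))(S(u) - u) + (S(u) - u)^2$ and condition on $X$. Since $S(u)$ is $\sigma(X)$-measurable and $E[T(v)|X] = v$ at $v = S(u)$ whenever $\bfp(u) > 0$ (the conditional form of the identity $E(T \circ S(u)) = u$ noted just before the proposition), the cross term drops out in expectation, giving $E[(T(S(u)) - u)^2] = E[(T(S(u)) - S(u))^2] + E[(S(u) - u)^2]$.

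For the first pure-square term, assumption (T1) yields $E[(T(S(u)) - S(u))^2 | X] = \Var(T(S(u)) | X) = C_T(S(u), S(u))$ almost surely, which is the integrand of the first summand in the target expression. A change of variables $v = S(u)$ using the pushforward identity $\fp(X, v) \d v = \bfp(u) \d u$ (valid almost surely because $S$ is the measure-preserving rearrangement from $\bfp$ to $\fp(X)$) identifies
$$
\int_\R E[C_T(S(u), S(u))] \bfp(u) \d u = E \int_\R \Var(T(v)|X) \fp(X, v) \d v = E[\Vp(\mk{F}|X)]
$$
via \eqref{eq: condWVar}. For the second pure-square term, since $S$ is the optimal transport from $\bfp$ to $\fp(X)$, $\int_\R (S(u) - u)^2 \bfp(u) \d u = \dw^2(\fp(X), \bfp)$; and since $E\Qp(X,t) = E[E(Q(t)|X)] = \bQp(t)$, the Fréchet mean of the random density $\fp(X)$ is $\bfp$ itself, so $E[\dw^2(\fp(X), \bfp)] = \Vp(\fp(X))$. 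Combining these pieces gives both equalities in \eqref{eq: wanova}.

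The main obstacle, and the step requiring the most care, is justifying the change-of-variables identity $\fp(X, v) \d v = \bfp(u) \d u$ used in the conditional-variance piece. This requires $\bFp$ to be absolutely continuous and $\Qp(X)$ to be strictly increasing and absolutely continuous on the relevant interval almost surely, which is precisely what assumptions (A1)--(A3) and the conclusions of Proposition~\ref{prop: wMeanUnique} provide.
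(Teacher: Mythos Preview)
Your proposal is correct and follows essentially the same approach as the paper's proof: write $\dw^2(\mk{F},\bfp)$ as $\int (T\circ S(u)-u)^2\bfp(u)\,\d u$, expand the square, kill the cross term via $E(T\circ S(u)\mid X)=S(u)$, and identify the two surviving pieces with $E[\Vp(\mk{F}\mid X)]$ and $\Vp(\fp(X))$. If anything, your treatment of the link between $\int E[C_T(S(u),S(u))]\bfp(u)\,\d u$ and $E[\Vp(\mk{F}\mid X)]$ via the pushforward change of variables $v=S(u)$ is more explicit than the paper's, which leaves that step implicit.
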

While a standard result in Euclidean spaces, the above variance decomposition does not generally hold in metric spaces such as $\mc{D}.$  However, Proposition~\ref{prop: wanova} demonstrates that this decomposition does indeed hold for random densities under the Wasserstein metric whenever model \eqref{eq: Wmodel} holds.  This finding motivates the specific choices of test statistics developed in Section~\ref{sec: tests}.

\subsection{Estimation}
\label{ss: est}

In order to estimate the regression function $\fp(x),$ we utilize an empirical version of the least-squares Wasserstein criterion in \eqref{eq: Wmodel}.   First, set  \mbox{$\xbar = n\inv\son X_i,$} \mbox{$\hSigma = n\inv\son (X_i - \xbar)(X_i - \xbar)^\top,$} and compute empirical weights
$
s_{in}(x) = 1 + (X_i - \bar{X})^\top \hSigma\inv(x - \bar{X}),
$
Let $\mk{Q}$ be the set of quantile functions in $L^2\zo.$  With $\normLt{\cdot}$ denoting the standard Hilbert norm on $L^2\zo$, an estimator of $\Qp(x)$ is
\begin{equation}
\label{eq: Qfit}
\hQp(x) = \argmin_{Q \in \mk{Q}} \son s_{in}(x) \normLt{Q - Q_i}^2.
\end{equation}
Implementation of this estimator is given in Algorithm~\ref{alg: Qest} in the Appendix.  In finite samples, $\hQp(x)$ will not necessarily admit a density.  For almost all procedures described, this quantile estimate is sufficient, since it can be used to compute Wasserstein distances as in \eqref{eq: wass2}.  Nevertheless, we will establish (see Lemma~\ref{lma: regular} in the Appendix) that $\hQp(x)$ admits a density $\hfp(x)\in \mc{D}$ for large samples with high probability. When this holds, the estimate
\begin{equation}
\label{eq: fullEst}
\hfp(x) = \argmin_{f \in \D} \frac{1}{n}\son s_{in}(x)\dw^2(\mk{F}_i, f).
\end{equation}
is well-defined, and $\hfp(x)$ is the density corresponding to the quantile estimate $\hQp(x)$ above.  It can be computed in practice using Algorithm~\ref{alg: fest} in the Appendix.

Since we will also consider hypothesis tests of partial effects, write $x \in \Rp$ as $x = (y,z),$ where $y$ corresponds to the first $q$ entries of $x,$ $q < p.$  Similarly, take $X = (Y, Z),$ $\mu = (\muy, \muz),$ and
$$
\Sigma = \left(\begin{array}{c c} \Sigyy & \Sigyz \\ \Sigzy & \Sigzz \end{array}\right),
$$
with corresponding notations for the partitions of $\xbar$ and $\hSigma.$ Consider the null hypothesis $\mc{H}_0^P: \fp(x) = \fzp(y),$
$$ 
\fzp(y) := \argmin_{f \in \mc{D}}E\left[\sz(Y, y)\dw^2(\mk{F}_i, f)\right], \,\, \sz(Y, y)=1 + (Y - \mu_Y)\Sigyy\inv(y - \muy).
$$
Then the restricted estimators $\hat{Q}_{0,\oplus}(y)$ and $\hfzp(y)$ are defined analogously to \eqref{eq: Qfit} and \eqref{eq: fullEst}, only using submodel weights % is
%\begin{equation}
%\label{eq: subEst}
%\hfzp(y) = \argmin_{f \in \Dc} \frac{1}{n}\son s_{in,0}(y)\dw^2(\mk{F}_i, f),
%\end{equation}
%where
$$
s_{in,0}(y) = 1 + (Y - \ybar)\T \hSigyy\inv (y - \ybar).
$$
%are the restricted empirical weights.  Once again, this estimator is well-defined with high probability only for large samples.  When it cannot be computed, a quantile estimator can be computed as mentioned above.

\section{Hypothesis Testing}
\label{sec: tests}

Once estimation is under control, the first goal of any global regression model is to test for effects of the predictors.  In the more abstract setting of a response variable in an arbitrary metric space, \ci{mull:19:3} suggested a permutation approach based on a Fr\'echet generalization of the coefficient of determination $R^2$ in multiple linear regression, though the theoretical properties of this test were not investigated. In a recent preprint, \ci{mull:18:5} developed a test statistic and its asymptotic distribution for the case of a random object response and categorical predictors, giving a Fr\'echet extension of analysis of variance.  Given that we are considering the more specific case of density-valued response variables under the Wasserstein geometry, we are able to expand on these results in order to develop asymptotically justified tests for both global and partial null hypotheses, where predictors can be of any type.

\subsection{Test of No Effects}
\label{ss: global}

We begin with the global null hypothesis of no effects, $\mc{H}_0^G: \fp(x) \equiv \bfp$.  Given the Wasserstein variance decomposition in \eqref{eq: wanova}, under $\mc{H}_0^G$ we have $\Vp(\fp(X)) = E\left(\dw^2(\fp(X), \bfp)\right) = 0.$
This motivates
\begin{equation}
\label{eq: globalF}
F\ast_G = \sum_{i = 1}^n \dw^2(\hfi, \hbfp)
\end{equation}
as a test statistic, where $\hfi = \hfp(X_i)$ are the fitted densities and $\hbfp = \hfp(\xbar)$ is the sample Wasserstein mean. This can be viewed as a generalization of the numerator of the global $F$-test in multiple linear regression, and we thus refer to $F\ast_G$ in \eqref{eq: globalF} as the (global) Wasserstein $F$-statistic.

In order to establish the asymptotic null distribution of $F\ast_G,$ we require the following assumptions.  Define 
$
C_T^{(l,m)} = \frac{\partial^{l+m}}{\partial u^l \partial v^m}C_T
$
and, for any $x \in \R^p,$ set $\tilde{x} = (1\,\, x\T)\T.$  Let $R = T\circ S = Q\circ \bFp$ and $\I_x = (\Qp(x,0), \Qp(x,1)).$
\begin{enumerate}
\item[(T2)] $T$ is differentiable almost surely, with $T'$ having random Lipschitz constant $L.$  The covariance function $C_T$ has continuous partial derivatives $C_T^{(l,m)},$ for $l,m=0,1,$ with $\sup_{u \in \R} C_T^{(1,1)}(u,u) < \infty.$
\item[(T3)] $E(\lVert X \rVert_E^4),$ $E(\lVert \tilde{X}\rVert_E^4 \sup_{u \in \R} |T'(u)|^2),$ and $E(\lVert \tilde{X}\rVert_E^6 L^2)$ are all finite.
\item[(T4)] $\Ist = \I_\mu$ is a bounded interval, and $\gamma(u) = \Cov(X, R(u))$ has bounded second derivative on $\Ist$.  Also, $\inf_{u \in \Ist} \bfp(u) > 0$ and, for some $M > 0,$ $\sup_{u \in \I_x} \fp(x,u) < M$ for almost all $x.$
\end{enumerate}
Assumptions (T2) and (T3) impose conditions on the joint distribution of $(X, T).$  Condition (T2) is a smoothness condition on the optimal transport process $T$, while the moment requirements in (T3) are tightness conditions that allow for the necessary asymptotic Gaussianity to hold; see Lemma~\ref{lma: Qjoint} in the Appendix.  Assumption (T4) involves the regression relationship between $X$ and $\mk{F};$ importantly, it ensures that the conditional mean densities are sufficiently and uniformly separated from the boundary of $\mc{D}$ within the space of distributions with finite second moments.  In the spirit of regression, we will consider the asymptotic behavior of $F\ast_G$ conditional on the observed predictors. Define the covariance kernel 
\begin{equation}
\label{eq: Keigen}
K(u,v) = E\left[C_T(S(u), S(v))\right] = \sum_{j = 1}^\infty \lambda_j \phi_j(u)\phi_j(v), \quad u, v \in \Istc,
\end{equation}
where $\Istc$ is the closure of $\Ist.$  The right-hand side is the Mercer decomposition of $K$ \cp{hsin:15}, so that $\{\phi_j\}_{j = 1}^\infty$ forms an orthonormal set in $L^2(\Istc, \bfp)$ and $\lambda_j$ are positive, nonincreasing in $j$, and satisfy $\sum_{j = 1}^\infty \lambda_j < \infty.$  Since $K$ can be associated with a linear integral operator on $L^2(\Istc, \bfp),$ we will refer to $\lambda_j$ as the eigenvalues of $K.$ 

\begin{Theorem}
\label{thm: global}
Suppose assumptions (T1)--(T4) hold.  Then
$$
F\ast_G | X_1,\ldots,X_n \overset{D}{\rightarrow} \sum_{j = 1}^\infty \lambda_j \omega_j \quad \text{almost surely,}
$$
where $\omega_j$ are i.i.d.\ $\chi^2_p$ random variables and $\lambda_j$ are the eigenvalues in \eqref{eq: Keigen}.
\end{Theorem}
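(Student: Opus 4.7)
The plan is to reduce $F\ast_G$ to a conditional quadratic form in a vector-valued empirical process, then invoke a functional central limit theorem followed by a Karhunen--Loève expansion. The unconstrained minimizer of \eqref{eq: Qfit} is the closed-form weighted average $\tilde{Q}(x,t) = n\inv\son s_{in}(x) Q_i(t)$; Lemma~\ref{lma: regular} combined with (T4) shows that this average is itself a quantile function on compact subsets of $\Ist$ with probability tending to one, so $\hQp(x) = \tilde{Q}(x)$ uniformly in $x$ over the empirical predictor support. Since $s_{in}(\xbar) \equiv 1$, $\hbQp = \bar{Q} := n\inv\son Q_i$. Setting $a_i := X_i - \xbar$ and $\hat{\gamma}_n(t) := n\inv\sum_{j=1}^n a_j(Q_j(t) - \bQp(t))$, and using $\sum_j a_j = 0$, a direct expansion yields $\hQi(t) - \hbQp(t) = a_i\T\hSigma\inv \hat{\gamma}_n(t)$ for each $i$. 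Because $n\inv\son a_i a_i\T = \hSigma$, the test statistic collapses into the quadratic form
\begin{equation*}
F\ast_G = n\i01 \hat{\gamma}_n(t)\T\hSigma\inv \hat{\gamma}_n(t)\,\d t.
\end{equation*}

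Conditional on $X_1,\ldots,X_n$, I would next establish a functional CLT $\sqrt{n}\,\hat{\gamma}_n \rsa G$ in $C(\zo; \Rp)$. Under $\mc{H}_0^G$, the transport $S$ equals the identity, so $Q(t) = T(\bQp(t))$, and (T1) gives $\Cov(X, Q(t)) = 0$, making the limit centered. The smoothness in (T2) and the moment bounds in (T3) furnish tightness of the underlying quantile processes --- this is the content of Lemma~\ref{lma: Qjoint} in the Appendix. Tower conditioning combined with (T1) factors the limiting covariance as $\Cov(G(s), G(t)) = \Sigma\, C_T(\bQp(s), \bQp(t))$, and since $\hSigma\to\Sigma$ almost surely, the continuous mapping theorem delivers
\begin{equation*}
F\ast_G \mid X_1,\ldots,X_n \;\overset{D}{\rightarrow}\; \i01 G(t)\T \Sigma\inv G(t)\,\d t \quad \text{a.s.}
\end{equation*}

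To identify this limit with $\sum_j \lambda_j \omega_j$, write $G = \Sigma^{1/2}\tilde{G}$, where the components $\tilde{G}_1,\ldots,\tilde{G}_p$ are i.i.d.\ scalar Gaussian processes on $\zo$ with covariance $C_T(\bQp(s), \bQp(t))$. The change of variables $u = \bQp(t)$, justified by $\inf_{\Ist}\bfp > 0$ in (T4), recasts the integral as $\sum_{k=1}^p \int_{\Ist} \hat{G}_k(u)^2 \bfp(u)\,\d u$, where $\hat{G}_k(u) := \tilde{G}_k(\bFp(u))$ has covariance $C_T(u,v)$ on $\Istc$. Under $\mc{H}_0^G$ the kernel $K$ in \eqref{eq: Keigen} reduces to $C_T$, so the Mercer expansion gives $\hat{G}_k(u) = \sum_{j\geq 1}\sqrt{\lambda_j}\,\xi_{jk}\phi_j(u)$ with i.i.d.\ $N(0,1)$ coefficients $\xi_{jk}$. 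Orthonormality of $\{\phi_j\}$ in $L^2(\Istc, \bfp)$ then produces $\sum_{j\geq 1}\lambda_j\omega_j$ with $\omega_j := \sum_{k=1}^p \xi_{jk}^2 \sim \chi_p^2$ i.i.d.

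The main obstacle is the conditional functional CLT: tightness must hold in a topology strong enough that the map $\gamma \mapsto \i01\gamma\T\Sigma\inv\gamma\,\d t$ is continuous, which is precisely why (T2)--(T3) impose Lipschitzness of $T'$ and the coupled moment bounds involving $\lVert\tilde{X}\rVert_E$ and $\sup|T'|$. A secondary subtlety is pinning down a single probability-one event for the sequence $\{X_i\}$ on which $\hSigma$, the empirical covariances driving the Gaussian limit, and the Mercer kernel all converge simultaneously, so that the almost-sure qualifier in the stated conditional distributional limit is properly justified.
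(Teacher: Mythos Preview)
Your proposal is correct and follows essentially the same route as the paper's proof: reduce $F_G\ast$ to a quadratic form in a centered empirical process via Lemma~\ref{lma: regular}, invoke the conditional functional CLT of Lemma~\ref{lma: Qjoint}, apply the continuous mapping theorem together with $\hSigma\to\Sigma$, and finish with the Mercer/Karhunen--Lo\`eve expansion of $K=C_T$ under $\mc{H}_0^G$. The only cosmetic difference is that you parametrize by $t\in\zo$ throughout and perform the change of variables $u=\bQp(t)$ at the end, whereas the paper works directly in $u\in\Ist$ from the outset via $R_i=Q_i\circ\bFp$; the two are equivalent by (T4).
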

While this limiting distribution may appear surprising at first sight given the non-Euclidean setting, they are a result of the fact that central limit theorems can still be derived for data on manifolds (e.g., \ci{bard:13}), including the space of distributions under the Wasserstein metric \cp{pana:16}.  

The limiting distribution obtained in Theorem~\ref{thm: global} depends on unknown parameters, namely the eigenvalues $\lambda_j,$ that must be approximated to formulate a rejection region.  A natural approach would be to estimate the kernel $K$ in \eqref{eq: Keigen} directly, followed by applying a modified Mercer decomposition incorporating the estimated marginal Wasserstein mean.  Thankfully, this complicated approach is not necessary.  Define the quantile conditional covariance kernel $C_Q(s, t) = E\left\{\Cov(Q(s), Q(t) | X)\right\}.$ A simple calculation reveals that $C_Q(s,t) = K(\bQp(s),\bQp(t)),$ so that $\lambda_j$ are also the eigenvalues of $C_Q,$ which can be estimated by
\begin{equation}
\label{eq: CqEst}
\hat{C}_Q(s, t) = \frac{1}{n} \son (Q_i(s) - \hQi(s))(Q_i(t) - \hQi(t)),
\end{equation}
where $\hQi$ are the fitted quantile functions corresponding to densities $\hfi.$ Let $\hat{\lambda}_j$ be the corresponding eigenvalues of $\hat{C}_Q.$  

One can consistently approximate the conditional null distribution of $F\ast_G$ as follows.  For $\alpha \in (0,1)$ and eigenvalue estimates $\hat{\lambda}_j$, let $\hat{b}_\alpha^G$ be the  $1-\alpha$ quantile of $\sum_{j = 1}^{\infty} \hlambda_j\omega_j$, where $\omega_j$ are as in Theorem~\ref{thm: global}.   Computation of this critical value is outlined in Algorithm~\ref{alg: crit} of the Appendix.   The following result on the conditional power $\beta_n^G = \PX(F_G\ast > \hat{b}_\alpha^G)$ follows from Theorem~\ref{thm: global}.  Let $\mk{G}$ denote the collection of distributions on $\R^p \times \mc{D}$ such that model \eqref{eq: Wmodel} holds.

\begin{Corollary}
\label{cor: Gsize}
If $\mc{G} \in \mk{G}$ satisfies $\mc{H}_0^G$ and (T1)--(T4) hold, then $\beta_n^G \rightarrow \alpha$ almost surely.
\end{Corollary}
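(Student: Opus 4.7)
The plan is to combine the conditional distributional convergence of $F_G^*$ from Theorem~\ref{thm: global} with consistency of the eigenvalue estimates $\hat\lambda_j$, and then pass this to convergence of the critical value $\hat b_\alpha^G$ via continuity of the quantile map. Throughout, we work conditionally on the sequence $X_1, X_2, \ldots$ and argue on the almost sure event on which Theorem~\ref{thm: global} applies.

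First, let $T = \sum_{j=1}^\infty \lambda_j \omega_j$ denote the limiting random variable from Theorem~\ref{thm: global} and let $b_\alpha^G$ be its $(1-\alpha)$-quantile. Since $\sum_j \lambda_j < \infty$ and each $\omega_j$ is an independent continuous $\chi^2_p$ variable with a positive mixing weight $\lambda_j$, the law of $T$ is absolutely continuous, so its distribution function is strictly increasing and continuous at $b_\alpha^G$. Under $\mathcal{H}_0^G$, Theorem~\ref{thm: global} yields $F_G^* \mid X_1,\ldots,X_n \overset{D}{\to} T$ almost surely.

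Second, I would establish that $\hat{C}_Q \to C_Q$ almost surely as Hilbert-Schmidt operators on $L^2[0,1]$. Using $C_Q(s,t) = K(\bQp(s),\bQp(t))$ and the definition \eqref{eq: CqEst}, write $\hat C_Q$ as a sample average of outer products of residual quantile curves $Q_i - \hat Q_i$, split into the oracle residual $Q_i - Q_\oplus(X_i)$ (whose average converges by a Hilbert-space SLLN using the moment bounds implied by (A1), (T1), and (T3)) plus a remainder involving $\hat Q_\oplus(X_i) - Q_\oplus(X_i)$. The remainder is controlled by the uniform consistency of the Fr\'echet regression quantile estimator, which follows from the convergence results underlying Theorem~\ref{thm: global} (in particular Lemma~\ref{lma: regular} and the tightness developed for (T2)--(T4)). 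By the Hoffman--Wielandt inequality for Hilbert-Schmidt operators, this yields $\sum_{j\ge 1}(\hat\lambda_j - \lambda_j)^2 \to 0$ almost surely, and in particular $\hat\lambda_j \to \lambda_j$ for each $j$. Combined with trace convergence $\sum_j \hat\lambda_j = \int \hat C_Q(t,t)\,dt \to \int C_Q(t,t)\,dt = \sum_j \lambda_j$, standard dominated-convergence arguments give $\hat T_n := \sum_j \hat\lambda_j \omega_j \overset{D}{\to} T$ along the almost sure sequence of eigenvalues.

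Third, since the distribution of $T$ is continuous at $b_\alpha^G$, this distributional convergence transfers to quantile convergence: $\hat b_\alpha^G \to b_\alpha^G$ almost surely. Finally, combining the two convergences via a Slutsky-type argument for conditional distributions,
\begin{equation*}
\beta_n^G = P_{\mathbb{X}}(F_G^* > \hat b_\alpha^G) \to P(T > b_\alpha^G) = \alpha \quad \text{almost surely.}
\end{equation*}
The main obstacle I anticipate is the Hilbert-Schmidt convergence $\hat C_Q \to C_Q$: the empirical residuals $Q_i - \hat Q_i$ are not i.i.d.\ and both depend on the fitted weights, so one must carefully decouple the contribution of $\hat Q_\oplus(X_i) - Q_\oplus(X_i)$ uniformly over $i$, using the moment controls (T3) and the regularity afforded by (T4) to retain control of the $L^2[0,1]$-norm in the Hilbert-Schmidt bound.
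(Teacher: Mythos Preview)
Your outline is correct, but the paper takes a different and somewhat lighter route.  Rather than proving Hilbert--Schmidt convergence of $\hat{C}_Q$ and invoking Hoffman--Wielandt, the paper bypasses eigenvalue perturbation entirely: it represents $\sum_{j}\lambda_j\omega_j$ as $\int_0^1 \lVert Z(t)\rVert_E^2\,\d t$ for a $p$-dimensional zero-mean Gaussian process $Z$ with covariance $I_p\,C_Q(s,t)$, and similarly for the empirical version with covariance $I_p\,\tilde{C}_Q(s,t)$, where $\tilde{C}_Q$ is the pre-projection analogue of $\hat{C}_Q$ built from $\tilde{Q}_\oplus(X_i)$.  It then shows only \emph{pointwise} almost-sure convergence $\tilde{C}_Q(s,t)\to C_Q(s,t)$ together with a uniform Lipschitz bound $|\tilde{C}_Q(s,t)-\tilde{C}_Q(s,t')|=O(|t-t'|)$, and feeds these two ingredients into Lemma~\ref{lma: GP} to obtain weak convergence of the Gaussian processes in $L_p^\infty[0,1]$; the continuous mapping theorem on $z\mapsto\int_0^1\lVert z(t)\rVert_E^2\,\d t$ then gives $\tilde{b}_\alpha^G\to b_\alpha^G$ directly.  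The paper also handles the projection issue explicitly by working with $\tilde{C}_Q$ and noting that $\hat{C}_Q=\tilde{C}_Q$ on the event $A_n$ of Lemma~\ref{lma: regular}, rather than arguing $\hat{C}_Q\to C_Q$ almost surely.  Your operator-theoretic route is valid and more classical, but it asks for Hilbert--Schmidt (plus trace) convergence, which is strictly more than the paper needs; the paper's approach trades that for the Gaussian-process machinery of Lemma~\ref{lma: GP} and avoids your anticipated ``main obstacle'' altogether.
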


In addition to having the correct asymptotic size for any null model, we demonstrate the power performance of the above test under a sequence of contiguous alternatives.  To do so, consider a subclass $\mk{G}\ast \subset \mk{G}$ of Wasserstein regression models, with the marginal distribution of $X$ being fixed and satisfying $E(\normE{X}^4)< \infty,$ for which the criteria below are satisfied.  Recall that $\gamma(u) = \Cov(X, R(u)),$ where $R = T\circ S.$
\begin{enumerate}
\item[(G1)] For all $\mc{G} \in \mk{G}\ast,$ (T1) and (T2) hold, with $\sup_{u \in \R} C_T(u,u)$ and $\sup_{u \in \R} C_T^{(1,1)}(u,u)$ uniformly bounded in $\mk{G}\ast.$ In addition, $\sup_{u \in \Ist} \normE{\gamma'(u)}$ and $\sup_{u \in \Ist} \normE{\gamma''(u)}$ are both uniformly bounded in $\mk{G}\ast.$
\item[(G2)] With $L$ being the random Lipschitz constant in (T2), the following uniform moment conditions are satisfied.
\[
\begin{split}
&\limsup_{M \rightarrow \infty} \sup_{\mc{G} \in \mk{G}\ast} E\left[\normE{\tilde{X}}^2\sup_{u \in \R}|T'(u)|\,\, \mathbf{1}\left(\normE{X}^2\sup_{u \in \R} |T'(u)|>M\right)\right] = 0.\\
&\limsup_{M \rightarrow \infty} \sup_{\mc{G} \in \mk{G}\ast} E\left[\normE{\tilde{X}}^3L \,\, \mathbf{1}\left(\normE{\tilde{X}}^3L > M\right)\right] = 0 
\end{split}
\]
\item[(G3)]  There exists $M_1 >1 $ such that
$
\inf_{\mc{G} \in \mk{G}\ast} \inf_{u \in \Ist} \bfp(u) \geq M_1\inv. 
$
\item[(G4)]  There exists $M_2 < \infty$ such that
$
\sup_{\mc{G} \in \mk{G}\ast} \sup_{x \in \mk{S}_X} \sup_{u \in \I_x} \fp(x,u) \leq M_2.
$
\end{enumerate}
\begin{Theorem}
\label{thm: Gpower}
Let $\mk{G}\ast$ satisfy (G1)--(G4), and $a_n$ be a sequence such that $a_n \rightarrow 0$ and $\sqrt{n}a_n \rightarrow \infty.$  Consider a sequence of alternative global hypotheses
$$
\mc{H}_{A,n}^G: \mc{G} \in \mk{G}_{A,n}, \,\,\mk{G}_{A,n} = \left\{\mc{G} \in \mk{G}\ast: E[\dw^2(\fp(X), \bfp)] \geq a_n^2\right\}.
$$
Then the worst case power converges strongly and uniformly to 1, that is, for any $\epsilon > 0,$
$$
\inf_{\mc{G} \in \mk{G}_{A,n}} P\left(\inf_{m \geq n} \beta_m^G \geq 1-\epsilon \right) \rightarrow 1.
$$
\end{Theorem}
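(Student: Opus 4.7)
The plan is to show that $F\ast_G$ grows at rate $na_n^2$ while the critical value $\hat{b}_\alpha^G$ remains bounded, both uniformly over the alternative class, so that $\beta_n^G \to 1$ uniformly. The starting point is the reverse triangle inequality in $L^2\zo,$
\[
\|\hQi - \hbQp\|_{L^2} \geq \|\Qp(X_i) - \bQp\|_{L^2} - \|\hQi - \Qp(X_i)\|_{L^2} - \|\hbQp - \bQp\|_{L^2},
\]
squared with $(a-b-c)^2 \geq \tfrac{1}{2}a^2 - 2(b^2+c^2)$ and summed over $i$ to give
\[
F\ast_G \geq \tfrac{1}{2}\son \dw^2(\fp(X_i), \bfp) - 2\son \|\hQi - \Qp(X_i)\|_{L^2}^2 - 2n\|\hbQp - \bQp\|_{L^2}^2.
\]

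For the leading term, the marginal law of $X$ is fixed across $\mk{G}\ast,$ and $\mc{G} \in \mk{G}_{A,n}$ forces $E[\dw^2(\fp(X),\bfp)] \geq a_n^2,$ so a Chebyshev-type bound---with variance controlled uniformly via (G1) and (G4)---yields $n\inv \son \dw^2(\fp(X_i), \bfp) \geq a_n^2/2$ with probability tending to one, uniformly in $\mk{G}\ast.$  The remainder terms are tight uniformly: since $s_{in}(\xbar) = 1,$ the estimator $\hbQp$ reduces to $\bar Q,$ so $n\|\hbQp - \bQp\|_{L^2}^2 = \Opx(1)$ by the residual covariance bound from (G1); and the same stochastic expansion of $\hQp(x) - \Qp(x)$ used to prove Theorem~\ref{thm: global} gives $\son \|\hQi - \Qp(X_i)\|_{L^2}^2 = \Opx(1),$ with constants depending only on the uniform moment bounds in (G1)--(G2). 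Since $na_n^2 \to \infty,$ this yields $F\ast_G \geq \tfrac{1}{4}na_n^2$ with probability $\to 1$ uniformly.

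For the critical value, the total estimated eigenvalue $\sum_j \hlambda_j = \int_0^1 \hat{C}_Q(t,t)\, \d t$ is controlled by squared quantile residuals, whose expectation is uniformly bounded by (G1). A Markov bound applied to $\sum_j \hlambda_j \omega_j$ (mean $p\sum_j \hlambda_j$) yields $\hat{b}_\alpha^G \leq p\alpha\inv \sum_j \hlambda_j = \Opx(1)$ uniformly, so $\beta_n^G \to 1$ uniformly in $\mk{G}_{A,n}.$  To upgrade to the strong uniform statement with $\inf_{m \geq n},$ the uniform moment conditions in (G2) permit a uniform SLLN via Borel--Cantelli applied to tail probabilities of the signal term, while matching almost-sure bounds on the linearization remainders and on $\sum_j \hlambda_j(m)$ yield $F\ast_G(m) \geq \tfrac{1}{4}m a_m^2$ and $\hat{b}_\alpha^G(m) = O(1)$ for all large $m,$ almost surely and uniformly over $\mk{G}_{A,n}.$

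The principal obstacle is propagating uniformity over $\mk{G}\ast$ through the stochastic linearization of $\hQp(x) - \Qp(x).$  The constants appearing in that expansion depend on moments of $\normE{X},$ on $\sup_u |T'(u)|,$ and on the Lipschitz constant $L,$ as well as on the boundary behavior of $\fp$ near $\partial\I_x$---and assumptions (G1)--(G4) are designed precisely to supply these uniform controls. The chief technical task is to re-audit each residual term in the expansion used for Theorem~\ref{thm: global} and verify that its $\Opx$-bound depends only on these uniform constants rather than on the specific distribution $\mc{G},$ after which the bounds above fit together to give the claimed strong uniform convergence.
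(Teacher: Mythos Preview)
Your approach is viable and reaches the same conclusion, but the decomposition differs from the paper's in a way worth noting. The paper does not use the triangle inequality in $L^2\zo.$ Instead, on the event $A_n = \{\min_i \inf_t \tilde q_\oplus(X_i,t)>0\}$ it writes $F_G\ast$ exactly as $n\int_{\Ist}\hat\rho(u)\T\hSigma\,\hat\rho(u)\,\bfp(u)\,\d u$ with $\hat\rho=\hSigma\inv\tilde\gamma$, then expands $\hat\rho=(\hat\rho-\rho)+\rho$ to get a nonnegative quadratic term, a cross term, and the signal $n\hat W_n$ with $\hat W_n=\int \rho\T\hSigma\,\rho\,\bfp\,\d u.$ The key observation there is that $|\hat W_n-W|\le C\,W\,\lVert\hSigma-\Sigma\rVert_F$, so the only stochastic quantity in the signal comparison depends solely on the fixed $X$-marginal; this is what makes the uniformity over $\mk G\ast$ essentially free for that term. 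Your Chebyshev bound on $n\inv\son d_W^2(\fp(X_i),\bfp)$ achieves the same thing (since $\Var(W_i)\lesssim W^2$ once one notes $W_i=(X_i-\mu)\T[\int\rho\rho\T\bfp]\,(X_i-\mu)$), but the scaling with $W$ has to be extracted by hand rather than appearing automatically.

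Two points you should tighten. First, you do not mention the projection event $A_n$ at all: the identity $\hat Q_i=\tilde Q_\oplus(X_i)$ that underlies your ``stochastic expansion'' only holds on $A_n$, and showing $\PX(A_n^c)\to 0$ uniformly over $\mk G\ast$ is a separate piece (the paper's $r_{2n}$), requiring the uniform bounds on $\pi'$, $\pi''$ supplied by (G1) together with (G3)--(G4). Second, your remainder $\son\lVert\hat Q_i-Q_\oplus(X_i)\rVert_{L^2}^2$ is not simply $O_{\PX}(1)$ under alternatives: the bias contribution $(\hat\Lambda\inv-\Lambda\inv)\Lambda\pi$ carries the signal $\rho$ and can scale with $W$, so the correct statement is that it is $o_{\PX}(nW)$ uniformly, which still suffices for the lower bound but is not the $O_{\PX}(1)$ you claim. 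The paper's algebraic split sidesteps this by absorbing the quadratic remainder as a nonnegative term that can be dropped. Your critical-value bound via Markov on $\sum_j\hat\lambda_j\omega_j$ is exactly what the paper does for its $r_{3n}$, and your sketch of the strong $\inf_{m\ge n}$ upgrade is in the right spirit, though the paper organizes it more explicitly by bounding $\sup_{m\ge n}\PX(\cdot)$ for each remainder.
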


\subsection{Test of Partial Effects}
\label{ss: partial}

Beyond a test for no effect, it is often necessary to test the effect for just a single predictor or a subset of them. With $X = (Y,Z)$ as in Section~\ref{ss: est}, under the partial null hypothesis \mbox{$\mc{H}_0^P: \fp(x) = \fzp(y),$}
$$
E\left(\dw^2(\fp(X), \bfp)\right) = E\left(\dw^2(\fzp(Y), \bfp)\right),
$$ 
motivating the partial Wasserstein $F$-statistic
\begin{equation}
\label{eq: partialF}
F\ast_P = \sum_{i = 1}^n\left[ \dw^2(\hfi, \hbfp) - \dw^2(\hfzi, \hbfp)\right], \quad \hfzi = \hfzp(Y_i),
\end{equation}
corresponding to the numerator of the partial $F$-statistic in the multiple linear regression setting.  

Setting $J\T = [-\Sigzy \Sigyy\inv\,\, I_{p-1}]$ and $\Sigma_{Z|Y} = \Sigzz - \Sigzy\Sigyy\inv\Sigyz,$ define the covariance matrix kernel
\begin{equation}
\label{eq: Kseigen}
\begin{split}
K\ast(u,v) &= \Sigma_{Z|Y}^{-1/2}J\T E\left[(X-\mu)(X-\mu)\T C_T(S(u), S(v))\right]J\Sigma_{Z|Y}^{-1/2} \\
&= \sum_{l = 1}^\infty \tau_l \varphi_l(u)\varphi_l\T(v).
\end{split}
\end{equation}
Here, the functions $\varphi_l$ form an orthonormal set in $\left(L^2(\Istc, \bfp)\right)^{p-q}$ and the positive, nonincreasing eigenvalues $\tau_l$ of $K\ast$ satisfy $\sum_{l = 1}^\infty \tau_l < \infty.$ 
\begin{Theorem}
\label{thm: partial}
Under (T1)--(T4), 
$$
F\ast_P | X_1,\ldots,X_n \overset{D}{\rightarrow} \sum_{l = 1}^\infty \tau_l \xi_l^2 \quad \text{almost surely,}
$$
where $\xi_l$ are independent standard normal random variables.  If, in addition, $E(Z|Y)$ is linear in $Y$ and $\Var(Z|Y) = \Sigma_{Z|Y}$ almost surely, then 
$$
\{\tau_l\}_{l = 1}^\infty = \underbrace{\lambda_1,\ldots,\lambda_1}_{\text{$p-q$}}, \underbrace{\lambda_2,\ldots,\lambda_2}_{\text{$p-q$}},\ldots
$$
so that $\sum_{l = 1}^\infty \tau_l \xi_l^2 \overset{D}{=} \sum_{j = 1}^\infty \lambda_j \omega'_j,$ where $\omega'_j$ are i.i.d.\ $\chi^2_{p-q}$ random variables.
\end{Theorem}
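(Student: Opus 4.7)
The plan is to mirror the strategy of Theorem~\ref{thm: global}: reduce $F\ast_P$ to an integral of quadratic forms in a finite-dimensional stochastic process via the quantile representation, invoke the joint functional CLT from the appendix, and identify the limit through a matrix-valued Mercer expansion.

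First I would use that $s_{in}(\xbar) \equiv 1$ to see $\hbQp = n\inv \son Q_i,$ and similarly $\hat{Q}_{0,\oplus}(\ybar) = \hbQp$ for the reduced fit. A direct calculation gives
\[
\hQi(t) - \hbQp(t) = (X_i - \xbar)\T \hSigma\inv \hat{V}(t), \qquad \hat{V}(t) = n\inv\son (X_j - \xbar) Q_j(t),
\]
with an analogous expression for $\hQzi - \hbQp$ involving only the $Y$-block. Using $\son (X_i - \xbar)(X_i - \xbar)\T = n\hSigma,$ this yields
\[
F\ast_P = n\i01 \left[\hat{V}(t)\T \hSigma\inv \hat{V}(t) - \hat{V}_Y(t)\T \hSigyy\inv \hat{V}_Y(t)\right]\d t,
\]
where $\hat{V}_Y$ denotes the $Y$-subvector of $\hat{V}.$ The classical partial-regression identity from block matrix inversion then rewrites the integrand as $\normE{\hat{W}(t)}^2$ with $\hat{W}(t) = \hSigma_{Z|Y}^{-1/2}\hat{J}\T \hat{V}(t),$ $\hat{J}\T = [-\hSigzy\hSigyy\inv,\, I_{p-q}],$ and $\hSigma_{Z|Y} = \hSigzz - \hSigzy\hSigyy\inv\hSigyz.$

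Next I would apply the joint functional CLT underlying Theorem~\ref{thm: global} to $\sqrt{n}\hat{V}.$ Under $\mc{H}_0^P,$ the model \eqref{eq: Wmodel} forces the $Z$-block of $\gamma(t) := \Sigma\inv \Cov(X, Q(t))$ to vanish; since $J\T \Sigma = [0,\, \Sigma_{Z|Y}],$ this gives $J\T \Cov(X, Q(t)) = \Sigma_{Z|Y}\gamma_Z(t) \equiv 0,$ and no centering of $\hat{V}$ is needed. Combined with uniform-in-$t$ Slutsky replacements of $\hSigma,$ $\hSigyy,$ and $\hSigzy$ by their population counterparts, this shows that, conditional on $X_1,\ldots,X_n,$ $\sqrt{n}\hat{W}$ converges in distribution almost surely to a mean-zero $\R^{p-q}$-valued Gaussian process $W$ on $\zo$ with matrix covariance kernel $K\ast.$ Continuous mapping then yields $F\ast_P \rsa \i01 \normE{W(t)}^2 \d t,$ which by the matrix-valued Mercer expansion in \eqref{eq: Kseigen} equals $\sum_{l = 1}^\infty \tau_l \xi_l^2$ in distribution.

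For the second assertion I would exploit that under $\mc{H}_0^P$ the map $S = \Qp(X)\circ \bFp$ coincides with $Q_{0,\oplus}(Y)\circ \bFp$ and hence depends on $Y$ alone. Conditioning on $Y$ inside $E[(X-\mu)(X-\mu)\T C_T(S(u),S(v))]$ pulls the $C_T$ factor outside, and the added assumptions $E(Z|Y) = \muz + \Sigzy\Sigyy\inv(Y-\muy)$ and $\Var(Z|Y) = \Sigma_{Z|Y}$ almost surely give $E[J\T(X-\mu)(X-\mu)\T J \mid Y] = \Sigma_{Z|Y},$ since $J\T(X-\mu) = (Z - \muz) - \Sigzy \Sigyy\inv (Y-\muy)$ has conditional mean zero and conditional variance $\Sigma_{Z|Y}.$ Hence $K\ast(u,v) = K(u,v)\, I_{p-q},$ whose eigenvalues are the $\lambda_j,$ each with multiplicity $p-q,$ yielding the $\chi^2_{p-q}$ collapse. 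The main obstacle throughout will be securing the functional CLT and Slutsky step uniformly in $t \in \zo$ strongly enough to pass through the integral functional for almost sure conditional convergence in distribution; this is the same hurdle already handled in Theorem~\ref{thm: global} and should transfer once the reduction to $\hat{W}$ above is in place.
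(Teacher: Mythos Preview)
Your proposal is correct and follows essentially the same route as the paper's proof: reduce to a quadratic integral via the quantile representation (on the high-probability event where the isotonic projection is redundant, cf.\ Lemma~\ref{lma: regular}), apply the block-matrix identity to isolate $\hSigma_{Z|Y}^{-1/2}\hat{J}\T\hat{V}$, invoke the functional CLT of Lemma~\ref{lma: Qjoint} with the null centering $\rho_Z \equiv 0$, and for the second part condition on $Y$ using that $S$ is $Y$-measurable under $\mc{H}_0^P$. The only cosmetic differences are that the paper parametrizes by $u \in \Ist$ (via $t = \bFp(u)$) and routes the block-matrix step through an auxiliary matrix $\hat{\Gamma}$ satisfying $\hSigma\inv\hat{\Gamma} = \mathrm{diag}(\hSigyy\inv,0)$ rather than the direct identity $\hSigma\inv - \mathrm{diag}(\hSigyy\inv,0) = \hat{J}\hSigma_{Z|Y}\inv\hat{J}\T$ you use.
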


Similar to the global case, the $\tau_l$ in \eqref{eq: Kseigen} also correspond to the eigenvalues of the kernel $C_Q\ast(s,t) = K\ast(\bQp(s),\bQp(t)).$ Setting $X_{i,c} = X_i - \xbar,$ a natural estimator is
\begin{equation}
\label{eq: CqstarEst} 
\hat{C}\ast_Q(s,t) = \hSigma_{Z|Y}^{-1/2}\hat{J}\T\left\{\frac{1}{n}\son X_{i,c}X_{i,c}\T (Q_i(s) - \hQi(s))(Q_i(t) - \hQi(t))\right\}\hat{J}\hSigma_{Z|Y}^{-1/2},
\end{equation}
where $\hSigma_{Z|Y}$ and $\hat{J}$ are plug-in estimates.  Let $\hat{\tau}_j$ be the corresponding eigenvalue estimates from $\hat{C}\ast_Q(s,t),$
\iffalse{
\begin{Theorem}
\label{thm: Cqstar}
If \red{ 
$
 \int_\Ist E\left[\normE{X - \mu}^4 (T(u) - u)^4\right] \bfp(u) \d u < \infty,
$
}
and (T1)--(T4) hold, then
$
\sup_{j \geq 1} |\hat{\tau}_j - \tau_j|  = \Opx(n^{-1/2})
$
in probability.
\end{Theorem}}\fi
and $\hat{b}_\alpha^P$ be the $1-\alpha$ percentile of $\sum_{l= 1}^{\infty} \hat{\tau}_l \xi_l^2,$ with $\xi_l$ as in Theorem~\ref{thm: partial}.  If the additional assumptions in the second part of Thereom~\ref{thm: partial} hold, then let $\hat{b}_\alpha^P$ be the $1-\alpha$ quantile of $\sum_{j = 1}^{\infty} \hlambda_j\omega'_j$.  Computation of these critical values can be done in the same was as the global case.
We have the following size and power results for the partial Wasserstein $F$-test, with $\beta_{n}^P = \PX(F_p\ast > \hat{b}_\alpha^P)$ denoting the conditional power as a function of the underlying model $\mc{G}.$  
\begin{Corollary}
\label{cor: Psize}
If $\mc{G} \in \mk{G}$ satisfies $\mc{H}_0^P$ and (T1)--(T4) hold, then $\beta_n^P \rightarrow \alpha$ almost surely.
\end{Corollary}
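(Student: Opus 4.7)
The plan is to mirror the strategy behind Corollary~\ref{cor: Gsize}, combining the conditional limiting null distribution from Theorem~\ref{thm: partial} with a consistency result for the estimated critical value $\hat b_\alpha^P$.  Writing the limit distribution as $T_\infty := \sum_{l = 1}^\infty \tau_l \xi_l^2$ and its data-driven counterpart as $\hat T_n := \sum_{l = 1}^\infty \hat{\tau}_l \xi_l^2$, it suffices to show that, almost surely, the cdf of $T_\infty$ is continuous at its $1-\alpha$ quantile $b_\alpha^P$ and that $\hat T_n \rsa T_\infty$ conditional on $X_1,\ldots,X_n$.  Together with the conditional convergence $F\ast_P \rsa T_\infty$ from Theorem~\ref{thm: partial} and the usual equivalence between weak convergence and convergence of quantiles at continuity points, these give $\hat b_\alpha^P \to b_\alpha^P$ and hence $\beta_n^P \to \alpha$ almost surely.

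First, I would prove that $\hat C_Q\ast \to C_Q\ast$ as integral operators on $(L^2\zo)^{p-q}$, almost surely.  The kernel $\hat C_Q\ast$ in \eqref{eq: CqstarEst} is a sandwich of the empirical mixed-moment kernel
\[
H_n(s,t) = \frac{1}{n}\son X_{i,c}X_{i,c}\T (Q_i(s) - \hQi(s))(Q_i(t) - \hQi(t))
\]
by $\hSigma_{Z|Y}^{-1/2}\hat J\T$ and its transpose.  The consistency of $\hSigma_{Z|Y}$ and $\hat J$ is elementary given $E(\normE{X}^4)<\infty$ from (T3); for $H_n$, the decomposition $Q_i - \hQi = (Q_i - \Qp(X_i)) + (\Qp(X_i) - \hQi)$ reduces the problem to a strong law for the oracle kernel built from the true residual transport, plus a plug-in remainder controlled by the uniform convergence of the Fr\'echet regression estimator established in the proof of Theorem~\ref{thm: partial}.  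The smoothness and moment conditions (T2)--(T4) ensure the Hilbert--Schmidt norm of the remainder is $\opx(1)$.

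Once $\hat C_Q\ast \to C_Q\ast$ in Hilbert--Schmidt norm, Weyl's inequality for compact self-adjoint operators yields $\sup_{l \geq 1}|\hat{\tau}_l - \tau_l| \to 0$ almost surely, together with $\sum_l \hat{\tau}_l \to \sum_l \tau_l$ via the trace-class control $\sum_l \tau_l < \infty$ from \eqref{eq: Kseigen}.  These two facts force the characteristic functions of $\hat T_n$ and $T_\infty$ to coincide in the limit (bound the tail sums $\sum_{l>N}\hat\tau_l \xi_l^2$ and $\sum_{l>N}\tau_l\xi_l^2$ in $L^1$, apply a diagonal argument in $N$), yielding the desired conditional weak convergence $\hat T_n \rsa T_\infty$.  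The cdf of $T_\infty$ is continuous everywhere, since $T_\infty$ is a positively weighted infinite convolution of independent $\chi^2_1$ laws with at least one strictly positive weight (nondegeneracy of $K\ast$ follows from (T1) and the positive definiteness of $\Sigma$, so that $\tau_1 > 0$), and hence absolutely continuous.  The second case of Theorem~\ref{thm: partial} (linear $E(Z|Y)$ with $\Var(Z|Y) = \Sigma_{Z|Y}$) is handled identically by instead invoking consistency of the kernel estimator $\hat C_Q$ in \eqref{eq: CqEst} already used in Corollary~\ref{cor: Gsize}, with eigenvalues $\hlambda_j \to \lambda_j$ in place of $\hat{\tau}_l \to \tau_l$.

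The main obstacle will be the uniform operator-norm consistency of $H_n$, because the kernel couples the predictor quadratic $X_{i,c}X_{i,c}\T$ with the random residual quantile processes $Q_i - \hQi$.  Controlling the plug-in bias $\hQi - \Qp(X_i)$ uniformly in $i$ and in $t \in \zo$ requires the moment conditions (T3) together with the regularity (T2) and (T4), in order to promote the pointwise $\sqrt{n}$ rates underlying Theorem~\ref{thm: partial} into $L^2$-kernel convergence of what is effectively a fourth-order empirical object.  Once this is in hand, the remainder of the argument is a direct transcription of the critical-value consistency step used to deduce Corollary~\ref{cor: Gsize} from Theorem~\ref{thm: global}.
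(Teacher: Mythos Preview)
Your proposal is correct and would work, but it departs from the paper's route. The paper simply says the proof of Corollary~\ref{cor: Psize} is analogous to that of Corollary~\ref{cor: Gsize}, and the latter does \emph{not} go through eigenvalue perturbation at all. Instead, it replaces $\hat C_Q$ by the unconstrained version $\tilde C_Q$ (which equals $\hat C_Q$ on the high-probability event $A_n$ of Lemma~\ref{lma: regular}), shows pointwise almost-sure convergence $\tilde C_Q(s,t)\to C_Q(s,t)$ together with the Lipschitz bound $|\tilde C_Q(s,t)-\tilde C_Q(s,t')|=O(|t-t'|)$, and then invokes Lemma~\ref{lma: GP} to obtain weak convergence of the associated Gaussian processes $Z_n\rsa Z$ in $L^\infty_p\zo$. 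The identities $\sum_j \tilde\lambda_j\omega_j \overset{D}= \i01 \normE{Z_n(t)}^2\d t$ and $\sum_j \lambda_j\omega_j \overset{D}= \i01 \normE{Z(t)}^2\d t$ together with the continuous mapping theorem then give $\tilde b_\alpha^G\to b_\alpha^G$ directly, without ever isolating individual eigenvalues. The partial-test analogue would run the same argument with the $(p-q)$-dimensional matrix kernel $\tilde C_Q\ast$ in place of $\tilde C_Q$, using the multivariate form of Lemma~\ref{lma: GP}.

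Your route---Hilbert--Schmidt convergence of $\hat C_Q\ast$, Weyl's inequality for $\sup_l|\hat\tau_l-\tau_l|$, separate control of the trace $\sum_l\hat\tau_l$, and a characteristic-function/tail-truncation argument for $\hat T_n\rsa T_\infty$---is self-contained and avoids the Gaussian-process machinery of Lemma~\ref{lma: GP}, which is attractive if one prefers operator-theoretic tools. The cost is that you must separately establish trace convergence (Hilbert--Schmidt convergence alone does not give it; you implicitly need uniform convergence of the diagonal $\hat C_Q\ast(t,t)$, which does follow from the same pointwise-plus-Lipschitz estimates), and you must supply the absolute-continuity argument for $T_\infty$. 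The paper's approach sidesteps both issues: the continuous mapping theorem applied to $Z_n\rsa Z$ delivers distributional convergence of the quadratic functional in one step, and continuity of the limit cdf is then immediate from the representation $\i01\normE{Z(t)}^2\d t$. Either way, the ingredients you identify for controlling $H_n$ (strong law for the oracle residual kernel, plug-in remainder via Lemma~\ref{lma: regular}) are exactly what is needed.
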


\begin{Theorem}
\label{thm: Ppower}
Let $\mk{G}\ast$ satisfy (G1)--(G4), and $a_n$ be as in Theorem~\ref{thm: Gpower}.  Consider a sequence of alternative partial hypotheses
$$
\mc{H}_{A,n}^P: \mc{G} \in \mk{G}_{A,n}', \,\,\mk{G}_{A,n}' = \left\{\mc{G} \in \mk{G}\ast: E[\dw^2(\fp(X), \fzp(Y))] \geq a_n^2\right\}.
$$
Then the worst case power converges strongly and uniformly to 1, that is, for any $\epsilon > 0,$
$$
\inf_{\mc{G} \in \mk{G}'_{A,n}} P\left(\inf_{m \geq n} \beta_m^P \geq 1-\epsilon \right) \rightarrow 1.
$$
\end{Theorem}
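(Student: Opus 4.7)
The plan is to parallel the proof of Theorem~\ref{thm: Gpower}, with the role of $E[\dw^2(\fp(X),\bfp)]$ played by the partial effect size $\Delta^2 := E[\dw^2(\fp(X),\fzp(Y))]$. Throughout I work on the event where the unconstrained quantile minimizers used to form $\hQp(X_i)$ and $\hat{Q}_{0,\oplus}(Y_i)$ are monotone, which by the uniform version of Lemma~\ref{lma: regular} (valid under (G3)--(G4)) holds with $\PX$-probability tending to $1$ uniformly on $\mk{G}\ast$. Since $s_{in}(\xbar) = 1$, the empirical marginal Wasserstein mean has quantile $\hbQp(t) = n\inv\son Q_i(t) =: \bar Q(t)$, and the identities
\[
\hQp(X_i,t) - \bar Q(t) = \hat{\gamma}_Q(t)\T \hSigma\inv(X_i - \xbar), \quad \hat{\gamma}_Q(t) := n\inv\son Q_i(t)(X_i - \xbar),
\]
together with an analogous expression for $\hat{Q}_{0,\oplus}(Y_i,t) - \bar Q(t)$ in terms of $\hSigyy\inv$ and the sub-vector $\hat{\gamma}_{Q,Y}$, reduce~\eqref{eq: partialF} via a Schur-complement manipulation to
\[
F_P\ast = n\int_0^1 \hat{\gamma}_Q(t)\T \hat{J}\, \hat{\Sigma}_{Z|Y}\inv \hat{J}\T \hat{\gamma}_Q(t)\,\d t,
\]
the sample analogue of the trace defining $K\ast$ in \eqref{eq: Kseigen}.

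Turning to the population signal, one has $\gamma_Q(t) := \Cov(X,Q(t)) = \Cov(X,\Qp(X,t))$ from $Q(t) = T(\Qp(X,t))$ together with $E(T(u)|X) = u$, and the model-implied representation $\Qp(x,t) = E(Q(t)) + \gamma_Q(t)\T\Sigma\inv(x-\mu)$, combined with the classical omitted-variable Schur identity, gives
\[
\int_0^1 \gamma_Q(t)\T J\Sigma_{Z|Y}\inv J\T \gamma_Q(t)\,\d t = \Delta^2,
\]
which is $\geq a_n^2$ on $\mk{G}_{A,n}'$. Combining this with uniform rate-$n^{-1/2}$ approximations of $\hat{\gamma}_Q$, $\hSigma$, and $\hSigyy$ to their population analogues---obtained from the uniform integrability condition (G2) and the uniform bounds in (G1), (G3), (G4) via standard empirical-process inequalities---yields $F_P\ast \geq n\,a_n^2(1 + \opx(1))$ uniformly over $\mk{G}_{A,n}'$.

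For the critical value, the trace $\sum_l \tau_l = \int_\Istc \mathrm{tr}\,K\ast(u,u)\bfp(u)\,\d u$ is uniformly bounded on $\mk{G}\ast$ by (G1) (together with the fact that $\Sigma$, and hence $\Sigma_{Z|Y}\inv$, is fixed across $\mk{G}\ast$), and the uniform version of the consistency of $\hat{C}_Q\ast$ from~\eqref{eq: CqstarEst} to $C_Q\ast$---proved along the same lines as the global eigenvalue convergence underlying Theorem~\ref{thm: Gpower}---together with Weyl's inequality, gives $\sup_l |\hat{\tau}_l - \tau_l| = \opx(1)$ uniformly. Consequently $\sum_l \hat{\tau}_l \xi_l^2$ is uniformly tight and $\hat{b}_\alpha^P = \Opx(1)$ uniformly on $\mk{G}\ast$. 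Since $na_n^2 \rightarrow \infty$, this delivers $\beta_n^P \rightarrow 1$ in $\PX$-probability uniformly on $\mk{G}_{A,n}'$; strengthening to the strong uniform statement $\inf_{m \geq n}\beta_m^P \rightarrow 1$ follows via Borel--Cantelli applied to the same maximal deviation inequalities that support Theorem~\ref{thm: Gpower}.

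The main obstacle will be the uniform operator-norm convergence of $\hat{C}_Q\ast$ to $C_Q\ast$ across all $\mc{G} \in \mk{G}\ast$. Because \eqref{eq: CqstarEst} contains $\hat{\Sigma}_{Z|Y}\inv$, the eigenvalue estimator inherits the instability of a block inverse, and one must obtain Weyl-type uniform spectral bounds---rather than merely pointwise consistency at a fixed $\mc{G}$---before the quantile control for $\hat{b}_\alpha^P$ closes. Once this uniform spectral control is available, the remaining steps proceed along the template already established in the global case.
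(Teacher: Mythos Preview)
Your proposal is correct and follows the same template the paper adopts (the paper's own proof simply says ``similar to Theorem~\ref{thm: Gpower}''). Your Schur-complement reduction of $F_P\ast$ to $n\int_0^1 \hat{\gamma}_Q\T\hat{J}\,\hSigma_{Z|Y}\inv\hat{J}\T\hat{\gamma}_Q\,\d t$ is exactly the representation $n\int_\Ist \hat{\rho}_Z\T\hSigma_{Z|Y}\hat{\rho}_Z\,\bfp\,\d u$ derived in the proof of Theorem~\ref{thm: partial}, and the identification of the population centering with $E[\dw^2(\fp(X),\fzp(Y))]$ is the right partial analog of $W$ in \eqref{eq: Vdef}.

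Two small corrections. First, you overstate the ``main obstacle'': the paper's $r_{3n}$ argument in Lemma~\ref{lma: rj} never needs Weyl-type spectral convergence. It bounds $\hat{b}_\alpha^G$ via Markov's inequality and the trace identity $\sum_j\hlambda_j=\int_0^1\hat{C}_Q(t,t)\,\d t$; the same works here with $\sum_l\hat{\tau}_l=\int_0^1\mathrm{tr}\,\hat{C}_Q\ast(t,t)\,\d t$, so you only need a uniform bound on this trace (which follows from (G1) and the fixed distribution of $X$), not operator-norm convergence of $\hat{C}_Q\ast$. Second, the closing appeal to Borel--Cantelli does not yield uniformity over $\mk{G}_{A,n}'$; instead, bound $P\big(\sup_{m\geq n}\PX(\cdot)\geq\epsilon\big)$ directly, exactly as in the $r_{jn}$ decomposition of Lemma~\ref{lma: rj}, now with the additional event $A_{m,0}$ from \eqref{eq: An0} joined to $A_m$.
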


\subsection{Alternative Testing Approximations}
\label{ss: testAlt}

As an alternative to estimating the eigenvalues in the limiting distributions of $F\ast_G$ and $F\ast_P,$ Satterthwaite's approximation \cp{satt:41,shen:04} can elso be employed.  Using the global test as an example, we approximate the null distribution of the test statistic by $a\chi^2_m$, where $a, m > 0$ are scalars chosen to satisfy the moment matching conditions
$
am = p\sum_{j = 1}^\infty \lambda_j,$ $a^2m = p \sum_{j = 1}^\infty \lambda_j^2.
$
Using this approximation, one does not need to estimate the individual eigenvalues $\lambda_j$, given the equalities \cp{hsin:15}
$$
\sum_{j = 1}^\infty \lambda_j = \int_0^1 C_Q(t, t)\d t, \quad \sum_{j = 1}^\infty \lambda_j^2 = \int_0^1\int_0^1 C_Q^2(s, t)\d s \d t.
$$
Hence, one can compute the approximate values $a$ and $m$ using the corresponding norms of the estimate $\hat{C}_Q.$  This alternative approach is outlined in Algorithm~\ref{alg: critS} of the Appendix.

Finally, as the limiting distributions in Theorems~\ref{thm: global} and \ref{thm: partial} are the result of underlying central limit theorems, one may employ a bootstrap approach to testing these hypotheses.  Since the inference is conditional on the observed predictors $X_i,$ a natural approach is to perform a residual transport bootstrap.  Using the global test as an example, let $\hat{T}_{i,0} = Q_i \circ \hFp\ast$ be the approximate versions of the residual transports $T_i$ under $\mc{H}_0^G.$  Obtain $B$ independent bootstrap samples $\{\tilde{T}_{i}^b\}_{i = 1}^n,$ $b = 1,\ldots B$, by sampling with replacement from the $\hat{T}_{i,0},$ and form the bootstrapped quantile functions $\tilde{Q}_{i}^b = \tilde{T}_i^b \circ \hat{Q}_\oplus\ast.$  Then, compute bootstrap estimates $\hfp^b(x)$ and $\hat{f}_\oplus^{*,b}$ using the data $(X_i,\tilde{Q}_i^b),$ $i = 1,\ldots,n.$  Finally, compute the bootstrap statistics $ \tilde{F}^{*,b}_G = \son \dw^2(\hfp^b(X_i),\hat{f}_\oplus^{*,b}).$  The bootstrap $p$-value then becomes $$ \frac{\# \left\{\tilde{F}^{*,b}_G > F_G\ast\right\} + 1}{B+1}.$$  This global residual bootstrap test is outlined in Algorithm~\ref{alg: critB} of the Appendix.  For the partial test, this residual bootstrap can only be employed if the support of the densities is fixed, since otherwise the null residual transports $\hat{T}_{i,0} = Q_i \circ \hFzp(Y_i)$ will have different supports.

\section{Confidence Bands}
\label{sec: cb}

We now develop methodology for producing a confidence set for $\fp(x),$ where $x$ is considered to be fixed.  In similar settings where one desires to make a confidence statement for a functional parameter, such as nonparametric regression \cp{euba:93:1,clae:03}, mean and covariance estimation in functional data analysis \cp{degr:11,wang:09:2,cao:12}, and the varying coefficient model \cp{fan:08}, one can either build pointwise or simultaneous bands.  In the current setting of Wasserstein regression for density response data, the constraints inherent to the density targets $\fp(x)$ render pointwise confidence bands of little use.

Hence, the approach we take will be motivated by simultaneous confidence bands.  Here, the descriptor ``simultaneous" refers to the argument $u$ of the functional parameter $\fp(x,u),$ and not to the specific regressor value $x$ under consideration.
Let $g$ be a generic functional parameter of interest, where we assume that $g$ is bounded.  Given an estimator $\hat{g},$ a typical approach to formulating a simultaneous confidence band is to show that $b_n(\hat{g}(u) - g(u))/a(u)$ converges weakly to a limiting process (usually Gaussian) in the space of bounded functions under the uniform metric \cp{well:96}, where $a>0$ is a scaling function and $b_n\inv$ is the rate of convergence.  By an application of the continuous mapping theorem, one can then obtain a confidence band for $g$ of the form
$$
\{g\ast: \hat{g}(u) - ca(u)b_n\inv \leq g\ast(u) \leq \hat{g}(u) + ca(u)b_n\inv\,\, \textrm{for all }u\}.
$$
This band corresponds to all functions that are almost everywhere between the lower and upper bounds, and is the closest one can get to a confidence interval in function space.  This partial ordering is induced explicitly by the uniform metric, and indicates why simultaneous confidence bands are so useful for functional parameters, in that one can visualize the entire set graphically.  We will explore two different approaches to formulating simultaneous confidence bands.  The first arises naturally from the Wasserstein geometry, and provides a distributional band for either $\Qp(x)$ or $\Fp(x)$, but not the density parameter.  In the second approach, we strengthen the convergence results and utilize the delta method to construct a simultaneous confidence band for $\fp(x).$

\subsection{Intrinsic Wasserstein-$\infty$ Bands}
\label{ss: IWB}
 
The first method is directly related to the geometry imposed by the Wasserstein-$\infty$ metric; see \eqref{eq: wassInf}. Specifically, if $\hat{V}_x = \hQp(x) \circ \Fp(x)$ is the optimal transport from the target to the estimate, then
$$
d_\infty(\fp(x), \hfp(x)) = \fp(x)\textrm{-}\esssup_u |\hat{V}_x(u) - u| = \sup_{u \in \I_x}|\hat{V}_x(u) - u|.
$$
It is then natural to establish weak convergence (denoted by $\rsa$) of the process $\hat{V}_x(u)$ within the space of bounded functions on $\I_x,$ denoted $L^\infty(\I_x).$  Define $\Lambda = E(\tilde{X}\tilde{X}\T)$ and the covariance kernel
\begin{equation}
\label{eq: tK}
\tilde{K}(u,v) = \Lambda\inv E\left[\tilde{X}\tilde{X}\T C_T(S(u), S(v))\right] \Lambda\inv, \quad u,v \in \Istc.
\end{equation}
\begin{Theorem}
\label{thm: fittedVal}
Suppose that (T1)--(T4) hold.  Then there exists a zero-mean Gaussian process $\mc{M}_x$ on $L^\infty(\I_x)$ such that
$$
\sqrt{n}(\hat{V}_x - \id) | X_1,\ldots,X_n \rsa \mc{M}_x \quad \text{almost surely.}
$$
With $u_x = \bQp\circ \Fp(x,u)$ for any $u \in \I_x,$ the covariance of $\mc{M}_x$ is
\begin{equation}
\label{eq: CxCov}
\mc{C}_x(u,v)  = \tilde{x}\T \tilde{K}\left(u_x, v_x \right) \tilde{x}.
\end{equation}
\end{Theorem}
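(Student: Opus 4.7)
The plan is to reparametrize $\hat{V}_x$ as a process indexed by the quantile level and then reduce to a functional CLT for the quantile process $\sqrt{n}(\hQp(x,\cdot) - \Qp(x,\cdot))$ in $L^\infty[0,1]$, building on Lemma~\ref{lma: Qjoint} in the Appendix. Since $\Qp(x, \Fp(x, u)) = u$, one has the identity
\[
\sqrt{n}(\hat{V}_x(u) - u) = \sqrt{n}\bigl[\hQp(x, \Fp(x, u)) - \Qp(x, \Fp(x, u))\bigr], \quad u \in \I_x,
\]
and by (T4) $\Fp(x,\cdot)$ is a bi-Lipschitz bijection of the compact interval $\I_x$ onto $[0,1]$ ($\fp(x,\cdot)$ is bounded above by $M$ and, being a density on $\I_x$, bounded below away from zero). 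Hence weak convergence in $L^\infty(\I_x)$ of $\sqrt{n}(\hat{V}_x - \id)$ reduces to weak convergence in $L^\infty[0,1]$ of $\sqrt{n}(\hQp(x,\cdot) - \Qp(x,\cdot))$, the change of variable producing the argument $u_x = \bQp\circ \Fp(x,u)$ in the limiting covariance.

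The crucial algebraic step exploits the structure of the Fr\'echet regression model. Writing $s(X,x) = \tilde{X}^\top \Lambda^{-1}\tilde{x}$ and using $\Qp(x, t) = E[s(X, x) Q(t)]$, one finds that $\Qp(x, t) = \bQp(t) + (x-\mu)^\top \beta(t)$ is linear in $x$, with $\beta(t) = \Sigma\inv\Cov(X, Q(t))$. The empirical weights satisfy the exact identities $\frac{1}{n}\son s_{in}(x) = 1$ and $\frac{1}{n}\son s_{in}(x)(X_i - \mu) = x - \mu$, and therefore $\frac{1}{n}\son s_{in}(x) \Qp(X_i, t) = \Qp(x, t)$ for every $x$ and $t$. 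Invoking Lemma~\ref{lma: regular} to identify $\hQp(x)$ with the unconstrained weighted mean $\frac{1}{n}\son s_{in}(x) Q_i$ on an event of conditional probability tending to one, and substituting $t = \Fp(x,u)$ together with $\Qp(X_i,\Fp(x,u)) = S_i(u_x)$ and $Q_i(\Fp(x,u)) = T_i(S_i(u_x)) = R_i(u_x)$, one arrives at
\[
\sqrt{n}(\hat{V}_x(u) - u) = \frac{1}{\sqrt{n}}\son s_{in}(x)\bigl[R_i(u_x) - S_i(u_x)\bigr].
\]

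Conditional on $X_1,\ldots,X_n$, the summands are independent with conditional mean zero (since $E(T_i \mid X_i) = \id$), and their conditional covariance kernel $\frac{1}{n}\son s_{in}(x)^2 C_T(S_i(u_x), S_i(v_x))$ converges almost surely by the strong law to $E[s(X,x)^2 C_T(S(u_x), S(v_x))] = \tilde{x}^\top \tilde{K}(u_x,v_x)\tilde{x} = \mc{C}_x(u,v)$, after replacing $s_{in}(x)$ by $s(X_i,x)$ at uniform cost $\opx(n^{-1/2})$ via (T3) and the $\sqrt{n}$-consistency of $\bar{X}$ and $\hSigma$; the last equality uses $s(X,x) = \tilde{X}^\top \Lambda^{-1}\tilde{x}$ and the definition \eqref{eq: tK}. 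Finite-dimensional conditional convergence then follows from the Lindeberg--Feller conditional CLT. The main obstacle is tightness of the process in $L^\infty(\I_x)$ almost surely along the design: one controls the increments $T_i(S_i(u_x)) - T_i(S_i(v_x)) - [S_i(u_x) - S_i(v_x)]$ using the random Lipschitz bound on $T_i'$ from (T2) together with the $C^2$ smoothness of $\gamma$ from (T4) (which governs the smoothness of $S_i$), producing an $L^2$ modulus estimate that, in combination with the moment conditions in (T3) and a Borel--Cantelli step to upgrade LLN-in-probability statements to almost-sure conditioning, yields stochastic equicontinuity and hence tightness. The limit $\mc{M}_x$ is then Gaussian with the covariance $\mc{C}_x$ computed above.
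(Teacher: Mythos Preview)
Your proposal is correct and follows essentially the same route as the paper: the identity $\sqrt{n}(\hat{V}_x(u)-u)=\frac{1}{\sqrt n}\sum_i s_{in}(x)[R_i(u_x)-S_i(u_x)]$ is exactly the paper's expression $\sqrt{n}\,\tilde{x}^\top(\tilde{\pi}-\pi)(u_x)$ written out, since $\frac{1}{n}\sum_i s_{in}(x)S_i(u)=\tilde{x}^\top\pi(u)$ and $\frac{1}{n}\sum_i s_{in}(x)R_i(u)=\tilde{x}^\top\tilde{\pi}(u)$. The only cosmetic difference is that the paper packages your Lindeberg--Feller plus tightness step by citing Lemma~\ref{lma: Qjoint} for the weak convergence of $\sqrt{n}(\tilde{\pi}-\pi)$ in $L^\infty_{p+1}(\Istc)$ and then applying the continuous mapping $h\mapsto \tilde{x}^\top h\circ\bQp\circ\Fp(x)$, rather than re-deriving the functional CLT for the scalar process directly.
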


As in the testing procedures, estimation of the covariance kernel $\tilde{K}$ is simplified by moving to quantile functions.  Set $D_Q(s,t) = \tilde{K}(\bQp(s),\bQp(t)),$ with estimate
\begin{equation}
\label{eq: DqEst}
\hat{D}_Q(s,t) = \hLambda\inv\left\{\frac{1}{n}\son \tilde{X}_i\tilde{X}_i\T(Q_i(s) - \hat{Q}_i(s))(Q_i(t) - \hat{Q}_i(t))\right\}\hLambda\inv, 
\end{equation}
where $\hLambda = n\inv \son \tilde{X}_i \tilde{X}_i\T.$ This leads to
\begin{equation}
\label{eq: CxEst}
\hat{\mc{C}}_x(u,v) = \tilde{x}\T \hat{D}_Q(\hFp(x,u), \hFp(x,v)) \tilde{x}, \quad u,v\in [\hQp(x,0), \hQp(x,1)].
\end{equation}

Let $m_\alpha$ be the $1-\alpha$ quantile of the distribution of
%\begin{equation}
%\label{eq: zeta}
\[
\zeta_x:=\sup_{u \in \I_x} \mc{C}_x(u,u)^{-1/2}\left|\mc{M}_x(u)\right|.
\]
%\end{equation}
Since $m_\alpha$ is unknown, we estimate it as follows. Observe that $\mc{N}_x = \mc{M}_x \circ \Qp(x)$ is a Gaussian process on $L^\infty\zo$ with covariance $\tilde{x}\T D_Q(s,t)\tilde{x}.$  Conditional on the data, let $\hat{\mc{N}}_x$ be a zero-mean Gaussian process with covariance $\tilde{x}\T \hat{D}_Q(s,t)\tilde{x}$. Define
$$
\hat{\zeta}_x = \sup_{t \in \zo} \left[\tilde{x}\T \hat{D}_Q(t,t)\tilde{x}\right]^{-1/2}\left|\hat{\mc{N}}_x(t)\right|,
$$
and set $\hat{m}_\alpha$ as the $1-\alpha$ quantile of $\hat{\zeta}_x.$ Then the Wasserstein-$\infty$ confidence band for $\fp(x)$ is
\begin{equation}
\label{eq: WinfCB}
C_{\alpha,n}(x) = \left\{g \in \mc{D}: \sup_{u \in \hat{\I}_x} \frac{|T_{g,x}(u) - u|}{\hat{\mc{C}}_x(u, u)^{1/2}} \leq \frac{\hat{m}_\alpha}{\sqrt{n}} \right\}, \quad T_{g,x} = G\inv \circ \hFp(x).
\end{equation}
We have the following corollary of Theorem~\ref{thm: fittedVal}.

\begin{Corollary}
\label{cor: CB}
Suppose (T1)--(T4) hold.  If $\inf_{u \in \Ist} C_T(u,u) > 0,$ then
$$
\PX\left(\fp(x) \in C_{\alpha, n}(x)\right) \rightarrow 1- \alpha \quad \text{almost surely.}
$$
\end{Corollary}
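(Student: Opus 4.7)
The plan is to reduce the coverage event to a statement about a sup-normalized process of $\hQp(x)-\Qp(x)$ in quantile coordinates, then apply Theorem~\ref{thm: fittedVal} together with consistency of the estimated kernel $\hat D_Q$ and standard Gaussian continuity arguments. First I would perform the change of variables $u = \hQp(x,w)$, $w\in(0,1)$, on the defining inequality of $C_{\alpha,n}(x)$. Since $T_{\fp(x),x}=\Qp(x)\circ\hFp(x)$ and $\hFp(x,u)=w$, the transport displacement becomes $|T_{\fp(x),x}(u)-u|=|\Qp(x,w)-\hQp(x,w)|$, while the normalizer simplifies to $\hat{\mc C}_x(u,u)=\tilde x\T\hat D_Q(w,w)\tilde x$. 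Hence the event $\{\fp(x)\in C_{\alpha,n}(x)\}$ is identical to
\[
\sqrt n\,\sup_{w\in(0,1)}\frac{|\Qp(x,w)-\hQp(x,w)|}{[\tilde x\T\hat D_Q(w,w)\tilde x]^{1/2}}\;\le\;\hat m_\alpha.
\]

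Next, I would translate the conclusion of Theorem~\ref{thm: fittedVal} to the quantile scale. Since $\hat V_x\circ\Qp(x)=\hQp(x)$, composition with the bounded, continuous $\Qp(x)$ yields the conditional weak convergence
\[
\sqrt n(\hQp(x)-\Qp(x))\;\rsa\;\tilde{\mc M}_x:=\mc M_x\circ\Qp(x)
\]
in $L^\infty(0,1)$ almost surely, with covariance $\tilde{\mc C}_x(s,t)=\tilde x\T D_Q(s,t)\tilde x$. I then need two ingredients: (i) uniform consistency $\sup_{w}\lVert \hat D_Q(w,w)-D_Q(w,w)\rVert_E\overset{\PX}{\to}0$ almost surely, which follows from the same moment conditions used for the testing results (via a uniform LLN applied to the summand in \eqref{eq: DqEst} together with the uniform consistency of $\hQi$ already invoked in Theorem~\ref{thm: fittedVal}); and (ii) $\inf_{w\in[0,1]}\tilde x\T D_Q(w,w)\tilde x>0$, which I would deduce from the hypothesis $\inf_{u\in\Ist}C_T(u,u)>0$, positive definiteness of $\Lambda$, and the representation $\tilde x\T D_Q(w,w)\tilde x=E[(\tilde X\T\Lambda^{-1}\tilde x)^2 C_T(S(\bQp(w)),S(\bQp(w)))]$ after noting $S(\bQp(w))=\Qp(X,w)$ and using (T4) to keep the argument inside $\Ist$ up to asymptotically negligible boundary sets.

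With (i) and (ii) in hand, the continuous mapping theorem applied to the functional $\phi\mapsto\sup_w|\phi(w)|/[\tilde x\T\hat D_Q(w,w)\tilde x]^{1/2}$ (which is continuous at trajectories of $\tilde{\mc M}_x$ because the empirical denominator converges uniformly to a denominator bounded away from zero) gives
\[
\sqrt n\,\sup_{w}\frac{|\Qp(x,w)-\hQp(x,w)|}{[\tilde x\T\hat D_Q(w,w)\tilde x]^{1/2}}\;\overset{D}{\to}\;\sup_{w}\frac{|\tilde{\mc M}_x(w)|}{[\tilde x\T D_Q(w,w)\tilde x]^{1/2}}=:\tilde\zeta_x,
\]
and the same change of variables shows $\tilde\zeta_x\overset{D}{=}\zeta_x$. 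To conclude, I would show $\hat m_\alpha\to m_\alpha$ almost surely: conditional on the data, $\hat{\mc N}_x$ is a centered Gaussian process whose covariance kernel $\tilde x\T\hat D_Q(s,t)\tilde x$ converges uniformly to $\tilde x\T D_Q(s,t)\tilde x$, so the standardized suprema converge in distribution to $\zeta_x$; continuity of the law of $\zeta_x$ at its $1-\alpha$ quantile (a standard consequence of Gaussian anti-concentration, since $\zeta_x$ is the sup of a nondegenerate Gaussian process) then upgrades distributional convergence to convergence of the quantile. A final application of Slutsky's theorem yields $\PX(\fp(x)\in C_{\alpha,n}(x))\to 1-\alpha$.

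The main obstacle is ingredient (i): uniform consistency of $\hat D_Q$ on $[0,1]$ requires handling the random quantile residuals $Q_i-\hQi$ uniformly in $s$, including near the boundary, and controlling the product with the unbounded weights $\tilde X_i\tilde X_i\T$ via the moment conditions (T3) and envelope-type arguments paralleling those underpinning Lemma~\ref{lma: Qjoint}. A secondary, but essentially routine, point is verifying continuity of the CDF of $\zeta_x$ at $m_\alpha$; this should follow from Gaussian anti-concentration once the non-degeneracy guaranteed by (ii) is established.
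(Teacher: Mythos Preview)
Your proposal is correct and follows essentially the same route as the paper's proof: reparametrize the coverage event in quantile coordinates, invoke the weak convergence of Theorem~\ref{thm: fittedVal}, establish uniform consistency of $\hat D_Q$ together with a strictly positive lower bound on $\tilde x\T D_Q(w,w)\tilde x$, and deduce convergence of both the standardized sup statistic and the critical value $\hat m_\alpha$. The paper differs only in presentation: it keeps the parametrization $u\in\I_x$ and handles the estimated denominator via the ratio $\hat{\mc C}_x(\hat V_x(u),\hat V_x(u))/\mc C_x(u,u)\to 1$, it makes explicit the device of replacing $\hat D_Q$ by its unprojected analog $\tilde D_Q$ on the event $A_n$ of Lemma~\ref{lma: regular} (a point you gloss over when citing ``uniform consistency of $\hQi$''), and it packages the Gaussian-process convergence needed for $\hat m_\alpha\to m_\alpha$ into Lemma~\ref{lma: GP} rather than invoking kernel convergence directly; otherwise the arguments coincide.
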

An important case arising in practice that is ruled out by the requirement of strictly positive covariance is when the support of $\mk{F}$ is some fixed interval $\I,$ so that the random transport $T$ is necessarily fixed at the boundaries.  In this case, a slight adjustment can be made as outlined in Corollary~\ref{cor: CB_fixed} of the Appendix.

Next, we demonstrate a connection between these simultaneous Wasserstein-$\infty$ bands and the usual partial stochastic ordering of distributions.  
Recall that a cdf $F$ is said to be stochastically greater than another $G$, written $F \succ G$, if $F(u) \leq G(u)$ for all $u.$  For $F_2 \succ F_1,$ define the bracket
$$
\left[ F_1, F_2\right] = \left\{g \in \mc{D} : F_2 \succ G \succ F_1\right\}
$$
consists of all distributions in $\mc{D}$ which lie between $F_1$ and $F_2$ in the stochastic ordering.  From \eqref{eq: WinfCB}, we deduce that the simultaneous confidence band consists of all densities $g \in \mc{D}$ such that
$$
M_{L}(u) \leq G\inv \circ \hFp(x, u) \leq M_U(u), \quad (M_L(u), M_U(u)) = u \pm n^{-1/2}\hat{m}_\alpha\hat{\mc{C}}_x(u,u)^{1/2}.
$$
Define $T_L$ to be the unique projection (in $L^2[0,1]$) of $M_L$ onto the closed and convex set of non-decreasing functions $M$ for which $M(u) \geq M_L(u).$  Similarly, $T_U$ is unique largest nondecreasing function below $M_U.$  Note that $T_L(u) \leq T_{g,x}(u) \leq T_U(u)$ for all $g \in C_{\alpha,n}(x).$  Then the cdf bounds $F_L = \hFp(x) \circ T_L\inv$ and $F_U = \hFp(x)\circ T_U\inv$ represent the Wasserstein-$\infty$ band, i.e.\ $C_{\alpha,n}(x) = \left[F_L, F_U\right].$  Algorithm~\ref{alg: QBand} in the Appendix outlines the steps for computing  $C_{\alpha,n}(x)$ in practice.

\subsection{Wasserstein Density Bands}
\label{ss: WDB}

One drawback of the above simultaneous confidence bands is that they do not readily translate to the space of densities, since, if $F_1 \prec F_2$ in the stochastic ordering, their derivatives need not satisfy $f_1\geq f_2.$  %Thus, while the Wasserstein-$\infty$ bands $C_{\alpha,n}(x)$ can be visualized in density space, they do not lend information about the local properties of the distribution that can only be visualized through a uniform density band.  
Thus, we take a second approach to form a confidence band in density space, based on the direct difference $\hfp(x) - \fp(x)$ rather than the optimal transport map between these distributions. %, as was done in the previous section.

An interesting challenge associated with this approach is that the supports of the target $\fp(x)$ and its estimate $\hfp(x)$ may differ, so that convergence may be ill-behaved near the boundaries.  To resolve this, for any $\delta \in (0,1/2),$ define $\I_x^\delta = [\Qp(x,\delta), \Qp(x, 1-\delta)].$  We consider conditional weak convergence of the process $\sqrt{n}(\hfp(x) - \fp(x))$ in the space $L^\infty(\I_x^\delta).$  For $l,m \in \{0,1\}$ and $\tilde{K}$ as in \eqref{eq: tK}, define $\tilde{K}^{(l,m)} = \frac{\partial^{l+m}}{\partial u^l \partial v^m} \tilde{K}.$ Lastly, set
$$
\mathbb{K}(u,v) = \begin{pmatrix}
 \tilde{K}(u,v) &\left[\bfp(v)\right]\inv \tilde{K}^{(0,1)}(u, v) \\ 
\left[\bfp(u)\right]\inv \tilde{K}^{(1,0)}(u, v) & \left[\bfp(u)\bfp(v)\right]\inv \tilde{K}^{(1,1)}(u, v)
\end{pmatrix}, \,\, u,v \in \Ist.
$$

\begin{Theorem}
\label{thm: densCLT}
Suppose (T1)--(T4) hold, and that ${\bfp}$ is continuously differentiable.  Then, for almost all $x,$ there exists a zero-mean Gaussian process $\mc{F}_x$ on $L^\infty(\I_x^\delta)$ such that
$$
\sqrt{n}\left(\hfp(x) - \fp(x)\right) | X_1,\ldots,X_n \rsa \mc{F}_x \quad \text{almost surely.}
$$
With $u_x = \bQp \circ \Fp(x,u)$ and $c_u\T = (\partial / \partial u \fp(x,u), -\fp^2(x,u))$  for any $u \in \I_x^\delta$, and $\otimes$ denoting the Kronecker product, the covariance of $\mc{F}_x$ is
\begin{equation}
\label{eq: FxCov}
\mc{R}_x(u,v) = c_u\T \left(\tilde{x}\T \otimes \mathbb{K}(u_x,v_x) \otimes \tilde{x}\right) c_v.
\end{equation}
\end{Theorem}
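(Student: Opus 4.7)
The plan is to derive this density-level central limit theorem from Theorem~\ref{thm: fittedVal} via a functional delta-method argument, after strengthening the conditional weak convergence to include derivative information. Writing $\hat{V}_x(u) = \hQp(x, \Fp(x, u))$, the push-forward identity $\hfp(x, \hat{V}_x(u)) \hat{V}'_x(u) = \fp(x, u)$ yields the representation $\hfp(x, u) = \fp(x, \hat{V}_x\inv(u))/\hat{V}'_x(\hat{V}_x\inv(u))$. A first-order Taylor expansion around $\hat{V}_x = \id$ then gives
\begin{equation*}
\sqrt{n}\left(\hfp(x, u) - \fp(x, u)\right) = -\frac{\partial \fp}{\partial u}(x, u)\sqrt{n}(\hat{V}_x(u) - u) - \fp(x, u)\sqrt{n}(\hat{V}'_x(u) - 1) + \opx(1),
\end{equation*}
uniformly over $u \in \I_x^\delta$. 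The uniform remainder control relies on the continuous differentiability of $\bfp$, which propagates to $\fp(x, \cdot)$ via the Fr\'echet regression structure, together with the lower bound on $\fp(x, \cdot)$ that is available on the interior subinterval $\I_x^\delta$.

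The key technical step is to establish joint conditional weak convergence of $(\sqrt{n}(\hat{V}_x - \id), \sqrt{n}(\hat{V}'_x - 1))$ in $L^\infty(\I_x^\delta) \times L^\infty(\I_x^\delta)$, almost surely. In the proof of Theorem~\ref{thm: fittedVal}, $\sqrt{n}(\hat{V}_x(u) - u)$ is linearized as $\tilde{x}\T \Lambda\inv n^{-1/2}\son \tilde{X}_i(R_i(u_x) - S_i(u_x)) + \opx(1)$, using $R_i = T_i \circ S_i$ and the centering $E(R_i(u_x) \mid X_i) = S_i(u_x)$. Since the fitted quantile density $\hqp(x, \cdot) = n\inv \son s_{in}(x) q_i(\cdot)$ is itself a weighted empirical average and $\hat{V}'_x(u) - 1 = \fp(x, u)[\hqp(x, \Fp(x, u)) - \qp(x, \Fp(x, u))]$, differentiating the first linearization under the chain rule produces an analogous i.i.d.\ sum representation for $\sqrt{n}(\hat{V}'_x - 1)$, involving the Jacobian $du_x/du = \fp(x, u)/\bfp(u_x)$ and the derivative identity $R_i'(u_x) = T_i'(S_i(u_x))S_i'(u_x)$. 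Joint finite-dimensional convergence then follows from the multivariate Lindeberg--Feller CLT applied to the stacked summands, while joint tightness on $\I_x^\delta$ reduces to controlling modulus-of-continuity increments of the weighted $q_i$-average; this control is supplied by the Lipschitz derivative condition in (T2) combined with the moment bounds $E(\normE{\tilde{X}}^6 L^2)<\infty$ and $E(\normE{\tilde{X}}^4 \sup_u|T'(u)|^2) < \infty$ in (T3).

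Given joint weak convergence to a limit $(\mc{M}_x, \mc{M}_x^\partial)$, the continuous linear map $(h, h') \mapsto -(\partial \fp/\partial u)(x, \cdot) h - \fp(x, \cdot) h'$ sends it to a zero-mean Gaussian process $\mc{F}_x$ on $L^\infty(\I_x^\delta)$. Its covariance is identified by direct computation: the $(1,1)$ entry of the joint $2\times 2$ covariance kernel for $(\mc{M}_x, \mc{M}_x^\partial)$ reproduces $\tilde{x}\T \tilde{K}(u_x, v_x)\tilde{x}$ from Theorem~\ref{thm: fittedVal}; the $(2,2)$ entry equals $\tilde{x}\T \tilde{K}^{(1,1)}(u_x, v_x)\tilde{x}$ scaled by $\fp(x, u)\fp(x, v)/[\bfp(u_x)\bfp(v_x)]$ via the chain-rule factor; and the off-diagonals involve $\tilde{K}^{(0,1)}$ and $\tilde{K}^{(1,0)}$ scaled by $1/\bfp$ in the appropriate argument. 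Collecting these four blocks into $\mathbb{K}$ and contracting against the coefficient vector $c_u\T = ((\partial \fp/\partial u)(x, u), -\fp^2(x, u))$ recovers precisely $\mc{R}_x(u, v) = c_u\T(\tilde{x}\T \otimes \mathbb{K}(u_x, v_x) \otimes \tilde{x}) c_v$.

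I anticipate the main obstacle to be joint tightness of the derivative process $\sqrt{n}(\hat{V}'_x - 1)$, a strictly stronger requirement than tightness for $\sqrt{n}(\hat{V}_x - \id)$ alone. It hinges on a uniform modulus-of-continuity bound for the weighted empirical process of the random quantile densities $q_i(t) = T_i'(\Qp(X_i, t)) \qp(X_i, t)$, which is precisely what the Lipschitz derivative condition in (T2) together with the moment bounds in (T3) supply; without (T2), tightness would fail. The restriction to the interior $\I_x^\delta$ is also essential: it keeps $\fp(x, \cdot)$ and $\bfp(u_x)^{-1}$ in compact ranges, avoiding the possible decay of these densities near the boundaries of their supports and ensuring that modulus bounds transfer cleanly between the $u$ and $u_x$ parametrizations through the Jacobian $\fp(x, u)/\bfp(u_x)$.
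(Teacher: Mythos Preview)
Your approach is correct and essentially the same as the paper's, differing only in parametrization. The paper works with the pair $(\tilde Q_\oplus(x), \tilde q_\oplus(x))$ rather than $(\hat V_x, \hat V_x')$---the two are related by composition with the fixed diffeomorphism $\Fp(x)$---and obtains their joint conditional weak convergence in $L^\infty[0,1] \times L^\infty(0,1)$ via a reusable lemma (Lemma~\ref{lma: Qjoint}) that proves a functional CLT for $\sqrt{n}\big((\tilde\pi, \tilde\pi') - (\pi,\pi')\big)$ using exactly the Lipschitz and moment ingredients from (T2)--(T3) that you identify as controlling tightness of the derivative process. Rather than a direct Taylor expansion with explicit remainder control, the paper packages the nonlinear step as Hadamard differentiability of the map $\phi(h_1, h_2) = 1/(h_2 \circ h_1^{-1})$ at $(\Qp(x), \qp(x))$ tangentially to continuous perturbations (Lemma~\ref{lma: hadamard}), and then invokes the functional delta method. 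This gives a cleaner separation of concerns---the joint CLT also underlies Theorem~\ref{thm: fittedVal}, and the Hadamard lemma handles the remainder uniformly without ad hoc bounds---whereas your route makes the first-order linearization more transparent and keeps everything in the transport-map coordinates natural to the Wasserstein geometry. Either way the limit process and covariance coincide.
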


We now describe the method for estimating $\mc{R}_x.$  With $\hat{D}_Q$ as in \eqref{eq: DqEst}, let $\hat{D}_Q^{(l,m)} = \frac{\partial^{l+m}}{\partial s^l \partial t^m} \hat{D}_Q.$  Then, define $\mathbb{D}(s,t) = \mathbb{K}(\bQp(s),\bQp(t))$ and its estimate
\begin{equation}
\label{eq: DmatEst}
\hat{\mathbb{D}}(s,t) = \begin{pmatrix}
\hat{D}_Q(s,t) & \hat{D}_Q^{(0,1)}(s,t) \\ \hat{D}_Q^{(1,0)}(s,t) & \hat{D}_Q^{(1,1)}(s,t),
\end{pmatrix} \quad s,t \in [\delta, 1-\delta],
\end{equation}
and set 
$$
\hat{\mathbb{K}}(u,v) = \hat{\mathbb{D}}(\hFp(x,u), \hFp(x,v)), \quad u,v \in \hat{\I}_x^\delta = [\hQp(x,\delta), \hQp(x, 1-\delta)].
$$  
Next, the chain rule implies that $\frac{\partial}{\partial u} \fp(x,u) = -\fp^3(x,u) \frac{\partial}{\partial t} \qp(x, \Fp(x,u)).$  Hence, define $\frac{\partial}{\partial t}\hqp(x,t) = n\inv \son s_{in}(x)q_i'(t)$ for $t \in [\delta, 1-\delta],$ yielding 
$$
\frac{\partial}{\partial u} \hfp(x,u) = -\hfp^3(x,u)\frac{\partial}{\partial t} \hqp(x, \hFp(x,u)), \quad u \in \hat{\I}_x^\delta.
$$  
Finally, for $u,v \in \hat{\I}_x^\delta,$ set 
\begin{equation}
\label{eq: RxEst}
\hat{\mc{R}}_x(u,v) = \hat{c}_u\T\left(\tilde{x}\T \otimes \hat{\mathbb{K}}(u,v) \otimes \tilde{x}\right) \hat{c}_v, \quad \hat{c}_u\T = \left(\frac{\partial}{\partial u} \hfp(x,u), \, -\hfp^2(x,u)\right).
\end{equation}

Let $l_\alpha$ be the unknown $1-\alpha$ quantile of 
%\begin{equation}
%\label{eq: xi}
\[
\xi_x= \sup_{u \in \I_x^\delta} \mc{R}_x(u,u)^{-1/2} |\mc{F}_x(u)|.
\]
%\end{equation}
Similar to the case of the Wasserstein-$\infty$ band, conditional on the data, let $\hat{\mc{J}}_x(t)$ be a zero-mean Gaussian process on $(0,1),$ with covariance
$
\hat{\mc{R}}_x(\hQp(x,s), \hQp(x,t)).
$
Define $\hat{l}_\alpha$ as the the $1-\alpha$ quantile of $$\hat{\xi}_x = \sup_{t \in [\delta, 1-\delta]} \left[\hat{\mc{R}}_x(\hQp(x,t), \hQp(x,t))\right]^{-1/2}\left|\hat{\mc{J}}_x(t)\right|.$$
Then the nearly simultaneous Wasserstein density confidence band is
\begin{equation}
\label{eq: densCBd}
B_{\alpha,n}(x) = \left\{g \in \mc{D}:\, \sup_{u \in \hat{\I}_x^\delta} \frac{|g(u) - \hfp(x,u)|}{\hat{\mc{R}}_x(u,u)^{1/2}} \leq \frac{\hat{l}_\alpha}{\sqrt{n}}\right\}.
\end{equation}
See Algorithm~\ref{alg: densBand} in the Appendix for an implementation of this confidence band.

The final corollary demonstrates almost sure convergence of the coverage rate of the Wasserstein density confidence band.  Since the estimation of the covariance $\mc{R}_x$ is considerably more complex than for the Wasserstein-$\infty$ band, additional smoothness assumptions are required on the random transport map $T.$
\begin{Corollary}
\label{cor: WDB}
Suppose the assumptions of Theorem~\ref{thm: densCLT} hold, and that $\inf_{u \in \I_x^\delta} \mc{R}_x(u,u) > 0.$  Furthermore, assume that $T$ is twice differentiable almost surely, with $T''$ having Lipschitz constant $\tilde{L}$ and satisfying 
$$
E\left(\normE{\tilde{X}}^6\sup_{u \in \R} |T''(u)|^2\right)< \infty, \quad E(\tilde{L})< \infty.
$$  
Then
$$
\PX\left(\fp(x) \in B_{\alpha,n} \right) \rightarrow 1-\alpha \quad \textrm{almost surely}.
$$ 
\end{Corollary}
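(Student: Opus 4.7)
The plan is to derive the coverage statement from Theorem~\ref{thm: densCLT} by a conditional continuous mapping argument, combined with consistent estimation of both the studentizing covariance $\mc{R}_x$ and the quantile $l_\alpha$ of $\xi_x$. Writing the coverage event as
\[
\Psi_n := \sup_{u \in \hat{\I}_x^\delta} \frac{\sqrt{n}\,|\hfp(x,u) - \fp(x,u)|}{\hat{\mc{R}}_x(u,u)^{1/2}} \leq \hat{l}_\alpha,
\]
I would aim to show $\Psi_n \overset{D}{\to} \xi_x$ conditionally on the predictors and $\hat{l}_\alpha \as l_\alpha$, from which the absolute continuity of the law of $\xi_x$ at $l_\alpha$ (Gaussian anti-concentration under $\inf_u \mc{R}_x(u,u)>0$) yields the conclusion.

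The technical core is the uniform consistency
\[
\sup_{u,v \in \overline{\I_x^\delta}} \bigl\lVert \hat{\mathbb{K}}(u,v) - \mathbb{K}(u_x, v_x)\bigr\rVert \as 0, \quad u_x = \bQp \circ \Fp(x,u),
\]
which in quantile coordinates reduces to uniform consistency of $\hat{D}_Q$ and its mixed partial derivatives $\hat{D}_Q^{(l,m)}$ on $[\delta,1-\delta]^2$. Consistency of $\hat{D}_Q$ itself can be argued along the lines used for $\hat{C}_Q$ in the proof of Theorem~\ref{thm: partial}. The derivative case requires the new hypotheses: since $\hat{D}_Q^{(l,m)}$ depends through differentiation on the quantile densities $q_i$ and $\hqi$, the assumption that $T$ is twice differentiable with Lipschitz $T''$, together with the augmented moment bounds on $\normE{\tilde X}^6\sup_u |T''(u)|^2$ and $\tilde L$, is exactly what is needed to control the associated empirical processes uniformly in $s,t \in [\delta, 1-\delta]$. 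The same ingredients yield uniform consistency of $\frac{\partial}{\partial u}\hfp(x,u)$ (and hence of $\hat c_u$) via the chain rule; together with uniform consistency of $\hfp(x,\cdot)$ on $\hat{\I}_x^\delta$ and the resulting Hausdorff convergence $\hat{\I}_x^\delta \to \I_x^\delta$, this delivers $\sup_u |\hat{\mc{R}}_x(u,u) - \mc{R}_x(u,u)| \as 0$. A conditional Slutsky step combined with Theorem~\ref{thm: densCLT} then gives $\Psi_n \overset{D}{\to} \xi_x$ almost surely.

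For the quantile consistency, $\hat{\mc{J}}_x$ is a mean-zero Gaussian process whose conditional covariance kernel converges uniformly on $[\delta,1-\delta]^2$ to that of $\mc{F}_x \circ \Qp(x)$. Standard Gaussian weak convergence (tightness follows from the uniform covariance convergence) yields conditional convergence in distribution of $\hat\xi_x$ to $\xi_x$, and since $\xi_x$ is continuously distributed it follows that $\hat{l}_\alpha \as l_\alpha$. Combining this with the preceding convergence of $\Psi_n$ concludes the proof. The hard part will be the uniform consistency of the mixed-derivative kernel estimates $\hat{D}_Q^{(l,m)}$: translating the smoothness hypotheses on $T$ into empirical process bounds for differentiated sample moments of quantile functions and their derivatives is where the extra assumptions do the real work, while the remaining ingredients become standard once this uniform consistency is in hand.
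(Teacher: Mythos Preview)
Your proposal is correct and follows essentially the same approach the paper uses: the paper's proof of Corollary~\ref{cor: WDB} is explicitly omitted as ``similar to that of Corollary~\ref{cor: CB}, using the additional assumptions,'' and the template there is exactly what you describe---establish uniform consistency of the relevant covariance estimate (here $\hat{\mc{R}}_x$, via $\hat{D}_Q$ and its mixed partials $\hat{D}_Q^{(l,m)}$, which is where the extra smoothness and moment assumptions on $T''$ enter), apply Lemma~\ref{lma: GP} to obtain conditional weak convergence of the approximating Gaussian process and hence consistency of $\hat{l}_\alpha$, and then combine with Theorem~\ref{thm: densCLT} through a Slutsky/continuous-mapping argument and an $\epsilon$--$\eta$ sandwich on the coverage probability. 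Two minor remarks: the consistency of $\hat{C}_Q$ you cite is in the proof of Corollary~\ref{cor: Gsize} rather than Theorem~\ref{thm: partial}, and in the paper's framework the convergence of the estimated quantile is $\hat{l}_\alpha - l_\alpha = o_{P_{\mathbb{X}}}(1)$ (via the event $A_n$ and Lemma~\ref{lma: regular}) rather than almost sure, but this does not affect your argument.
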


\section{Simulations}
\label{sec: sims}

Extensive simulations were conducted to investigate the empirical performance of the $F$-tests and confidence bands developed in Sections~\ref{sec: tests} and \ref{sec: cb}.  Data were generated according to model \eqref{eq: Wmodel} for increasing sample sizes and different random optimal transport processes $T$.  Furthermore, to assess the robustness of our procedures when densities are only indirectly observed through samples, simulations were also conducted using only raw data generated from the random densities. For brevity, results for the indirectly observed case are reported in the Appendix.

We first describe the simulation settings for the Wasserstein $F$-tests.  Following Example~\ref{exm: LocScale}, set $f_0$ as the standard normal distribution, truncated to the interval $[-2.5, 2.5]$ and renormalized to have integral one.  Bivariate predictors $X_i = (X_{i1},X_{i2})$ were generated as independent uniform random variables on $[-0.5, 0.5].$  Let $\nu(x) = \alpha_1 x_1 + \alpha_2 x_2$ and $\tau(x) = 2 + \beta_1 x_1 + \beta_2 x_2,$ where the parameters $\alpha_j$, $\beta_j$ were varied to increase signal strength for assessing power performance.  The conditional Wasserstein mean density was
\begin{equation}
\label{eq: simMean}
\fp(x,u) = \frac{1}{\tau(x)}f_0\left(\frac{u - \nu(x)}{\tau(x)}\right).
\end{equation}
To produce the random densities $\mk{F}_i,$ random optimal transport maps $T_i$ were generated in two different ways.  In the first setting, $T_i(u) = V_{i1} + V_{i2}u,$ where $V_{i1} \sim \mc{U}(-0.5, 0.5)$ and $V_{i2} \sim \mc{U}(0.5, 1.5)$ were generated independently, yielding linear optimal transport maps.  In the second setting, following the method described in Section 8.1 of \ci{pana:16}, nonlinear optimal transport maps were generated.  Specifically, define template transport maps $M_0(u) = u$ and
$$
M_k(u) =  u - \frac{\sin(ku)}{|k|}, \quad k \neq 0.
$$
For $J = 10,$ $(W_{i1}, \ldots, W_{iJ})$ were sampled from a order $J$ Dirichlet distribution, so that $W_j > 0$ and $\sum_{j = 1}^J W_j = 1.$ Then $K_{ij},$ $j = 1,\ldots,J$, were sampled independently and uniformly from $(-0.250, -0.125,  0,  0.125,  0.250)$.  The resulting optimal transport map in the second setting was $T_i(u) = \sum_{j = 1}^J W_{ij} M_{K_{ij}}(u).$  Finally, the random density was generated as $$\mk{F}_i(u) = \fp\left(X_i, T_i\inv(u)\right) (T_i \inv)'(u).$$  In the case when the densities were not directly observed, secondary samples of size 300 from each $\mk{F}_i$ were generated, and local linear smoothing of the empirical quantile function was used to estimate the quantile functions $\hat{Q}_i$ for use in the testing algorithms.  The Matlab function `smooth' was used with default choice of the smoothing parameter.  
\begin{figure}[t!]
\centering
\subcaptionbox{}[5in]{\includegraphics[scale = 0.9]{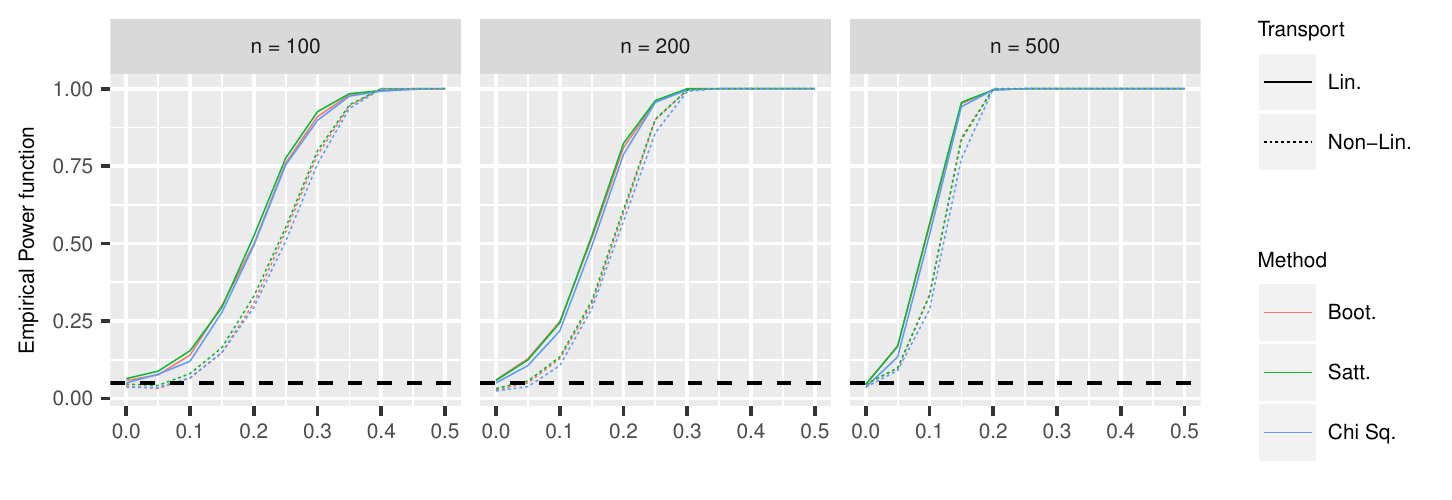}} \\
\subcaptionbox{}[5in]{\includegraphics[scale = 0.9]{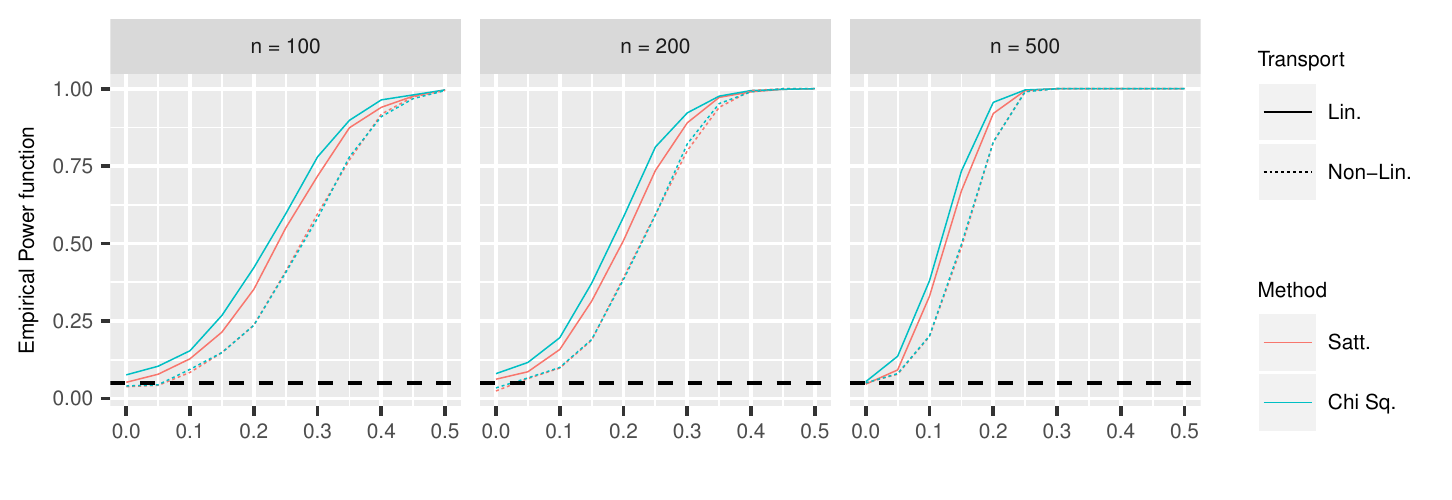}}
\caption{\footnotesize Power curves for global (top) and partial (bottom) Wasserstein $F$-tests, for sample sizes $n = 100, 200, 500.$ The dotted horizontal line represents the nominal level $\alpha = 0.05.$ \label{fig: test}}
\end{figure}

In both the global and partial F tests simulations, the empirical size approached the nominal size $\alpha = 0.05$ as the sample size grew, and the empirical power function increased with the increasing signal strength. To create models satisfying global null and alternative hypotheses, $\alpha_1 = \alpha_2 = \beta _1 = \beta_2$ were set to be equal, with the common value running through $(0, 0.5)$. In the partial $F$-test, $\alpha_1$ and $\beta_1$ were fixed as $\alpha_1 = 2$ and $\beta_1 = 1$ respectively, while $\alpha_2 = \beta_2$ varied across $(0, 0.5)$. Therefore, the null hypothesis in the partial Wasserstein $F$-test was $\mathcal{H}_0^P: f_{\oplus}(x) = f_\oplus^0(x_1)$. 

For each of three sample sizes $n = 100, 200, 500$, five hundred simulations were used to compute empirical power curves shown in Figure~\ref{fig: test}.  For the global tests, performance was similar among the $\chi^2$ mixture and alternative Satterthwaite and bootstrap tests outliend in Section~\ref{ss: testAlt}.  The nominal sizes converged to the right level, and power converged to one with increasing values of the common parameter value.  Power was generally higher for the linear transport map setting, as expected.  The same pattern is observed for the partial test, although the Satterthwaite alternative more accurately maintained the nominal level for low sample sizes.  The bootstrap test could not be implemented for the partial test due to the fact that the support of the density varies with $x.$  Similar results for indirectly observed densities are given in Figure~\ref{fig: test_est} in the Appendix.  For both global and partial tests, the difference in performance is more prominent in low sample sizes (in $n$, the number of densities), with the power converging at a slower rate and slightly larger deviations from the nominal level compared to the case of observed densities.

\begin{table}[t]
    \centering
    \caption{Confidence band error rates for fully observed densities over 500 simulation runs}
    \begin{tabular}{l|ccccccc}
\toprule
& \multicolumn{6}{c}{Wasserstein-$\infty$ band} \\
        &  \multicolumn{3}{c}{linear transport map} & & \multicolumn{3}{c}{nonlinear transport map}  \\\cline{2-4}\cline{6-8}
& n = 100 & n = 200 & n = 500 &  & n = 100  & n = 200 & n = 500 \\ 
\hline 
x = -0.30 & 0.050 & 0.050 & 0.046 & & 0.044 & 0.074 & 0.048 \\ 
x = -0.24 & 0.056 & 0.040 & 0.058 & &  0.044 & 0.068 & 0.044 \\ 
x = -0.18 & 0.058 & 0.042 & 0.058 &  & 0.052 & 0.064 & 0.046 \\ 
x = -0.12 & 0.056 & 0.036 & 0.062 & &  0.058 & 0.064 & 0.054 \\ 
x = -0.06 & 0.056 & 0.040 & 0.048 & &  0.058 & 0.070 & 0.058 \\ 
x = 0 & 0.056 & 0.042 & 0.042 &  & 0.052 & 0.074 & 0.056 \\ 
x = 0.06 & 0.062 & 0.052 & 0.036  & & 0.052 & 0.072 & 0.056 \\ 
x = 0.12 & 0.060 & 0.050 & 0.036 &  & 0.054 & 0.072 & 0.058 \\ 
x = 0.18 & 0.062 & 0.052 & 0.040 &  & 0.054 & 0.066 & 0.052 \\ 
x = 0.24 & 0.068 & 0.052 & 0.042 & &  0.058 & 0.064 & 0.058 \\ 
x = 0.30 & 0.076 & 0.044 & 0.050 & &  0.058 & 0.062 & 0.056 \\ 
\hline
& \multicolumn{6}{c}{Wasserstein density band*} \\
        &  \multicolumn{3}{c}{linear transport map} & & \multicolumn{3}{c}{nonlinear transport map}  \\\cline{2-4}\cline{6-8}
& n = 100 & n = 200 & n = 500 &  & n = 100  & n = 200 & n = 500 \\ 
\hline
x = -0.30 & 0.062 & 0.056 & 0.054 & & 0.066 & 0.074 & 0.064 \\ 
x = -0.24 & 0.056 & 0.056 & 0.052 & &  0.062 & 0.072 & 0.058 \\ 
x = -0.18 & 0.052 & 0.052 & 0.054 & &  0.054 & 0.058 & 0.050 \\ 
x = -0.12 & 0.034 & 0.042 & 0.052 &  & 0.048 & 0.052 & 0.046 \\ 
x = -0.06 & 0.038 & 0.038 & 0.038 &  & 0.046 & 0.050 & 0.044 \\ 
x = 0 & 0.030 & 0.026 & 0.040 & &  0.058 & 0.050 & 0.052 \\ 
x = 0.06 & 0.028 & 0.030 & 0.036 &  & 0.066 & 0.054 & 0.058 \\ 
x = 0.12 & 0.048 & 0.046 & 0.032 &  & 0.068 & 0.046 & 0.062 \\ 
x = 0.18 & 0.052 & 0.048 & 0.034 &  & 0.074 & 0.060 & 0.062 \\ 
x = 0.24 & 0.062 & 0.046 & 0.036 & &  0.074 & 0.066 & 0.062 \\ 
x = 0.30 & 0.064 & 0.048 & 0.036 & &  0.086 & 0.076 & 0.064 \\
 \bottomrule  
\end{tabular}
\begin{tablenotes}
\centering
      \item *$\delta = 0.1$ used to avoid boundary issue
    \end{tablenotes}
    \label{tab: cb}
\end{table}

For simplicity, in the simulations for confidence bands, a single predictor $X_i \sim \mc{U}(-0.5, 0.5)$ was used.  With $\nu(x) = 2x$ and $\tau(x) = 2 + x,$ the mean was again given by \eqref{eq: simMean}, where $f_0$ was the same as in the testing simulations.  The random optimal transports and case of indirectly observed densities were handled the same as in the testing simulations as well. Both types of confidence intervals were computed for predictor values $x \in ( -0.30, -0.24, \ldots, 0.24,  0.30)$, with $\delta = 0.1$ being used for the density bands.  In addition, for the case of indirectly observed densities, coverage of Wasserstein-$\infty$ bands was also measured using $\delta = 0.1$ due to boundary effects associated with the preliminary smoothing, as the random quantile functions are very steep near the boundary.  %If the true conditional mean density did not fully lie between the upper and lower bound of the confidence band, it is counted as an error. 
The error rates in Table \ref{tab: cb} were computed using 500 runs for each setting.  Overall, error rates improved as sample size increased, with error rates near $\alpha = 0.05$ at all $x$ values when $n = 500.$

Table~\ref{tab: cb_est} in the Appendix contains the corresponding results when for indirectly observed densities.  Note that Algorithm~\ref{alg: densBand} for computing the Wasserstein density band also requires estimation of $q_i$ and $q_i'$ in addition to the quantile function $Q_i.$  In our experiments, $\hat{q}_i$ and $\hat{q}_i'$ were computed by numerical differentiation of $\hat{Q}_i.$  Clearly there are alternative ways in which one might estimate these functions, but numerical differentiation was chosen because it represents a worst-case scenario in order to reveal sensitivities of the density bands to errors induced by this preprocessing step.  Convergence to the nominal coverage rate was slower for indirectly observed densities.  In particular, the Wasserstein-$\infty$ band suffered from the aforementioned boundary issues associated with local linear estimation of quantile functions that are steep near the boundary.  However, with $n = 500,$ all but one of the error rates were below 0.1, compared to the nominal rate $\alpha = 0.05.$

\section{Application to Stroke Data}
\label{sec: stroke}

Intracerebral hemorrhage (ICH), caused by small blood vessel ruptures inside the brain, is the second most common stroke subtype \cp{morg:10}. Computed tomography (CT) is the most utilized imaging modality to diagnose and study ICH in clinical settings. Various studies of ICH have revealed the importance of the density of the hematoma, in addition to important factors such as the size and location of the hematoma inside brain parenchyma.  \cp{barr:09,delc:16,sala:19}. The density of the hematoma is not homogenous, and common practice is to summarize this important feature by some set of subjective scalar measures that are often obtained by visual inspection and are thus user-dependent. For example, \ci{boul:16} demonstrated the importance of the CT hypodensity, a binary variable indicating the presence of low-density regions within the hematoma. Instead of using any particular summary, one can study the distribution of hematoma density (i.e., a probability density function of hematoma density) throughout the entire hematoma as a functional object. Specifically, the hematoma density is measured on the Hounsfield scale (0 - 100 HU) for each voxel within the hematoma, and these can be used to produce a probability density function in a variety of ways.  In this application, the hematoma densities were first combined into a histogram, followed by smoothing to obtain a probability density function; see Fig.~\ref{fig: ObsFitted}.  Other methods, such as local linear smoothing to obtain quantile function estimates as illustrated in the simulations, could also be used.

To illustrate the utility of the proposed inferential methods for Wasserstein regression, we considered a study of $n = 393$ ICH anonymized subjects and analyzed the associations between 5 clinical and 4 radiological variables as predictors of the head CT hematoma densities as distributional responses. The clinical predictors were age, weight, history of diabetes, and two variables indicating history of coagulopathy (Warfarin and AnitPt).  Radiological predictors included the logarithm of hematoma volume, a continuous index of hematoma shape (Shape), presence of a shift in the midline of the brain, and length of the interval between stroke event and the CT scan (TimetoCT).  These predictors were selected based on the natural history and published clinical studies \cp{salm:09,al:18}.  A complete description of the data source can be found in \ci{heve:18}. As seen in Figure~\ref{fig: ObsFitted}, important features that vary from subject to subject are the location and mode of the hematoma density distribution, as well as the spread, where some hematoma densities are homogeneous and concentrated around the single mode, while others are more heterogeneous or even bimodal. Thus, while the mean is an important aspect of the hematoma density, this natural summary fails to capture other forms of variability that are automatically taken into account by the Wasserstein regression model. 

To assess the goodness of fit, the Wasserstein coefficient of determination \cp{mull:19:3} was computed as
$$
\hat{R}_{\oplus}^2 = 1 - \frac{\son \dw^2(\mk{F}_i, \hfi)}{\son \dw^2(\mk{F}_i, \hbfp)} = 0.2242,
$$
representing the fraction of Wasserstein variability explained by the model. As a comparison, a standard linear functional response model \cp{fara:97} was also fit, after transforming the densities into a linear function space using the log quantile density (LQD) transformation of \ci{mull:16:1}.  Because the hematoma densities fail to have common supports, as required by the LQD transformation, the densities were regularized to be strictly positive on $(0,100).$  This was done by appropriately mixing each density with the uniform distribution on $(0,100)$, with the mixture coefficient chosen so that the densities were all bounded below by $0.0005.$  This alternative LQD model yielded fitted densities by means of the inverse LQD transformation.  The LQD method suffered from quite poor recovery of the quantile functions near the boundaries $t \in \{0,1\},$ due to the fact that the observed densities decay to zero at the boundary of their supports.  The Wasserstein distance in \eqref{eq: wass2} was sensitive to these errors, and resulted in a much lower Wasserstein coefficient of determination of $\hat{R}_\oplus^2 = 0.0389,$  so that the Wasserstein regression model provided a substantial improvement on the LQD approach.  

\begin{table}[t]
\caption{LQD transformation method and Wasserstein regression $p$-values % for the full model consisting of 9 predictors.%; those in bold remained in the model following the forward model selection for the Wasserstein regression model.
 \label{tab: FullModel}}
\begin{tabular}{l | l | r r}
Category & Predictor& Wasserstein  & LQD \\ \hline
\multirow{5}{*}{Clinical} & Age & 0.862 &  0.761 \\ 
 & {Weight} & $<0.001$ &  0.479  \\ 
 & {DM} & 0.034 &  0.742  \\ 
  & Warfarin  & 0.298 & 0.640 \\ 
  & {AntiPt} & 0.078 &  0.900   \\ \hline \hline
 \multirow{4}{*}{Radiological} &  $\log$({Volume}) & $ <0.001$ &  $ <0.001$ \\
&  {Hematoma Shape} & $<0.001$ &  $<0.001$ \\
&  {Midline Shift} & $<0.001$ & 0.039 \\ 
&  TimetoCT &  0.902 & 0.657\\ 
   \hline
\end{tabular}
\end{table}

The $p$-value for the global Wasserstein $F$-test was approximately zero using the mixture $\chi^2$ test, giving strong evidence of a significant regression relationship between the hematoma densities and candidate predictors. The alternative testing procedures of Section~\ref{ss: testAlt} gave similar results. Significance of the effects of individual predictors are found in Table~\ref{tab: FullModel}, where these were computed via partial Wasserstein $F$-tests.  As a baseline for comparison, the $F$-tests of \ci{shen:04} were performed on the LQD functional linear model.  The global $F$-test on the LQD model was also approximately zero, with partial $p$-values given in the last column of Table~\ref{tab: FullModel}.  Both models identified hematoma size and shape, as well as presenece of midline shift, as important factors in determining the distribution of density within the hematoma.  However, the Wasserstein regression model identifies two clinical variables related to weight and diabetes as also affecting the hematoma density distribution.

\begin{figure}[t]
\centering
\subcaptionbox{Wasserstein-$\infty$ band \label{fig: CB_CDF}}[2.4in]{\includegraphics[width=2.4in, height=2.4in]{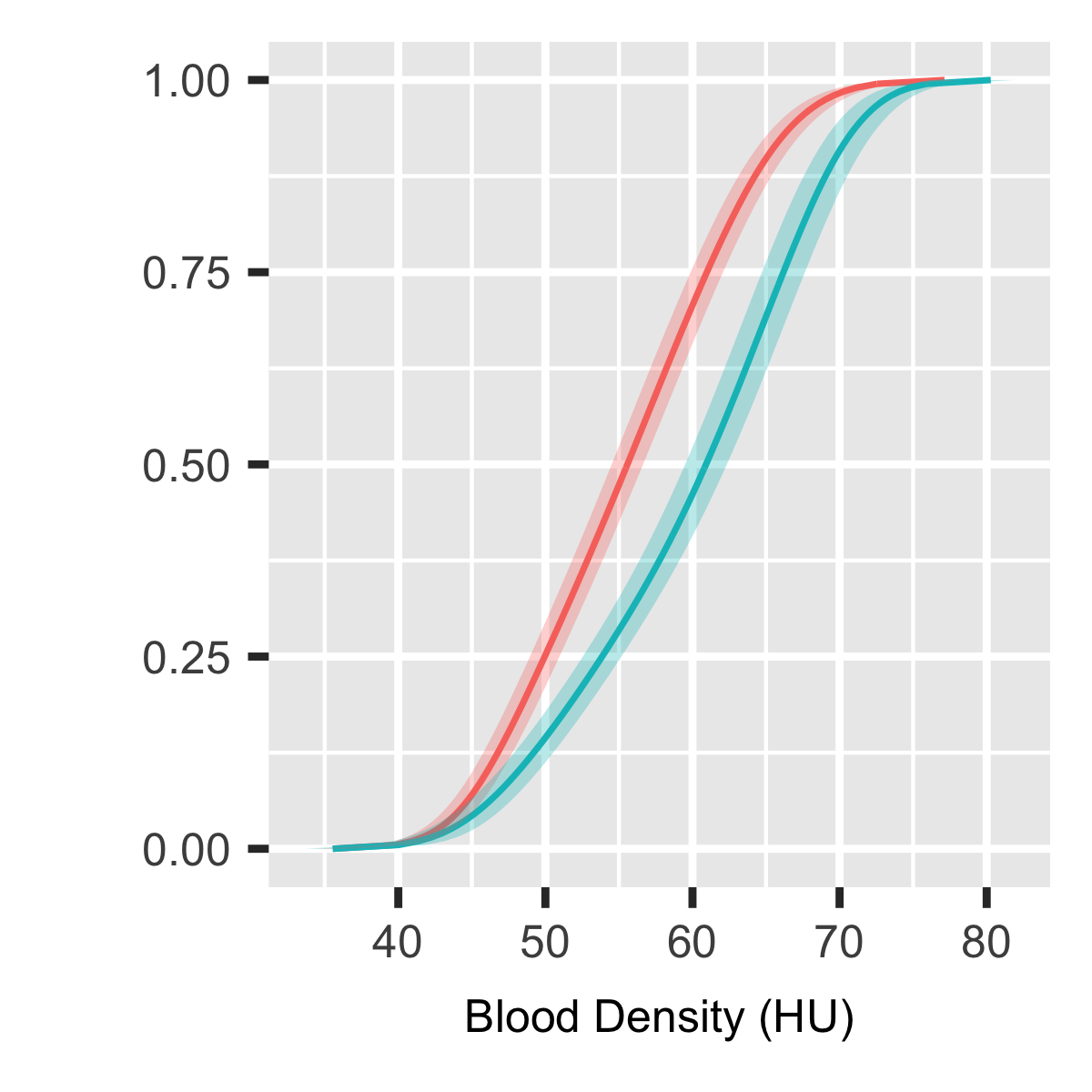}} \hspace{-1cm}
\subcaptionbox{Wasserstein density band \label{fig: CB_DENS}}[2.88in]{\includegraphics[width=2.88in, height=2.4in]{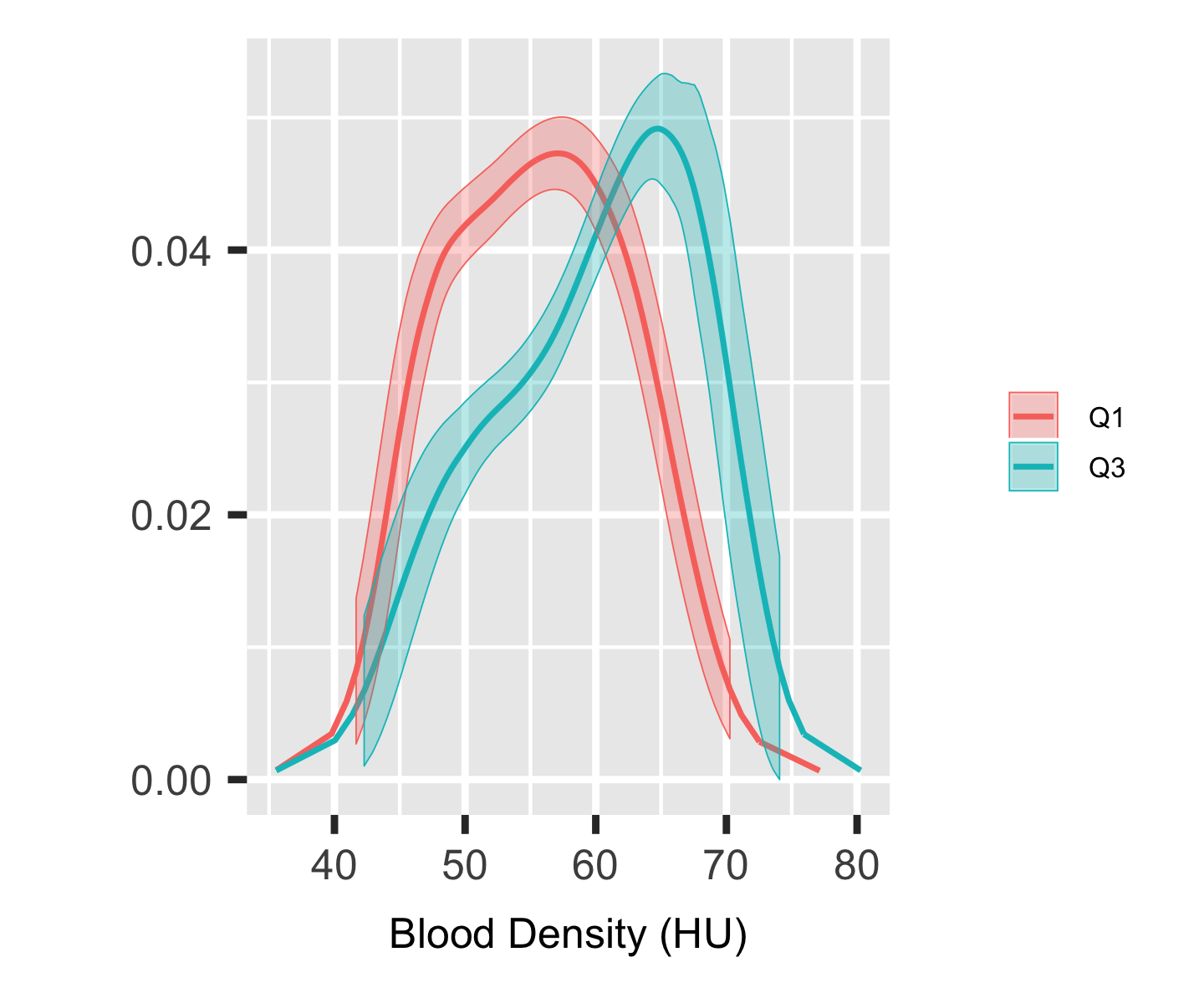}}
\caption{\footnotesize 95\% confidence bands for conditional Wasserstein mean CDF (left) and density (right) when hematoma volume is equal to the first (Q1) and third (Q3) quartile in the sample, with other variables set to their mean value for continuous variables and mode for binary variables. \label{fig: CB}}
\end{figure}

Finally, we demonstrate the use of the two types of Wasserstein confidence bands developed in Section~\ref{sec: cb}.  Figure~\ref{fig: CB_CDF} shows two fitted Wasserstein mean cdfs.  The first corresponds to a hematoma volume equivalent to the first quartile (Q1) of the observed values, with all other predictors set at their mean (for continuous variables) or mode (for binary variables).  The second is similar, but for the third quartile (Q3) of hematoma volume.  Each fitted cdf is also accompanied by a Wasserstein-$\infty$ band, represented by the stochastic ordering bracket $[F_L, F_U]$. These bands may be interpreted as bounds on the ``horizontal" sampling variability of the fitted distributions, i.e.\ the sampling variability of the fitted quantile function.  Figure~\ref{fig: CB_DENS} shows the corresponding fitted Wasserstein mean densities, along with density bands. These density bands reflect sampling variability at the density level and aid in inferring the relevance of local features seen in the fitted densities. As indicated by Figure~\ref{fig: ObsFitted}, the magnitude of the horizontal variability is the smaller of the two, and this is reflected in the confidence bands.  From a neurological point of view, the physiology dictates that larger hematomas tend to be more dense and homogeneous, since the pressure put on them from the surrounding tissue restricts growth and causes voids within the hematoma to be filled, confirming the observed differences between fitted cdfs/densities in Figure~\ref{fig: CB}.

\section{Discussion}

We have studied the regression of density response curves on vector predictors under the Fr\'echet regression model and the Wasserstein geometry of optimal transport.  The targets are the conditional Wasserstein mean densities, which can be estimated without the need of a tuning parameter, akin to a parametric model.  By replacing the additive error term in ordinary linear regression with a random optimal transport map, intuitive test statistics are proposed for testing null hypotheses of both no and partial predictor effects.  In the spirit of regression, asymptotic distributions are derived conditional on the observed predictors.  The covariance of the random transport map is a nuisance parameter that can be consistently estimated and thus used to form a rejection region with correct asymptotic size and which are uniformly consistent against classes of contiguous alternatives.

Confidence bands are also derived for the fitted Wasserstein mean densities in two forms.  Due to the intimate connection between the Wasserstein metric and quantile functions for univariate distributions, the first type of confidence band is formed in terms of the fitted Wasserstein mean quantile function (equivalently, the Wasserstein mean cdf), and forms a bracket in the usual stochastic ordering of distributions on the real line.  These intrinsic confidence bands are complemented by extrinsic bands in density space, allowing the user to simultaneously quantify sampling variability of all quantiles of the distribution, as well as the uncertainty of local features seen in the conditional Wasserstein mean densities. 

As for any regression model, it will be necessary in future work to develop diagnostic tools to assess the validity of the Wasserstein regression model for a given data set.  It is likely that tools for functional regression models can similarly be adapted to the setting of density response curves \cp{mull:07:3}.  Likewise, model selection or regularized estimation procedures are clearly desirable, especially in cases where the number of predictors is large \cp{barb:17}. 

While our theoretical developments have assumed the density response curves are known, in the majority of practical situations these will need to be estimated from a collection of univariate samples, each generated by one of the random densities.  In the simulations, we have demonstrated how this can be done using local linear smoothing of empirical quantile functions, while smoothed histograms could also be used as in the application to head CT densities.  Some previous theoretical work in the analysis of density samples has accounted for this preprocessing step, similar to the effects of pre-smoothing in functional data analysis when curves are measured sparsely in time and often contaminated with noise \cp{knei:01,mull:16:1,pana:16}.  An important point of future research in this area will include identifying a division of regimes between dense and sparse samples for density functions, similar to \ci{zhan:16} for classical functional data, and their implications on inferential procedures such as those proposed in this paper.

\section*{Acknowledgements}
The authors would like to thank Mostafa Jafari and Pascal Salazar for providing the hematoma density curves for our analysis.  %We are also indebted to four anonymous referees and the Associate Editor for their vital feedback that resulted in significant improvements to the paper.

% AOS,AOAS: If there are supplements please fill:0
%\begin{supplement}[id=suppA]
 % \sname{Supplement A}
 % \stitle{Supplementary Material to ``Wasserstein $F$-tests and Confidence Bands for the Fr\'echet Regression of Density Response Curves''}
%  \slink[doi]{10.1214/00-AOASXXXXSUPP}
%  \sdatatype{.pdf}" 
 % \sdescription{The supplement includes computational algorithms, additional results on confidence bands, simulations for indirectly observed densities, and proofs of all theoretical results.}
%\end{supplement}

\newpage

\renewcommand{\theequation}{A.\arabic{equation}}
\renewcommand{\thesection}{A.\arabic{section}}

\begin{center}
{\Large APPENDIX}
\end{center}

\setcounter{section}{0}

The Appendix is organized as follows.  Section~\ref{sec: alg} gives algorithms for the various testing procedures and confidence band computations described in Sections~\ref{sec: tests} and \ref{sec: cb}.  Section~\ref{sec: fixed} illustrates how the confidence bands developed in Section~\ref{sec: cb} can be adapted to the case of random densities $\mk{F}$ with a fixed support $\I.$  Section~\ref{sec: sims2} gives simulation results, corresponding to the settings of Section~\ref{sec: sims}, for the case when the random densities are only observed through a random sample.  Section~\ref{sec: props} gives proofs of Propositions~\ref{prop: wMeanUnique} and \ref{prop: wanova}.  Sections~\ref{sec: testProofs} and \ref{sec: cbProofs} provide proofs of the testing and confidence band results in Sections~\ref{sec: tests} and \ref{sec: cb}, respectively.  Finally, Section~\ref{sec: lmas} gives statements and proofs of auxiliary lemmas.

\section{Computational Algorithms}
\label{sec: alg}

\subsection{Estimation}

This section gives two algorithms, one for computing $\hQp(x)$ in \eqref{eq: Qfit}, and the other for the density estimate $\hfp(x)$ in \eqref{eq: fullEst}.  Let $\ipLt{\cdot}{\cdot}$ and $\normLt{\cdot}$ denote the standard Hilbert inner product and norm on $L^2\zo.$  For any $Q \in \mk{Q},$ since $n\inv \son s_{in}(x) = 1,$ observe that
$$
\frac{1}{n}\son s_{in}(x)\normLt{Q_i - Q}^2 = \normLt{Q}^2 - 2\ipLt{\tilde{Q}_\oplus(x)}{Q_i} + \frac{1}{n}\son s_{in}(x) \normLt{Q_i}^2, 
$$
where $\tilde{Q}_\oplus(x) = \frac{1}{n}\son s_{in}(x)Q_i.$  For any $g,h \in L^2\zo,$ let $0=t_1 < \cdots < t_m = 1$ be a grid and set $\mathbf{g}_l = g(t_l),$ $\mathbf{h}_l = h(t_l).$  Let $A \in \R^{m\times m}$ be the matrix representing the trapezoidal rule approximation
$$
\i01 g(t) h(t) \d t \approx \mathbf{g}\T A \mathbf{h}.
$$

\begin{algorithm}[H]
\caption{Estimating $\hQp(x)$ \label{alg: Qest}}

\SetKwInOut{Input}{input}\SetKwInOut{Output}{output}

\Input{Predictor vector $x \in \R^p,$ quantile functions $Q_i$, and grid $0=t_1<\cdots<t_m=1$}
\Output{Estimates $\hQp(x,t_l),$ $l =1,\ldots,m$}
\BlankLine

\For{$l \leftarrow 1 $ \KwTo $m$}{
	$\mathbf{Q}_l  \leftarrow n\inv\son s_{in}(x) Q_i(t_l)$\;
}

\uIf{$\mathbf{Q}_{l+1} \geq \mathbf{Q}_l \,\, \forall l \in\{1,\ldots,m-1\}$}{
	$\hQp(x,t_l) \leftarrow \mathbf{Q}_l $\;
	}
\Else{
	$\mathbf{b}\ast \leftarrow \min_{\mathbf{b} \in \R^m} \frac{1}{2}\mathbf{b}\T A \mathbf{b} - \mathbf{Q}\T A \mathbf{b}, \quad \text{subject to } \mathbf{b}_1\leq \cdots \leq \mathbf{b}_m$\;
	
	$\hQp(x,t_l) \leftarrow \mathbf{b}_l\ast$\;
}
\end{algorithm}

\begin{algorithm}[H]
\caption{Estimating $\hfp(x)$ and $\hqp(x)$ \label{alg: fest}}

\SetKwInOut{Input}{input}\SetKwInOut{Output}{output}

\Input{Predictor vector $x \in \R^p,$ quantile functions values $Q_i(0)$, quantile densities $q_i$, grid $0=t_1<\cdots<t_m=1$, and $\epsilon > 0.$}
\Output{Estimates $\hqp(t_l)$ and $ \hfp(x,\hQp(x,t_l)),$ $l =1,\ldots,m$}
\BlankLine

$\tilde{\mathbf{Q}}_0 \leftarrow n\inv \son Q_i(0)$\;

\For{$l \leftarrow 1 $ \KwTo $m$}{
	$\mathbf{Q}_{l}  \leftarrow n\inv\son s_{in}(x) q_i(t_l)$\;
}

\If{$\mathbf{Q}_{l} < 0$ for any $ l \in\{1,\ldots,m\}$}{
	$\mathbf{b}\ast \leftarrow \min_{\mathbf{b} \in \R^{m+1}} \frac{1}{2}\mathbf{b}\T B \mathbf{b} - \mathbf{Q}\T B \mathbf{b}, \quad \text{such that } \mathbf{b}_2,\ldots, \mathbf{b}_{m+1} \geq \epsilon$\;
	
	$(\mathbf{Q}_1,\ldots,\mathbf{Q}_m) \leftarrow \mathbf{b}\ast$\;
}

$u_0 \leftarrow \mathbf{Q}_1$ \;

\For{$l \leftarrow 1$ \KwTo $m$}{
	$u_l \leftarrow u_{l-1} + (\mathbf{Q}_{l} + \mathbf{Q}_{l+1})(t_{l+1} - t_l)/2$\;

	$\hQp(x,t_l) \leftarrow u_l$\;
	
	$\hqp(x,t_l) \leftarrow \mathbf{Q}_l$\;
	
	$\hfp(x, u_l) \leftarrow 1  / \mathbf{Q}_{l} $\;

}
\end{algorithm}

Next, in light of Section~\ref{ss: WDB}, it is necessary to obtain a density estimate $\hfp(x).$  One possibility is to compute $\hQp(x)$ using Algorithm~\ref{alg: Qest}, followed by numerical inversion to obtain $\hFp(x,u),$ and finally computing $\hfp(x)$ by numerical differentiation.  However,  a more direct route is available.  Suppose that $g,h \in L^2\zo$ are differentiable, and take $\tilde{\mathbf{g}} = (g(0), g'(t_1),\ldots,g'(t_m)),$ and similarly $\tilde{\mathbf{h}}.$  Then let $B \in \R^{(m+1)\times(m+1)}$ be the matrix representing the trapezoidal approximation
$$
\i01 g(t) h(t) \d t \approx \tilde{\mathbf{g}}\T B \tilde{\mathbf{h}}.
$$

\subsection{Hypothesis Testing}

We first give the algorithm for computing the critical value $\hat{b}_\alpha^G$ in the global test using eigenvalue estimates $\hlambda_j.$  Assume the fitted quantile functions $\hat{Q}_i = \hQp(X_i)$ and $\hbQp = \hQp(\xbar)$ have already been computed using Algorithm~\ref{alg: Qest}, and that $\hat{C}_Q$ in \eqref{eq: CqEst} has also been calculated.  Then let $\hlambda_j$ be the eigenvalues of $\hat{C}_Q,$ which can be computed using a singular value decomposition of the discretized covariance.  Let $J$ be the number of $\lambda_j$ that are positive.  Algorithm~\ref{alg: crit} can be replaced by either of Algorithm~\ref{alg: critS} or \ref{alg: critB}, which correspond to the Satterthwaite approximation and residual bootstrap mentioned in Section~\ref{ss: testAlt}.  Similar algorithms can be followed for the partial tests of Section~\ref{ss: partial}.  However, the residual bootstrap algorithm can only be executed if the support of the random density $\mk{F}$ is fixed. 

\begin{algorithm}[H]
\caption{Critical Value for Global Test \label{alg: crit}}

\SetKwInOut{Input}{input}\SetKwInOut{Output}{output}

\Input{Eigenvalue estimates $\hlambda_j,$  level $\alpha \in (0,1)$, and integer $R$}
\Output{Critical value $\hat{b}_\alpha^G$}
\BlankLine

\For{$r \leftarrow 1 $ \KwTo $ R$}{
	
	Generate $\omega_j \overset{\textrm{i.i.d.}}{\sim} \chi^2_p,$ $j = 1,\ldots,J$\;
	
	$z_r \leftarrow \sum_{j = 1}^J \hlambda_j \omega_j$\;

}

$\hat{b}_\alpha^G \leftarrow 1-\alpha$ empirical quantile of $z_1,\ldots, z_R$\;

\end{algorithm}

\begin{algorithm}[H]
\caption{Satterthwaite Critical Value for Global Test \label{alg: critS}}

\SetKwInOut{Input}{input}\SetKwInOut{Output}{output}

\Input{Covariance Estimate $\hat{C}_Q$}
\Output{Critical value $\hat{b}_\alpha^G$}
\BlankLine

$a \leftarrow \left(\i01\i01 \hat{C}_Q(s,t) \d s \d t\right) \bigg / \left(\i01 \hat{C}_Q(t,t) \d t\right)$\;

$m \leftarrow p\left(\i01 \hat{C}_Q(t,t)\d t\right)^2 \bigg / \left(\i01 \i01 \hat{C}_Q(s,t) \d s \d t\right)$\;

$\hat{b}_\alpha^G \leftarrow a \chi^2_{m,1-\alpha}$\;

\end{algorithm}

\newpage 
\begin{algorithm}[H]
\caption{Bootstrap Critical Value for Global Test \label{alg: critB}}

\SetKwInOut{Input}{input}\SetKwInOut{Output}{output}

\Input{Data $(X_i, Q_i),$ fitted quantile functions $\hat{Q}_i$ and $\hbQp,$  level $\alpha \in (0,1)$, and integer $R$}
\Output{Critical value $\hat{b}_\alpha^G$}
\BlankLine

\For{$i \leftarrow 1$ \KwTo $n$}{

	$\hat{T}_i \leftarrow \hat{Q}_i \circ \hbFp$\;
	
}

\For{$r \leftarrow 1$ \KwTo $R$}{

	Sample $T_1\ast,\ldots T_n\ast$ uniformly and with replacement from $(\hat{T}_1,\ldots,\hat{T}_n)$\;
	
	$(Q_1\ast,\ldots,Q_n\ast) \leftarrow (T_1\ast \circ \bQp,\ldots, T_n\ast \circ \bQp)$\;
	
	Compute fitted values $\hat{Q}_{\oplus,r}(X_i)$ from data $(X_i, Q_i\ast)$ using Algorithm~\ref{alg: Qest}\;
	
	$z_r \leftarrow \son \normLt{\hat{Q}_{\oplus,r}(X_i) - \bQp}^2$\;

}

$\hat{b}_\alpha^G \leftarrow 1-\alpha$ empirical quantile of $z_1,\ldots, z_R$\;

\end{algorithm}

\subsection{Confidence Bands}

We first consider the Wasserstein-$\infty$ bands of Section~\ref{ss: IWB}.  The goal is to compute the upper and lower cdf bounds $(F_L, F_U)$ such that $C_{\alpha,n}(x) = \left[F_L, F_U\right].$  The initial step in the algorithm is to approximate the critical value $\hat{m}_\alpha$.  This requires generating zero-mean Gaussian processes (GP) with a specified covariance function.  This step can be done in a number of ways, for instance by discretizing the process onto a grid and generating a multivariate Gaussian vector with the given covariance structure.  However, other methods can be used.  The key step is to project the resulting functions $M_L$ and $M_U$ to their nearest monotonic majorant and minorant, respectively, which becomes a quadratic program when discretized.  For simplicity, we don't outline that step here, but the same approach used in Algorithm~\ref{alg: Qest} applies.  In addition, the algorithm produces quantile bounds $(Q_L, Q_U)$ instead of the cdf bounds since it is easier to present this way, but one can easily compute the latter through numerical inversion as a post-processing step.  Suppose that $\hat{D}_Q(s,t)$ has already been computed as in \eqref{eq: DqEst}, and define $\hat{\mathbf{D}}_{1,x}(s,t) = \tilde{x}\T \hat{D}_Q(s,t) \tilde{x}.$

\newpage

\begin{algorithm}[H]
\caption{Wasserstein-$\infty$ Band \label{alg: QBand}}

\SetKwInOut{Input}{input}\SetKwInOut{Output}{output}

\Input{Predictor vector $x \in \R^p,$ estimates $\hQp(x)$ and $\hat{\mathbf{D}}_{1,x}(s,t),$ level $\alpha \in (0,1)$, and integer $R$}
\Output{Lower/Upper bound quantile functions $(Q_{L}, Q_{U})$ of simultaneous Wasserstein-$\infty$ band $C_{\alpha,n}(x)$}
\BlankLine

\For{$r \leftarrow 1$ \KwTo $R$}{

	Generate GP $\hat{\mc{N}}_x(t)$ with mean zero and covariance $\hat{\mathbf{D}}_{1,x}(s,t)$\;
	
	$z_r \leftarrow \sup_{t \in \zo} \hat{\mathbf{D}}_{1,x}(t,t)^{-1/2} |\hat{\mc{N}}_x(t)|$\;

}

$\hat{m}_\alpha \leftarrow$ $1-\alpha$ empirical quantile of $z_1,\ldots,z_R$\;

$M_L(t) \leftarrow \hQp(x,t) - n^{-1/2}\hat{m}_\alpha \hat{\mathbf{D}}_{1,x}(t,t)^{1/2}$\;

$M_U(t) \leftarrow \hQp(x,t) + n^{-1/2}\hat{m}_\alpha \hat{\mathbf{D}}_{1,x}(t,t)^{1/2}$\;

$Q_L \leftarrow \argmin_{Q \in \mk{Q}} \normLt{Q - M_L}^2 \quad \text{subject to } Q \text{ nondecreasing, } Q \geq M_L$\;

$Q_U \leftarrow \argmin_{Q \in \mk{Q}} \normLt{Q - M_U}^2 \quad \text{subject to } Q \text{ nondecreasing, } Q \leq M_U$\;

\end{algorithm}

Lastly, we present the algorithm for computing the Wasserstein density band $B_{\alpha,n}(x)$ as in \eqref{eq: densCBd}.  Assume the covariance $\hat{\mathbb{D}}(s,t)$ in \eqref{eq: DmatEst} has been computed, using quantile estimates $\hat{Q}_i$, $\hQp(x)$ and quantile density estimates $\hat{q}_i$, $\hqp(x)$ from Algorithm~\ref{alg: fest}.  Furthermore, define 
$$
\hat{b}_t = \frac{1}{\hqp(x,t)^{3}}\left(\frac{\partial}{\partial t} \hqp(x, t),\, \hqp(x,t)\right)\T, \quad \frac{\partial}{\partial t} \hqp(x,t) = \frac{1}{n}\son s_{in}(x) q_i'(t).
$$ 
Then set $\hat{\mathbf{D}}_{2,x}(s,t) = \hat{b}_s\T\left(\tilde{x}\T \otimes \hat{\mathbb{D}}(s,t)\otimes \tilde{x}\right) \hat{b}_t.$  

\begin{algorithm}[H]
\caption{Wasserstein Density Band \label{alg: densBand}}

\SetKwInOut{Input}{input}\SetKwInOut{Output}{output}

\Input{Predictor vector $x \in \R^p,$ estimates $\hfp(x)$ and $\hat{\mathbf{D}}_{2,x}(s,t),$ level $\alpha \in (0,1)$, $\delta \in (0,1/2),$ and integer $R$}
\Output{Lower/Upper bound functions $(f_{\textrm{L}}, f_{\textrm{U}})$ of nearly simultaneous Wasserstein density band $B_{\alpha,n}(x)$}
\BlankLine

\For{$r \leftarrow 1 $ \KwTo $R$}{
	Generate GP $\hat{\mc{J}}(t)$ on $(\delta, 1-\delta)$ with mean zero and covariance $\hat{\mathbf{D}}_{2,x}(s,t)$\;
	
	$z_r \leftarrow \sup_{t \in (\delta,1-\delta)} \hat{\mathbf{D}}_{2,x}(t,t)^{-1/2} |\hat{\mc{J}}(t)|$
	}

$\hat{l}_\alpha \leftarrow 1-\alpha$ empirical quantile of $z_1,\ldots,z_R$\;

$f_{\textrm{L}}(u) = \hfp(x,u) - n^{-1/2}\hat{\mathbf{D}}_{2,x}(\hFp(x,u),\hFp(x,u))^{1/2}\hat{l}_\alpha$\;

$f_{\textrm{U}}(u) = \hfp(x,u) + n^{-1/2}\hat{\mathbf{D}}_{2,x}(\hFp(x,u),\hFp(x,u))^{1/2}\hat{l}_\alpha$\;
\end{algorithm}

\section{Densities with Fixed Support}
\label{sec: fixed}

In many common scenarios, the sample of densities all have a known finite support $\I = (a,b).$  Without loss of generality, suppose $\I = (0,1).$  The key difficulty associated with the Wasserstein-$\infty$ band in \eqref{eq: WinfCB} is the standardization provided obtained by dividing by the square root of the pointwise variance estimate.  When the densities have fixed support, the random optimal transports $T$ necessarily satisfy $C_T(0,0) = C_T(1,1) = 0,$ so that standardization does not preserve tightness in the weak convergence statements.  On the other hand, having a fixed support resolves the issue encountered in the development of the Wasserstein density bands, since the true conditional Wasserstein mean $\fp(x)$ and its estimate $\hfp(x)$ will both have support $(0,1).$  

Using the same notation as in Section~\ref{sec: cb}, for any $\delta \in (0,1/2),$ define 
\[
\begin{split}
\hat{\zeta}_x' &= \sup_{t \in [\delta, 1-\delta]} \left[\tilde{x}\T \hat{D}_Q(t,t) \tilde{x}\right]^{-1/2} \left|\hat{\mc{N}}_x(t)\right|, \\
\hat{\xi}_x' &= \sup_{t \in (0,1)} \left[ \hat{\mc{R}}_x(\hQp(x,t), \hQp(x,t)) \right]^{-1/2} \left| \hat{\mc{J}}_x(t) \right|,
\end{split}
\]
and let $\hat{m}_\alpha'$ and $\hat{l}_\alpha'$ be, respectively, the $1-\alpha$ quantiles of $\hat{\zeta}_z'$ and $\hat{\xi}_x'.$  Then define the confidence bands
\begin{align}
C'_{\alpha,n}(x) &= \left\{ g \in \mc{D}: \, \sup_{u \in [\delta, 1-\delta]} \frac{|T_{g,x}(u) - u |}{\hat{\mc{C}}_x(u,u)^{1/2}} \leq  \frac{\hat{m}_\alpha'}{n^{1/2}}\right\}, \quad T_{g,x} = G\inv \circ \hFp(x), \\
B'_{\alpha,n}(x) &= \left\{g \in \mc{D}: \, \sup_{u \in \I} \frac{|g(u) - \hfp(x,u)|}{\hat{\mc{R}}_x(u,u)^{1/2}} \leq \frac{\hat{l}_\alpha'}{n^{1/2}}\right\}.
\end{align}

The same computational procedures outlined in Algorithms~\ref{alg: QBand} and \ref{alg: densBand} apply to computing the above confidence bands.  We also have strongly consistent coverage.

\begin{Corollary}
\label{cor: CB_fixed}
Suppose that (T1)--(T4) and $C_T(u,u) > 0$ for all $u \in (0,1).$  Then
$$
\PX\left(\fp(x) \in C'_{\alpha,n}(x)\right) \rightarrow 1-\alpha \quad \text{almost surely}.
$$
\end{Corollary}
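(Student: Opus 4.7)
The plan is to parallel the argument of Corollary~\ref{cor: CB}, with the restriction of the supremum to the compact quantile interval $[\delta, 1-\delta]$ serving as the key accommodation for the boundary degeneracy of $C_T$ in the fixed-support setting. By Theorem~\ref{thm: fittedVal} the conditional weak convergence $\sqrt{n}(\hat{V}_x - \id) \rsa \mc{M}_x$ holds almost surely, and via the change of variables $u = \Qp(x, t)$ the reparameterized process $\mc{N}_x(t) := \mc{M}_x(\Qp(x,t))$ converges conditionally in $L^\infty[\delta, 1-\delta]$ to a centered Gaussian process with covariance $\tilde{x}\T D_Q(s,t) \tilde{x}$. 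Since $\Qp(x,\cdot)$ is continuous, the event defining $C'_{\alpha,n}(x)$ at $g = \fp(x)$ can be rewritten in quantile coordinates as a supremum of the standardized $|\sqrt{n}(\hQp(x,t) - \Qp(x,t))|$ over $t \in [\delta, 1-\delta]$, which is precisely the coordinate in which $\hat{\zeta}_x'$ is formulated.

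Next, I would verify that the standardizer remains bounded below. Because $D_Q(t,t) = \tilde{K}(\bQp(t),\bQp(t)) = \Lambda\inv E[\tilde{X}\tilde{X}\T \, C_T(\Qp(X,t),\Qp(X,t))]\Lambda\inv$ and $\Qp(X,t) \in (0,1)$ almost surely for $t \in [\delta,1-\delta]$ (owing to the fixed support $(0,1)$), the hypothesis $C_T(u,u) > 0$ on $(0,1)$ forces $E[\tilde{X}\tilde{X}\T C_T(\Qp(X,t),\Qp(X,t))]$ to be positive definite pointwise. Joint continuity of $C_T$ (from (T2)), continuity of $\Qp(X,\cdot)$, compactness of $[\delta,1-\delta]$, and positive definiteness of $\Lambda$ together give
\[
\inf_{t \in [\delta,1-\delta]} \tilde{x}\T D_Q(t,t) \tilde{x} > 0.
\]
This is the precise analogue of the condition $\inf_{u \in \Ist} C_T(u,u) > 0$ used in Corollary~\ref{cor: CB}, localized to the interior of quantile space.

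With this lower bound in hand, the uniform consistency $\sup_{s,t \in [\delta,1-\delta]}\|\hat{D}_Q(s,t) - D_Q(s,t)\|_F = \opx(1)$ (which follows from the same law-of-large-numbers argument invoked in the proof of Corollary~\ref{cor: CB}) implies uniform consistency of the standardized covariance kernel, and hence the conditional distribution of $\hat{\zeta}_x'$ converges almost surely to that of $\zeta_x' := \sup_{t \in [\delta,1-\delta]} [\tilde{x}\T D_Q(t,t)\tilde{x}]^{-1/2}|\mc{N}_x(t)|$ by the continuous mapping theorem. Since $\mc{N}_x$ has continuous sample paths and nondegenerate covariance on the compact interval, the limit distribution function is continuous, so $\hat{m}_\alpha'$ converges almost surely to the true $1-\alpha$ quantile $m_\alpha'$ of $\zeta_x'$. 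A final application of the continuous mapping theorem to the standardized supremum statistic, combined with P\'olya's theorem, gives $\PX(\fp(x) \in C'_{\alpha,n}(x)) \to 1-\alpha$ almost surely.

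The main obstacle will be the uniform positivity step: while $C_T(u,u) > 0$ holds only pointwise on $(0,1)$, one must leverage continuity together with the compactness of the reparameterized domain $\{\Qp(X,t) : t \in [\delta,1-\delta]\}$ (in distribution over $X$) to promote this to a uniform bound that also absorbs the averaging against $\tilde{X}\tilde{X}\T$. Once the lower bound is in place, the remaining arguments are essentially bookkeeping that mirrors Corollary~\ref{cor: CB}.
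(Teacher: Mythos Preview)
Your proposal is correct and follows essentially the same route as the paper's proof, which simply observes that the uniform lower bound on the standardizer reduces to $\inf_{u \in [\delta,1-\delta]} C_T(u,u) > 0$ --- an immediate consequence of continuity of $C_T$ (from (T1)) and compactness --- and then defers to the argument of Corollary~\ref{cor: CB}. Your positivity argument via the full matrix $D_Q(t,t)$ is valid but more elaborate than needed; once $\inf_{u \in [\delta,1-\delta]} C_T(u,u) > 0$ is in hand, the lower bound on $\tilde{x}\T D_Q(t,t)\tilde{x}$ follows directly without invoking positive definiteness of the expectation matrix.
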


\begin{Corollary}
\label{cor: WDB_fixed}
Suppose the assumptions of Corollary~\ref{cor: WDB} hold.  Then, if $\inf_{u \in (0,1)} \mc{R}_x(u,u) > 0,$
$$
\PX\left(\fp(x) \in B'_{\alpha,n}(x)\right) \rightarrow 1-\alpha \quad \text{almost surely}.
$$
\end{Corollary}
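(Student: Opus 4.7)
}
The plan is to mirror the argument of Corollary~\ref{cor: WDB}, but exploit the fact that when $\mk{F}$ has the fixed support $\I = (0,1)$, the estimator $\hfp(x)$ and the target $\fp(x)$ both live on the same interval, so the artificial truncation parameter $\delta$ can be sent to $0$. The three ingredients I will need are: (i) an extension of the density-level functional central limit theorem of Theorem~\ref{thm: densCLT} from $L^\infty(\I_x^\delta)$ to $L^\infty(0,1)$; (ii) a uniform (in $u,v \in (0,1)$) consistency statement for the covariance estimator $\hat{\mc{R}}_x$; and (iii) an application of the continuous mapping theorem to the standardized supremum functional, which is continuous on the space of bounded functions once $\inf_{u \in (0,1)} \mc{R}_x(u,u) > 0$ is in force.

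First, I would revisit the proof of Theorem~\ref{thm: densCLT}. That argument represents $\sqrt{n}(\hfp(x,u) - \fp(x,u))$ via the delta method applied to the quantile-density pair, producing the vector $c_u$ of density derivatives and squared densities. In the fixed-support case, both $\hfp(x)$ and $\fp(x)$ vanish outside $(0,1)$ and the random transport $T$ satisfies $T(0)=0$, $T(1)=1$ almost surely. Under the strengthened moment and smoothness hypotheses inherited from Corollary~\ref{cor: WDB} (twice differentiability of $T$ with an integrable Lipschitz constant and control of $\normE{\tilde{X}}^6 \sup|T''|^2$), the joint process $\sqrt{n}(\hat{Q}_\oplus(x) - Q_\oplus(x), \hat{q}_\oplus(x) - q_\oplus(x))$ remains tight on the closed interval $[0,1]$, so one obtains weak convergence of $\sqrt{n}(\hfp(x) - \fp(x))$ in $L^\infty(0,1)$ to the Gaussian process $\mc{F}_x$ with covariance $\mc{R}_x$ as in \eqref{eq: FxCov}. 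The assumption $\inf_{u \in (0,1)} \mc{R}_x(u,u) > 0$ prevents degenerate standardization on the open interval.

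Second, I would show that the plug-in estimators $\hat{D}_Q^{(l,m)}$ in \eqref{eq: DmatEst} converge uniformly to their population counterparts on $[0,1]^2$; this follows from Lemma~\ref{lma: Qjoint} together with the fact that the quantile-density derivatives remain bounded in a neighborhood of the boundaries when $T$ is twice differentiable with bounded moments. Combined with the uniform convergence of $\hfp(x,\cdot)$ and its derivative to the true density and its derivative (a consequence of Algorithm~\ref{alg: fest} and Lemma~\ref{lma: regular}), this gives $\sup_{u,v\in (0,1)} |\hat{\mc{R}}_x(u,v) - \mc{R}_x(u,v)| \to 0$ in $\PX$-probability, almost surely in the sample. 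By the continuous mapping theorem applied to $g \mapsto \sup_{u \in (0,1)} |g(u)|/\mc{R}_x(u,u)^{1/2}$, which is continuous on $L^\infty(0,1)$ by the positivity hypothesis,
\[
\sqrt{n}\sup_{u \in (0,1)} \frac{|\hfp(x,u) - \fp(x,u)|}{\hat{\mc{R}}_x(u,u)^{1/2}} \overset{D}{\to} \xi_x':= \sup_{u\in(0,1)} \frac{|\mc{F}_x(u)|}{\mc{R}_x(u,u)^{1/2}},
\]
conditional on the predictors and almost surely in the sample.

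Finally, I would argue that the simulated quantile $\hat{l}_\alpha'$ converges to the $1-\alpha$ quantile $l_\alpha'$ of $\xi_x'$. Conditional on the data, $\hat{\mc{J}}_x \circ \hQp(x,\cdot)$ is Gaussian with covariance $\hat{\mc{R}}_x(\hQp(x,s),\hQp(x,t))$, and uniform consistency of $\hat{\mc{R}}_x$ combined with uniform consistency of $\hQp(x,\cdot)$ implies that this process converges weakly in $L^\infty(0,1)$ to $\mc{M}'$ with the analogous population covariance. Since the distribution of $\xi_x'$ is continuous (a standard consequence of the non-degeneracy of $\mc{R}_x$ and Gaussianity), quantile convergence $\hat{l}_\alpha' \to l_\alpha'$ follows. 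Slutsky's theorem then gives $\PX(\fp(x) \in B'_{\alpha,n}(x)) \to 1-\alpha$ almost surely. The main obstacle I anticipate is step (i): verifying tightness of the standardized process on the \emph{closed} interval when $C_T$ necessarily vanishes at the endpoints, since this is where the fixed-support geometry genuinely differs from the general case of Corollary~\ref{cor: WDB} and where the additional smoothness assumptions on $T$ must be fully exploited to keep the density-derivative factors in $c_u$ from inflating the variance near $0$ and $1$.
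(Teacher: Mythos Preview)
Your proposal is correct and follows essentially the same approach as the paper. The paper's own proof is a two-sentence sketch: it asserts that Theorem~\ref{thm: densCLT} can be adapted to yield weak convergence of $\sqrt{n}(\hfp(x)-\fp(x))$ in $L^\infty(0,1)$ under the fixed-support assumption, and then defers to the argument pattern of Corollary~\ref{cor: CB} (via the positivity of $\mc{R}_x$) for the remainder --- which is exactly your steps (i)--(iii) together with the quantile-convergence argument for $\hat{l}_\alpha'$. Your write-up is considerably more explicit than the paper's, and your flagged obstacle (tightness on the full interval when $C_T$ vanishes at the endpoints) is precisely the content the paper hides behind the phrase ``can be adapted.''
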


\section{Simulations with Indirectly Observed Densities}
\label{sec: sims2}

To understand the possible errors that may be introduced when densities are only observed through a random sample, an additional step was added to the density simulations outlined in Section~\ref{sec: sims}.  After generating the random densities $\mk{F}_i,$ $i = 1,\ldots,n,$ a random sample of scalar variables was generated for each density, each of size 300.  The preliminary estimation of $Q_i,$ $q_i = Q_i'$ and $q_i'$ was performed as described in Section~\ref{sec: sims}.

\begin{figure}[t]
\centering
\subcaptionbox{}[5in]{\includegraphics[scale = 0.9]{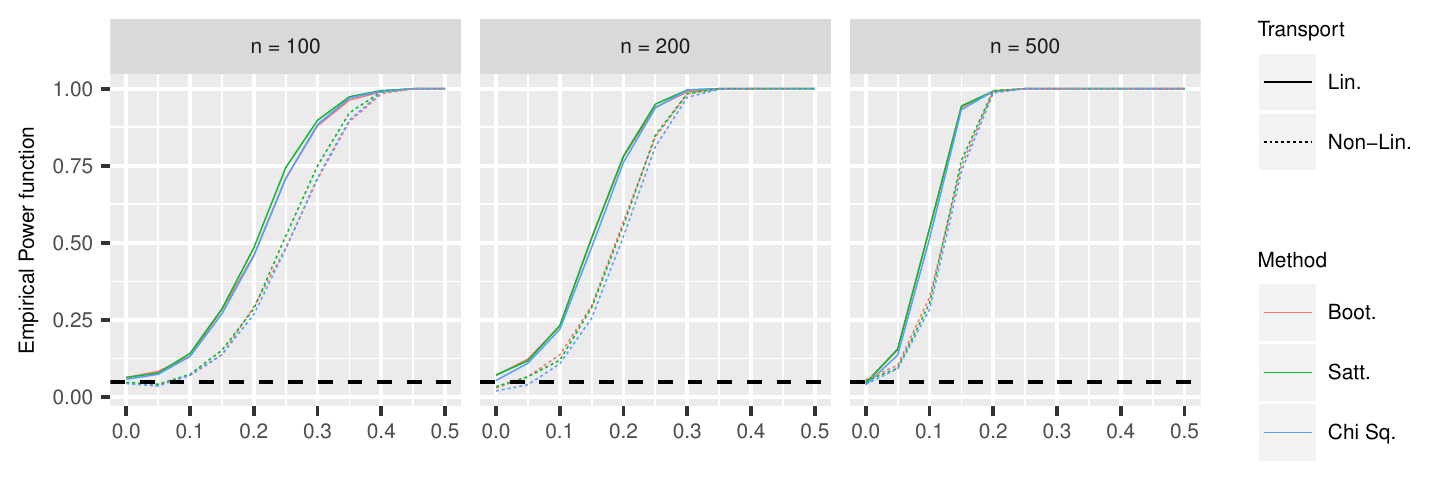}} \\
\subcaptionbox{}[5in]{\includegraphics[scale = 0.9]{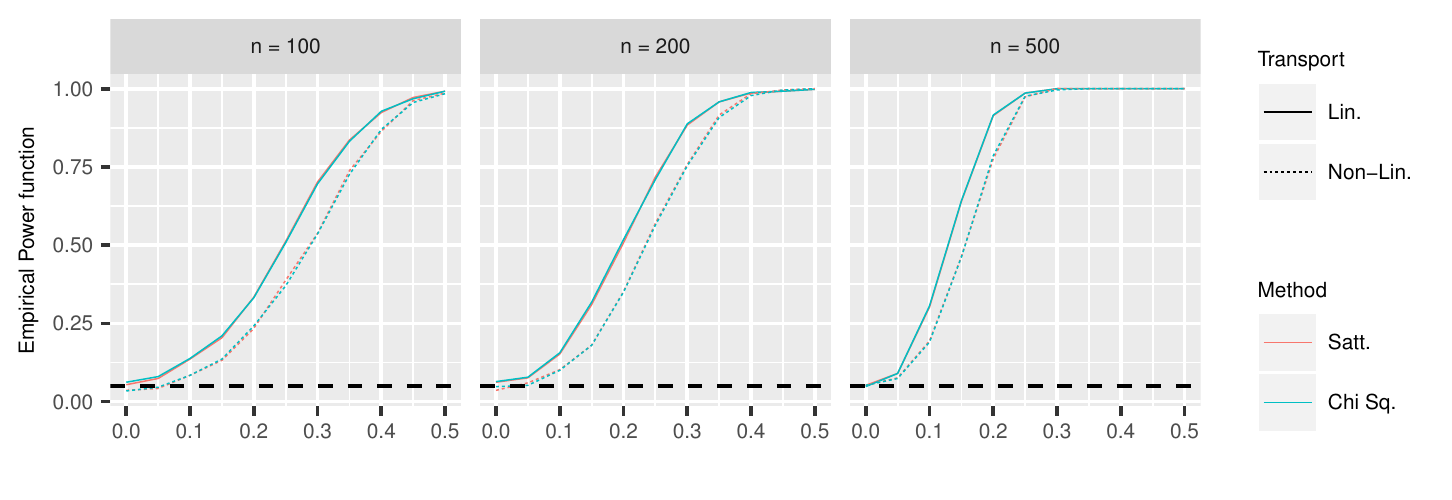}}
\caption{\small Power curves for global (top) and partial (bottom) Wasserstein $F$-tests, for sample sizes $n = 100, 200, 500,$ when densities are only indirectly observed through a random sample. The dotted horizontal line represents the nominal level $\alpha = 0.05.$ \label{fig: test_est}}
\end{figure}

\begin{table}[H]
    \centering
    \caption{Error rates of intrinsic Wasserstein-$\infty$ and Wasserstein density bands for indirectly observed densities over 500 simulation runs}
    \begin{tabular}{l|ccccccc}
\toprule
& \multicolumn{6}{c}{Wasserstein-$\infty$ band*} \\
        &  \multicolumn{3}{c}{linear transport map} & & \multicolumn{3}{c}{nonlinear transport map}  \\\cline{2-4}\cline{6-8}
& n = 100 & n = 200 & n = 500 &  & n = 100  & n = 200 & n = 500 \\ 
\hline 
x = -0.30 & 0.072 & 0.086 & 0.072 & &  0.066 & 0.076 & 0.078 \\ 
x = -0.24 & 0.072 & 0.080 & 0.070  & & 0.078 & 0.070 & 0.084 \\ 
x = -0.18 & 0.072 & 0.078 & 0.088  & & 0.092 & 0.064 & 0.082 \\ 
x = -0.12 & 0.076 & 0.074 & 0.094 &  & 0.084 & 0.062 & 0.084 \\ 
x = -0.06 & 0.072 & 0.070 & 0.088 &  & 0.088 & 0.056 & 0.088 \\ 
x = 0 & 0.066 & 0.082 & 0.092 &  & 0.082 & 0.058 & 0.086 \\ 
x = 0.06 & 0.062 & 0.076 & 0.076 & &  0.084 & 0.062 & 0.082 \\ 
x = 0.12 & 0.062 & 0.068 & 0.082 &  & 0.098 & 0.072 & 0.088 \\ 
x = 0.18 & 0.062 & 0.064 & 0.068 &  & 0.090 & 0.078 & 0.092 \\ 
x = 0.24 & 0.070 & 0.060 & 0.066 &  & 0.094 & 0.088 & 0.090 \\ 
x = 0.30 & 0.064 & 0.064 & 0.072 & &  0.096 & 0.072 & 0.096 \\
\hline
& \multicolumn{6}{c}{Wasserstein density band*} \\
        &  \multicolumn{3}{c}{linear transport map} & & \multicolumn{3}{c}{nonlinear transport map}  \\\cline{2-4}\cline{6-8}
& n = 100 & n = 200 & n = 500 &  & n = 100  & n = 200 & n = 500 \\ 
\hline
x = -0.3 & 0.272 & 0.188 & 0.114 & &  0.072 & 0.048 & 0.044 \\ 
x = -0.24 & 0.228 & 0.148 & 0.100 &  & 0.068 & 0.046 & 0.052 \\ 
x = -0.18 & 0.206 & 0.146 & 0.096  & & 0.050 & 0.048 & 0.054 \\ 
x = -0.12 & 0.182 & 0.120 & 0.098 & &  0.032 & 0.042 & 0.042 \\ 
x = -0.06 & 0.166 & 0.112 & 0.088 &  & 0.030 & 0.034 & 0.038 \\ 
x = 0 & 0.150 & 0.106 & 0.084 & &  0.024 & 0.026 & 0.032 \\ 
x = 0.06 & 0.140 & 0.098 & 0.074  & & 0.030 & 0.028 & 0.034 \\ 
x = 0.12 & 0.134 & 0.112 & 0.068  & & 0.030 & 0.030 & 0.030 \\ 
x = 0.18 & 0.152 & 0.118 & 0.064 &  & 0.036 & 0.032 & 0.034 \\ 
x = 0.24 & 0.160 & 0.116 & 0.060  & & 0.048 & 0.030 & 0.042 \\ 
x = 0.3 & 0.176 & 0.116 & 0.062 &  & 0.044 & 0.040 & 0.040\\
 \bottomrule  
\end{tabular}
\begin{tablenotes}
\centering
      \item *$\delta = 0.1$ used to avoid boundary issue.
    \end{tablenotes}
    \label{tab: cb_est}
\end{table}

\section{Proofs of Propositions~\ref{prop: wMeanUnique} and \ref{prop: wanova}}
\label{sec: props}

\begin{proof}[Proof of Proposition~\ref{prop: wMeanUnique}]
By (A1), $\bQp(t) := E(Q(t))$ is well-defined and finite for any $t \in (0,1).$  Since $Q$ is continuous almost surely by (A2), for any $t \in (0,1)$ and sequence $t_k \rightarrow t,$ $Q(t_k) \rightarrow Q(t)$ almost surely.  Then, by dominated convergence, $\bQp(t_k) \rightarrow \bQp(t).$  By monotonic convergence, we can extend $\bQp$ to $\zo,$ with $\bQp(t) \uparrow \bQp(1)$ as $t \uparrow 1$ and $\bQp(t) \downarrow \bQp(0)$ as $t \downarrow 0.$   Clearly, $\bQp$ is increasing by montonicity of expectation, so that $\bQp$ is a valid quantile function.  Furthermore, $\bQp \in L^2\zo$ by (A1). In fact, for any measure $\mk{M}$ on $\R$ with finite second moment, its quantile function $\mc{Q}$ must be in $L^2[0,1].$  Denote the standard Hilbert norm on this space by $\normLt{\cdot}.$  Thus, the measure corresponding to $\bQp$ is the unique minimizer of $E\left[\normLt{Q - \mc{Q}}^2\right]$ among all quantile functions $\mc{Q} \in L^2\zo,$ so that $\bQp$ represents the quantile function of the Wasserstein mean measure.

Let $t \in (0,1)$ and $\delta$ as in (A2), and recall that $q(t) = Q'(t) = 1/\left[f\circ Q(t)\right] \in (0,\infty)$ by (A2).  If $t_k \rightarrow t$ with $|t_k - t|< \delta,$ (A2) implies that $$\left|\frac{Q(t) - Q(t_k)}{t_k - t}\right| \leq \sup_{|s-t| < \delta} q(s).$$  So, by dominated convergence, we have $\bqp(t) := E(q(t)) = {\bQp}'(t).$  As $0 < \bqp(t) < \infty$ for all $t \in (0,1),$ $\bFp(u) = {\bQp}\inv(u)$ is a bona fide cdf.  Furthermore, $\bFp$ is differentiable with density $\bfp(u) = 1/\left[\bqp\circ \bFp(u)\right]$ for $u \in (\bQp(0), \bQp(1)).$  That $\bfp$ is the unique minimizer of \eqref{eq: wMeanVar} follows from the fact that, for any $g\in \mc{D},$ $E\left[\dw^2(\mk{F},g)\right] = E\left[\normLt{Q - G\inv}^2\right],$ so that $\Vp(\mk{F}) < \infty.$  Lastly, by (A3), there exists $M > 0$ such that $P\left( \sup_{u \in (Q(0), Q(1))} \mk{F}(u) < M\right) > 0.$  Thus, for $t \in (0,1),$
$$
\bqp(t) \geq E\left(\inf_{t \in (0,1)} q(t)\right)  \geq \frac{P\left(\sup_{u \in (Q(0),Q(1))} \mk{F}(u) < M\right)}{M} > 0,
$$
so that $\sup_{u \in (\bQp(0), \bQp(1)} \bfp(u) < \infty.$

The results for conditional means are obtained similarly and the details are omitted.
\end{proof}

\begin{proof}[Proof of Proposition~\ref{prop: wanova}]
Under the Wasserstein regression model,
$$
\Qp(x) = \Pi_{\mc{Q}}\left(E\left[s(X,x)Q(\cdot)\right]\right)
$$
where, for any continuous function $h$ on $\zo,$ $\Pi_{\mc{Q}}(h)$ is the projection of $h$ onto
$$
\mk{Q}:= \left\{\mc{Q} \in L^2[0,1]:\, \mc{Q} \textrm{ is a quantile function}\right\}.
$$
As $\mk{Q}$ is closed and convex, this projection exists and is unique \cp{rych:12}; in fact, it is continuous \cp{groe:10,lin:14}. Under the assumptions, $\Qp(x,t)$ is continuous and strictly increasing for almost all $x,$ so that the projection operator is redundant and, by the form of $s(X,x)$,
\begin{equation}
\label{eq: WmodelQ}
\Qp(x,t) = \bQp(t) + E\left[(X - \mu)\T Q(t)\right]\Sigma\inv(x - \mu).
\end{equation}
This implies that $E(\Qp(X,t)) = \bQp(t)$ for all $t \in [0,1],$ so that $E(S(u)) = E(\Qp(X) \circ \bFp(u)) = u$ for almost all $u \in \Istc = [\bQp(0), \bQp(1)].$ 

For the variance decomposition, since $Q = T \circ \Fp(X) = T\circ S \circ \bFp,$
\begin{equation}
\label{eq: decomp}
\begin{split}
\Vp(\mk{F}) &= E\left[\int_\Ist \left(T \circ S(u) - u\right)^2\bfp(u)\,\d u\right] \\
&=\int_\Ist E\left[(T\circ S(u) - S(u))^2 + (S(u) - u)^2\right. \\&\hspace{1cm}  \left.+ 2(T\circ S(u) - S(u))(S(u) - u)\right]\bfp(u)\,\d u \\
&=\int_\Ist E\left[(T\circ S(u) - S(u))^2\right]\bfp(u)\,\d u + \int_\I E\left[(S(u) - u)^2\right] \bfp(u)\, \d u,
\end{split}
\end{equation}
and the cross product term vanishes since $E(T\circ S(u) | X) = S(u)$ a.s.  Because $E(S(u)) = u$, $\bfp$ is the Wasserstein mean of the random density $\fp(X),$ and the second term in the last line of \eqref{eq: decomp} is $E(\dw^2(\fp(X), \bfp) = \Vp(\fp(X))$ as claimed.  Furthermore, by (T1), $E\left[(T\circ S(u) - S(u))^2|X \right] = C_T(S(u), S(u))$ almost surely for all $u\in \Istc.$  The results follows.
\end{proof}

\section{Proofs of Theorems~\ref{thm: global}--\ref{thm: Ppower} and Corollaries~\ref{cor: Gsize} and \ref{cor: Psize}}
\label{sec: testProofs}

We first introduce some notation.  For $x \in \R^p,$ set $\tilde{x} = (1\,\, x\T)\T \in \R^{p+1},$ and let $\Lambda = E(\tilde{X}\tilde{X}\T),$ $\hat{\Lambda} = n\inv \son \tilde{X}_i \tilde{X}_i\T.$ Define $\Ist = (\bQp(0), \bQp(1))$ and $\Istc$ its closure.  For $u \in \Istc,$ let $R = Q\circ \bFp$, $R_i = Q_i \circ \bFp,$  and define
\begin{equation}
\label{eq: pi_gamma}
\begin{split}
\gamma(u) &= \Cov(X, R(u)), \quad \tilde{\gamma}(u) = \frac{1}{n}\son X_i R_i(u), \\
\pi(u) &= \Lambda\inv E\left[\tilde{X}R(u)\right], \quad \tilde{\pi}(u) = \hLambda\inv\left[\frac{1}{n} \son \tilde{X}_iR_i(u)\right]
\end{split}
\end{equation}    
For $u' \in \Ist,$ let $\pi'(u')$ and $\tilde{\pi}'(u')$ denote the derivatives of $\pi$ and $\tilde{\pi},$ respectively.  The symbol $\otimes$ will denote the Kronecker product of matrices.

For any set $\mc{T}$ and $k \in \mathbb{N},$ define $L_k^\infty(\mc{T})$ as the $k$-fold Cartesian product of all bounded functions on $\mc{T}.$ A statistic $W$ computed from the data $(X_i, \mk{F})$ is said to be $\opx(1)$ if, for all $\epsilon > 0,$ $\PX(|W| > \epsilon)$ converges to zero almost surely.  Likewise, for a nonnegative sequence $a_n,$ $W = \Opx(a_n)$ if
$$
\limsup_{C \rightarrow \infty} \lim_{n \rightarrow \infty} \PX\left(|W| > Ca_n\right) = 0 \quad \text{almost surely}.
$$

%%%%%%%%%%%%%%%%%%%%%%%%%%%%%%%% Proof oF Global Test %%%%%%%%%%%%%%%%%%%%%%%%%%%
\begin{proof}[Proof of Theorem~\ref{thm: global}]
Define $\tilde{\pi}$ as in \eqref{eq: pi_gamma}, and set
\begin{equation}
\label{eq: tQp}
\begin{split}
\tilde{Q}_\oplus(x,t) &= \frac{1}{n}\son s_{in}(x) Q_i(t) 
= \tilde{x}\T \tilde{\pi} \circ \bQp(t), \quad t \in \zo, \\
\tilde{q}_\oplus(x,t) &= \frac{\partial}{\partial t} \tilde{Q}_\oplus(x,t) = \frac{1}{n}\son s_{in}(x)q_i(t) 
= \bqp(t)\tilde{x}\T \tilde{\pi}' \circ \bQp(t) , \quad t \in (0,1),
\end{split}
\end{equation}
as empirical versions of $\Qp(x,t) = \tilde{x}\T \pi \circ \bQp(t),$ $\qp(x,t) = \bqp(t) \tilde{x}\T \pi \circ \bQp(t).$ Let $\Pi_{\mc{Q}}$ be as defined in the proof of Proposition~\ref{prop: wanova}.  Then, for each $i,$ $$\dw(\hfi, \hfp\ast) = \left\lVert \Pi_{\mc{Q}}\left(\tilde{Q}_\oplus(X_i,t)\right) - \hQp\ast\right\rVert_{L^2}, \quad \hbQp(t) = \frac{1}{n}\son Q_i(t).$$
By Lemma~\ref{lma: regular}, the event 
\begin{equation}
\label{eq: An}
A_n = \left\{\min_{1 \leq i \leq n} \inf_{t \in (0,1)} \tilde{q}_\oplus(X_i,t)> 0\right\}
\end{equation}
satisfies $\PX(A_n) \rightarrow 1$ almost surely.  Inded, when $A_n$ holds, $\tilde{Q}_\oplus(X_i,t)$ is strictly increasing for all $i$ and $\Pi_{\mc{Q}}$ is redundant.  

Let $\gamma$ and $\tilde{\gamma}$ be as in \eqref{eq: pi_gamma}.  Since $\hbQp(t) = \tilde{Q}_\oplus(\xbar,t),$ whenever $A_n$ holds, the test statistic becomes
\begin{equation*}
\begin{split}
F\ast_G &= \son \dw^2(\hfi, \hfp\ast) = \son \i01 \left[\tilde{Q}_\oplus(X_i,t) - \tilde{Q}_\oplus(\xbar,t)\right]^2\d t \\
&= \son \int_\Ist \left[(X_i - \xbar)\T\hSigma\inv \tilde{\gamma}(u)\right]^2 \, \bfp(u) \d u \\
&=n\int_{\Ist} \tilde{\gamma}\T(u)\hSigma\inv \tilde{\gamma}(u)\, \bfp(u)\d u,
\end{split}
\end{equation*}
using the change of variables $t = \bFp(u).$

From \eqref{eq: WmodelQ}, we deduce that $S_i(u) = \tilde{X}_i\T \pi(u).$  Hence, when $\mc{H}_G$ holds, $S = \id$ almost surely, so that $\gamma(u) \equiv 0.$ Then, almost surely, by Lemma~\ref{lma: Qjoint}, $\sqrt{n}\hSigma\inv\tilde{\gamma} |X_1,\ldots,X_n$ converges weakly to a zero-mean Gaussian process $\Xi$ on $L_p^\infty(\Istc)$ with covariance $\Cov(\Xi(u), \Xi(v)) = \Sigma\inv C_T(u,v)$.  Applying the continuous mapping theorem,
$$
F\ast_G | X_1,\ldots,X_n \overset{D}{\rightarrow} \int_{\Istc} \Xi\T(u) \Sigma \,\Xi(u) \bfp(u) \d u
$$
almost surely.  The result follows since $K(u,v) = C_T(u,v)$ under $\mc{H}_0^G,$ so that with $(\lambda_k, \phi_k)$ as in \eqref{eq: Keigen} and $\tilde{\Xi} = \Sigma^{1/2}\Xi,$
\[
\int_{\Istc} \normE{\tilde{\Xi}(u)}^2\bfp(u)\d u  
= \sum_{k = 1}^\infty \lambda_k \left[\sum_{j = 1}^p Z_{jk}^2\right],
\]
where $Z_{jk} = \lambda_k^{-1/2}\int_{\Istc}\tilde{\Xi}_j(u)\phi_k(u)\bfp(u)\d u$ are standard Gaussian random variables, independent across both $j$ and $k.$
\end{proof}

%%%%%%%%%%%%%%%%%%%%%%%%%%%%% Proofs of Partial Size and Power %%%%%%%%%%%%%%%%%%%%%%

\begin{proof}[Proof of Corollary~\ref{cor: Gsize}]
Define $\tilde{Q}_\oplus(x,t)$ as in \eqref{eq: tQp}, and set
$$
\tilde{C}_Q(s,t) = \frac{1}{n}\son (Q_i(s) - \tilde{Q}_\oplus(X_i,s))(Q_i(t) - \tilde{Q}_\oplus(X_i,t)).
$$
Let $Z,$ $Z_n$ be (conditional on the data) zero-mean Gaussian processes on $L_p\zo$ with covariance functions $I_p \tilde{C}_Q(s,t)$ and $I_p C_Q(s,t),$ respectively, where $I_p$ is the $p\times p$ identity matrix.  That $\tilde{C}_Q(s,t)$ converges almost surely to $C_Q(s,t)$ for any $s,t \in \zo$ follows by standard arguments, using assumptions (T1) and (T3).  Furthermore, using (T2)--(T4), it follows that $|\tilde{C}_Q(s,t) - \tilde{C}_Q(s,t')| = O(1)|t-t'|$ almost surely, where the $O(1)$ term is uniform over $s,t,t' \in \zo.$  Thus, Lemma~\ref{lma: GP} implies that $Z_n \rsa Z$ in $L_p^\infty\zo$ almost surely.  

Let $\omega_j$ be as in the statement of Theorem~\ref{thm: global}.  If $\tilde{\lambda}_j$ are the eigenvalues of $\tilde{C}_Q,$ observe that $\sum_{j = 1}^\infty \lambda_j \omega_j \overset{D}= \i01 \normE{Z(t)}^2 \d t$ and $\sum_{j = 1}^\infty \tilde{\lambda}_j\omega_j \overset{D}= \i01 \normE{Z_n(t)}^2\d t$ almost surely. Let $\tilde{b}_\alpha^G$ be the $1-\alpha$ quantile of $\sum_{j = 1}^\infty \tilde{\lambda}_j \omega_j.$  Then, by the continuous mapping theorem, $\tilde{b}_\alpha^G \rightarrow b_\alpha^G$ almost surely.

Let $A_n$ be as in \eqref{eq: An}, so that $\hat{C}_Q = \tilde{C}_Q$ when $A_n$ holds.  Then, by Lemma~\ref{lma: regular}, for any $\epsilon > 0,$
\[
\begin{split}
\beta_n^G &= \PX\left(F_G\ast > \hat{b}_\alpha^G\right) \\
&\leq \PX\left(F_G\ast > b_\alpha^G - \epsilon\right) + \PX(A_n^c) + \PX\left(|\tilde{b}_\alpha^G - b_\alpha^G| > \epsilon\right) \\
&= P\left(\sum_{j = 1}^\infty \lambda_j \omega_j > b_\alpha^G - \epsilon\right) + o(1) \quad \text{almost surely}.
\end{split}
\]
Since $\epsilon$ was arbitrary, $\limsup_{n \rightarrow \infty} \beta_n^G \leq \alpha$ almost surely.  A similar lower bound shows $\liminf_{n\rightarrow \infty} \beta_n^G \geq \alpha$ almost surely, proving the result.
\end{proof}

\begin{proof}[Proof of Theorem~\ref{thm: Gpower}]
Let $\epsilon > 0$ be given.  Let $A_n$ be as in \eqref{eq: An}, and set 
\begin{equation}
\label{eq: rhodef}
\rho(u) = \Sigma\inv \gamma(u), \quad \hat{\rho}(u) = \hSigma\inv \tilde{\gamma}(u).  
\end{equation}
For any model $\mc{G} \in \mk{G}_{A,n},$ define 
\begin{equation}
\label{eq: Vdef}
\begin{split}
W &= W_{\mc{G}} = E\left[\dw^2(\fp(X), \bfp)\right] = \int_\Ist \rho\T(u)\Sigma\rho(u)\bfp(u) \d u, \\
\hat{W}_n &= \int_\Ist \rho\T(u) \hSigma \rho(u) \bfp(u) \d u,
\end{split}
\end{equation}
and set
\begin{equation}
\label{eq: Delta}
\Delta_n = \left|2\int_{\Ist} \left[\hat{\rho}(u) - \rho(u)\right]\T \hSigma \rho(u)\bfp(u)\d u +\hat{W}_n - W\right|.
\end{equation}
Lastly, define
\[
\label{eq: rj}
\begin{split}
r_{1n} &= \sup_{\mc{G} \in \mk{G}_{A,n}}P\left(\sup_{m \geq n} \PX\left( \Delta_m > \frac{W}{2}\right) \geq \epsilon\right), \\
r_{2n} &= \sup_{\mc{G} \in \mk{G}_{A,n}} P\left(\sup_{m \geq n} \PX(A_m^c) \geq \epsilon\right) \\
r_{3n} &= \sup_{\mc{G} \in \mk{G}_{A,n}} P\left(\sup_{m \geq n} \PX\left( \hat{b}_\alpha^G > R\right) \geq \epsilon\right) 
\end{split}
\]
for some $R>0.$  It is shown in Lemma~\ref{lma: rj} that, for any $\epsilon > 0,$ $r_{1n}$ and $r_{2n}$ are both $o(1)$ and that there exists $R > 0$ such that $r_{3n} = o(1).$

Next, on the set $A_n,$ one can write the test statistic $F_G\ast $ as
\[
\begin{split}
F_G\ast &= n\int_\Ist (\hat{\rho}(u) - \rho(u))\T\hSigma(\hat{\rho}(u) - \rho(u)) \bfp(u) \d u \\
&\hspace{0.5cm} + 2n\int_\Ist (\hat{\rho}(u) - \rho(u))\T\hSigma \rho(u) \bfp(u) \d u + n \hat{W}_n.
\end{split}
\]
Then, for $n_0$ large enough that $nW \geq na_n^2 > 2R$ for all $n \geq n_0$ and $\mc{G} \in \mk{G}_{A,n}$, for such $n$ we immediately obtain
\[
\begin{split}
\beta_n^G &= \PX(F_G\ast > \hat{b}_\alpha) \geq \PX(F_G\ast > R) - \PX(\hat{b}_\alpha^G > R) \\
&\geq \PX\left(n \int_\Ist (\hat{\rho}(u) - \rho(u))\T\hSigma(\hat{\rho}(u) - \rho(u)) \bfp(u) \d u > R - n\left[W -\Delta_n\right]\right) \\
&\hspace{2cm} - \PX(A_n^c) - \PX(\hat{b}_\alpha^G > R) \\
&\geq 1 - \PX\left(\Delta_n > \frac{W}{2}\right) - \PX(A_n^c) - \PX(\hat{b}_\alpha^G > R).
\end{split}
\]
As a result, for any $\epsilon > 0,$ there exists $R > 0$ such that, for large $n,$
\[
\inf_{\mc{G} \in \mk{G}_{A,n}} P\left(\inf_{m\geq n} \beta_m^G \geq 1 - 3\epsilon\right) \geq 1 - r_{1n} - r_{2n} - r_{3n} = 1 - o(1).
\]

\end{proof}

%%%%%%%%%%%%%%%%%%%%%%%%%%%  Proof of Partial Test %%%%%%%%%%%%%%%%%%%%%%%%%%

\begin{proof}[Proof of Theorem~\ref{thm: partial}]

In analogy to \eqref{eq: tQp}, define
\begin{equation}
\label{eq: tQpz}
\begin{split}
\tilde{Q}_{0,\oplus}(y,t) &= \frac{1}{n}\son s_{in,0}(y) Q_i(t) 
= \tilde{y}\T \tpiy \circ \bQp(t), \quad t \in \zo, \\
\tilde{q}_{0,\oplus}(y,t) &= \frac{\partial}{\partial t} \tilde{Q}_{0,\oplus}(y,t) = \frac{1}{n}\son s_{in,0}(y)q_i(t) 
= \bqp(t)\tilde{y}\T \tpiy' \circ \bQp(t) , \quad t \in (0,1),
\end{split}
\end{equation}
as empirical versions of 
\[
\begin{split}
Q_{0,\oplus}(y,t) &= \tilde{y}\T \piy \circ \bQp(t), \\
q_{0,\oplus}(y,t) &= \bqp(t) \tilde{y}\T \piy \circ \bQp(t).
\end{split}
\]
Note that $\hQp(t) = \ybar\T\tpiy \circ \bQp(t) = \xbar\T \tilde{\pi} \circ \bQp(t).$

By a similar argument as in the proof of Theorem~\ref{thm: global}, letting
\begin{equation}
\label{eq: An0}
A_{n,0} = \left\{ \min_{1 \leq i \leq n} \inf_{t \in (0,1)} \tilde{q}_{0,\oplus}(Y_i, t) > 0\right\},
\end{equation}
then $\PX(A_{n,0}) \rightarrow 1$ almost surely under $\mc{H}_0^G.$  With $\tilde{\gamma}$ as in \eqref{eq: pi_gamma}, partition it as $(\tilde{\gamma}_Y, \tilde{\gamma}_Z).$  When both $A_n$ and $A_{n,0}$ hold, 
\[
\begin{split}
\dw^2(\hfi, \hfp) &= \int_\Ist \left[(X_i - \xbar)\T\hSigma\inv \tilde{\gamma}(u)\right]^2 \bfp(u)\d u, \\
\dw^2(\hfzi, \hfp) &= \int_\Ist \left[(Y_i - \ybar)\T\hSigyy\inv \tgammay(u)\right]^2 \bfp(u)\d u.
\end{split}
\]
Letting $$\hat{\Gamma} = \left(\begin{array}{c c} I_q & 0 \\ \hSigzy \hSigyy\inv & 0\end{array}\right),$$ it can be verified that 
$$
\hSigma\inv \hat{\Gamma} = \begin{pmatrix}
\hSigyy\inv & 0 \\ 0 & 0
\end{pmatrix}, \quad
(I - \hat{\Gamma})\hSigma = \begin{pmatrix}
0 & 0 \\ 0 & \hSigma_{Z|Y}
\end{pmatrix}.
$$

Next, define $\rho,\, \hat{\rho}$ as in \eqref{eq: rhodef}, and partition them as $\rho = (\rho_Y , \rho_Z),$ $\hat{\rho} = (\hat{\rho}_Y, \hat{\rho}_Z).$  Then, under $\mc{H}_0^P,$ $\rho_Z(u) \equiv 0.$  By straightforward algebra, one can verify that, under $\mc{H}_0^P$ and on the set $A_n \cap A_{n,0},$ the test statistic becomes
\begin{equation*}
\begin{split}
F\ast_P &=  \son \int_\Ist \left\{ \left[(X_i - \xbar)\T \hSigma\inv \tilde{\gamma}(u)\right]^2 - \left[(Y_i - \ybar)\T \hSigyy\inv \tilde{\gamma}_Y(u)\right]^2\right\} \bfp(u) \d u \\
&= n \int_\Ist \left[\tilde{\gamma}\T(u) \hSigma\inv \tilde{\gamma}(u) - \tilde{\gamma}\T(u)\hSigma\inv \hat{\Gamma}\tilde{\gamma}(u)\right] \bfp(u) \d u \\
&=n \int_\Ist \hat{\rho}\T(u) (I - \hat{\Gamma})\hSigma \hat{\rho}(u) \bfp(u) \d u \\
&= n\int_\Ist \hat{\rho}_Z(u)\T \hSigma_{Z|Y}\hat{\rho}_Z(u) \bfp(u) \d u.
\end{split}
\end{equation*}

Thus, under $\mc{H}_0^P,$ by applying Lemma~\ref{lma: Qjoint}, the continuous mapping theorem, and the law of large numbers, we see that
$$
F\ast_P | X_1,\ldots,X_n \overset{D}{\rightarrow} \int_{\I} \normE{\tilde{\Xi}(u)}^2 \bfp(u) \d u
$$
almost surely, where $\tilde{\Xi}(u)$ is a zero-mean Gaussian process on $L^\infty_{p-q}(\Ist)$ with covariance
\begin{equation}
\label{eq: partialCov}
\Sigma_{Z|Y}^{-1/2}J\T E\left[(X - \mu)(X - \mu)\T C_T(S(u), S(v))\right]J\Sigma_{Z|Y}^{-1/2},
\end{equation}
where $J\T = (-\Sigzy\Sigyy\inv \, I_{p-q}).$  By our assumptions, this is the kernel of a self-adjoint, trace-class operator, and so has nonnegative eigenvalues $\tau_j$ with $\sum_{j = 1}^\infty \tau_j < \infty.$  By the Karhunen-Lo\`eve representation,
$$
 \int_{\I} \normE{\tilde{\Xi}(u)}^2 \bfp(u) \d u \overset{D}{=} \sum_{j = 1}^\infty \tau_j \xi_j^2,
$$
where $\xi_j$ are i.i.d.\ standard normal random variables.

Under the additional assumptions that $E(Z|Y)$ is linear in $Y$ and $\Var(Z|Y)$ is constant, it follows that $J\T(X - \mu) = Z - E(Z|Y)$ and $\Var(J\T(X - \mu)|Y) = \Sigma_{Z|Y}$. Consequently, the covariance of $\tilde{\Xi}(u)$ becomes
\[
\Sigma_{Z|Y}^{-1/2}J\T E\left[(X - \mu)(X - \mu)\T C_T(S(u), S(v))\right]J\Sigma_{Z|Y}^{-1/2} = I_{p-q}K(u,v)
\]
and the result follows.
\end{proof}

%%%%%%%%%%%%%%%%%%%%% Proofs of Partial Power and Size %%%%%%%%%%%%%%%%%%%%%%%%%

\begin{proof}[Proof of Corollary~\ref{cor: Psize}]
The proof is similar to that of Corollary~\ref{cor: Gsize} and is omitted.
\end{proof}

\begin{proof}[Proof of Theorem~\ref{thm: Ppower}]
The proof is similar to that of Theorem~\ref{thm: Gpower} and is omitted.
\end{proof}

\section{Proofs of Theorems~\ref{thm: fittedVal} and \ref{thm: densCLT} and Corollaries~\ref{cor: CB}--\ref{cor: WDB_fixed}}
\label{sec: cbProofs}

%%%%%%%%%%%%%%%%%%%%%%% Proof of CLT and coverage for quantile band %%%%%%%%%%%%%%%%%%%%%%%%%%%%%%%

\begin{proof}[Proof of Theorem~\ref{thm: fittedVal}]
First, note that $\Qp(x)\circ\bFp = \tilde{x}\T\pi$.  With $\tilde{q}_\oplus$ defined in \eqref{eq: tQp}, whenever $\inf_{t \in (0,1)} \tilde{q}_\oplus(x,t) > 0,$ we have $\hQp(x) \circ \bFp = \tilde{x}\T \tilde{\pi},$ implying that
\[
\begin{split}
\sqrt{n}\left(\hat{V}_x(u) - u\right) &= \sqrt{n}\left(\hQp(x) - \Qp(x)\right)\circ \Fp(x) \\
&= \sqrt{n}\tilde{x}\T\left(\tilde{\pi} - \pi\right)\circ \bQp \circ \Fp(x).
\end{split}
\]
Let $\mc{S}_1$ be the first $p+1$ coordinates of the multivariate Gaussian process $\mc{S}$ defined in Lemma~\ref{lma: Qjoint}, so that
$$
\Cov(\mc{S}_1(u), \mc{S}_1(v)) = \Lambda\inv E\left(\tilde{X}\tilde{X}\T C_T(S(u), S(v))\right) \Lambda\inv = \tilde{K}(u,v), \quad u,v \in \Istc.
$$ 
Then, by the continuous mapping theorem and Lemma~\ref{lma: regular},
$$
\sqrt{n}(\hat{V}_x - \mathrm{id})|X_1,\ldots,X_n \rsa \tilde{x}\T\mc{S}_1\circ \bQp \circ \Fp(x)=: \mc{M}_x \quad \text{almost surely,}
$$
where $\mc{M}_x$ is a zero mean Gaussian process on $L^\infty(\I_x).$  For $u,v \in \I_x,$ set $u_x,$ $v_x$ as in the statement of the theorem.  Then
\begin{equation}
\label{eq: CxCov}
\begin{split}
\Cov(\mc{M}_x(u), \mc{M}_x(v)) &= \tilde{x}\T\tilde{K}(\bQp \circ \bFp(x,u), \bQp\circ\bFp(x,v)) \tilde{x} \\
&= \tilde{x}\T \tilde{K}(u_x, v_x)\tilde{x}.
\end{split}
\end{equation}
\end{proof}

\begin{proof}[Proof of Corollary~\ref{cor: CB}]
Define $\tilde{Q}_\oplus(x,t)$ as in \eqref{eq: tQp}, and set
\[
\begin{split}
B_Q(s,t) &= E\left[\tilde{X}\tilde{X}\T C_T(\Qp(x,s), \Qp(x,t))\right] \\
\tilde{B}_Q(s,t) &=\frac{1}{n}\son \tilde{X}_i\tilde{X}_i\T (Q_i(s) - \tilde{Q}_\oplus(X_i,s))(Q_i(t) - \tilde{Q}_\oplus(X_i,t)), \\
\tilde{D}_Q(s,t) &=\hLambda\inv \tilde{B}_Q(s,t)\hLambda\inv.
\end{split}
\]
In a similar way to the proof of Corollary~\ref{cor: Gsize}, when can show that $$\normF{\tilde{B}_Q(s,t) - B_Q(s,t)}$$ converges to zero almost surely, for any $s,t \in \zo,$ and that $$\normF{\tilde{B}_Q(s,t) - \tilde{B}_Q(s,t')} = O\left(|t-t'|\right)$$ almost surely, where the $O(1)$ term is uniform over $s,t,t' \in \zo.$  Let $\tilde{\mc{N}}_x,$ $\mc{N}_x$ be zero-mean Gaussian processes on $L^\infty\zo$ with covariance $\tilde{x}\T\tilde{D}_Q(s,t)\tilde{x}$ and $\tilde{x}\T D_Q(s,t) \tilde{x},$ respectively.  Then, by Lemma~\ref{lma: GP} and Slutsky's lemma, $\tilde{\mc{N}}_x \rsa \mc{N}_x$ almost surely in $L^\infty\zo.$  Hence, if $\tilde{m}_\alpha$ is the $1-\alpha$ quantile of $$\sup_{t \in \zo}\left[\tilde{x}\T\tilde{D}_Q(t,t)\right]^{-1/2} |\tilde{\mc{N}}_x(t)|,$$ we have $\tilde{m}_\alpha \rightarrow m_\alpha$  almost surely.  However, with $A_n$ as in \eqref{eq: An}, $\hat{m}_\alpha = \tilde{m}_\alpha$ whenever $A_n$ holds, so that $\hat{m}_\alpha - m_{\alpha} = \opx(1)$ by Lemma~\ref{lma: regular}.

Note that the above argument also shows that $\sup_{t \in \zo} \normF{\tilde{D}_Q(t,t) - D_Q(t,t)}$ converges to zero almost surely.  By the assumption that $\inf_{u \in \Ist} C_T(u,u) > 0,$ it is clear that $\inf_{t \in \zo} \tilde{x}\T D_Q(t,t)\tilde{x} > 0.$  Hence, since $\tilde{D}_Q = \hat{D}_Q$ on $A_n,$ we have
$$
\sup_{u \in \I_x} \left| \frac{\hat{\mc{C}}_x(\hat{V}_x(u), \hat{V}_x(u))}{\mc{C}_x(u,u)} -1 \right| = \sup_{t \in \zo} \left|\frac{\tilde{x}\T \hat{D}_Q(t,t)\tilde{x}}{\tilde{x}\T D_Q(t,t) \tilde{x}} - \right| = \opx(1).
$$
Then, for any $\epsilon,\eta > 0,$
\[
\begin{split}
\liminf_n & \PX\left(\fp(x) \in C_{\alpha,n}(x)\right)  \\& = \liminf_n \PX\left(\sup_{v \in \hat{\I}_x} \frac{\sqrt{n}\left|\hat{V}\inv_x(v) -v\right|}{\hat{\mc{C}}_x(v,v)^{1/2}} \leq \hat{m}_\alpha \right) \\
&\geq \liminf_{n} \PX\left(\sup_{u \in \I_x} \frac{\sqrt{n}|\hat{V}_x(u) - u|}{\mc{C}_x(u,u)^{1/2}} \leq \hat{m}_\alpha(1 - \epsilon)\right) \\
&\hspace{1cm} - \limsup_n \PX\left(\inf_{t \in \zo} \frac{\hat{\mc{C}}_x(\hat{V}_x(u), \hat{V}_x(u))}{\mc{C}_x(u,u)} > (1-\epsilon)^2\right)\\
&\geq \liminf_{n} \PX\left(\sup_{u \in \I_x} \frac{\sqrt{n}|\hat{V}_x(u) - u|}{\mc{C}_x(u,u)^{1/2}} \leq (m_\alpha - \eta)(1-\epsilon)\right)  \\
&\hspace{1cm} - \limsup_n \PX\left(|\hat{m}_\alpha - m_\alpha| > \eta\right) +o(1) \\
& = P\left(\zeta_x \leq (m_\alpha - \eta)(1 - \epsilon)\right) + o(1),
\end{split}
\]
almost surely.  Letting $\epsilon, \eta \rightarrow 0$ yields $\liminf_n\PX\left(\fp(x) \in C_{\alpha,n}(x)\right) \geq 1-\alpha$ almost surely.  A similar upper bound yields the result.
\end{proof}

%%%%%%%%%%%%%%%%%%%%%%% Proof of CLT and coverage for density band %%%%%%%%%%%%%%%%%%%%%%%%%%%%%%%

\begin{proof}[Proof of Theorem~\ref{thm: densCLT}]
Let $\mc{S} = (\mc{S}_1,\mc{S}_2)$ be as in Lemma~\ref{lma: Qjoint}.  Now, define the map $\psi_x: L^\infty_{p+1} (\Istc) \times L^\infty_{p+1}(\Ist) \rightarrow L^\infty\zo \times L^\infty(0,1)$ by $$\psi_x(h_1,h_2) = (\tilde{x}\T h_1\circ \bQp, \tilde{x}\T (h_2 \circ \bQp)\bqp).$$  With $\tilde{Q}_\oplus(x)$ and $\tilde{q}_\oplus(x)$ as defined in \eqref{eq: tQp}, and $\pi$ and $\tilde{\pi}$ as in \eqref{eq: pi_gamma}, we have $\psi_x(\tilde{\pi}, \tilde{\pi}') = (\tilde{Q}_\oplus(x), \tilde{q}_\oplus(x))$ and $\psi_x(\pi,\pi') = (\Qp(x), \qp(x)).$  Then, by the continuous mapping theorem,
$$
\sqrt{n}((\tilde{Q}_\oplus(x), \tilde{q}_\oplus(x)) - (\Qp(x), \qp(x))) | X_1,\ldots,X_n \rsa \psi_x(\mc{S}_1,\mc{S}_2) =: \mc{T}_x
$$
in $L^\infty\zo \times L^\infty(0,1)$ almost surely.  Here, $\mc{T}_x(t,t') = (\mc{T}_{x1}(t), \mc{T}_{x2}(t'))$ is a zero-mean Gaussian process with covariance
\begin{equation}
\label{eq: TxCov}
\Cov(\mc{T}_x(s,s'), \mc{T}_x(t,t')) = E\left[\left(\tilde{x}\T\Lambda\inv \tilde{X}\right) \tilde{\mc{L}}(s,s', t,t') \left(\tilde{x}\T\Lambda\inv \tilde{X}\right)\right],
\end{equation}
where, letting $U_s = \Qp(X,s)$,
$$
\tilde{\mc{L}}(s,s',t,t') = \begin{pmatrix}
C_T(U_s, U_t) & C_T^{(0,1)}(U_s, U_{t'})\qp(X, t') \\
\qp(X, s')C_T^{(1,0)}(U_{s'}, U_t) & \qp(X, s')C_T^{(1,1)}(U_{s'}, U_{t'})\qp(X, t')
\end{pmatrix}.
$$
In particular, if $\mathbb{K}$ and $\mathbb{D}$ are defined as in Section~\ref{ss: WDB}, then for $s,t \in (0,1),$ 
$$
\Cov(\mc{T}_x(s,s), \mc{T}_x(t,t)) = \tilde{x}\otimes \mathbb{K}(\bQp(s), \bQp(t)) \otimes \tilde{x} = \tilde{x}\T \otimes \mathbb{D}(s,t), \otimes \tilde{x}.
$$

Observe that $\fp(x) = 1/\left[\qp(x) \circ \Fp(x)\right]$ and that, whenever $\inf_{t \in (0,1)} \tilde{q}_\oplus(x,t) > 0,$ $\hfp(x) = 1/\left[\tilde{q}_\oplus(x) \circ \tilde{F}_\oplus(x)\right]$ is well-defined, where $\tilde{F}_\oplus(x) = \tilde{Q}_\oplus(x) \inv.$  Moreover, by the assumption that $\bfp$ is continuously differentiable, the same holds for $\fp(x).$  Now, by Lemmas~\ref{lma: regular} and \ref{lma: hadamard}, the delta method applies, so that
$$
\sqrt{n}(\hfp(x) - \fp(x))|X_1,\ldots,X_n \rsa \left[\frac{\partial}{\partial u} \fp(x)\right]\mc{T}_{x1}\circ\Fp(x) - \left[\fp(x)^2\right]\mc{T}_{x2}\circ \Fp(x) =: \mc{F}_x
$$
in $L^\infty(\I_x^\delta)$ almost surely.  With $c_u = ((\partial / \partial u)\fp(x, u), -\fp(x,u)^2)\T$ and $u_x = \bQp \circ \Fp(x,u),$ the covariance of $\mc{F}_x$ is
\[
\begin{split}
\Cov(\mc{F}_x(u), \mc{F}_x(v)) &= c_u\T \Cov(\mc{T}_x(\Fp(x,u), \Fp(x,u)), \mc{T}_x(\Fp(x,v), \Fp(x,v))) c_v \\
&= c_u\T \left[ \tilde{x}\T \otimes \mathbb{K}(u_x, v_x) \otimes \tilde{x}\right] c_v.
\end{split}
\]
\end{proof}

\begin{proof}[Proof of Corollary~\ref{cor: WDB}]
The proof is similar to that of Corollary~\ref{cor: CB}, using the additional assumptions in Theorem~\ref{thm: fittedVal}, and is omitted.
\end{proof}

\begin{proof}[Proof of Corollary~\ref{cor: CB_fixed}]
By (T1), for any $\delta > 0,$ $\inf_{u \in [\delta, 1-\delta]} C_T(u,u) > 0.$  The remainder of the proof follows in the same was as that of Corollary~\ref{cor: CB}.
\end{proof}

\begin{proof}[Proof of Corollary~\ref{cor: WDB_fixed}]
Under the assumption of fixed support, Theorem~\ref{thm: densCLT} can be adapted to show that
$$
\sqrt{n}(\hfp(x) - \fp(x)) | X_1,\ldots,X_n \rsa \mc{V}_x'
$$
in $L^\infty(0,1)$ almost surely.  Then, since $\inf_{u \in (0,1)} \mc{R}_x(u,u) > 0,$ one can follow the steps of the proof of Corollary~\ref{cor: CB} to show the coverage result.
\end{proof}

\section{Auxiliary Lemmas}
\label{sec: lmas}

%%%%%%%%%%%%%%%%%%%% Lemma - Conditional CLT %%%%%%%%%%%%%%%%%%%%%%%

\begin{Lemma}
\label{lma: Qjoint}
Assume (T1)--(T4) hold.  Then there exists a zero-mean Gaussian process $\mc{S}$ on $L_{p+1}^\infty(\Istc) \times L_{p+1}^\infty(\Ist)$ such that
$$
\sqrt{n}\left(\begin{pmatrix} \tilde{\pi} \\ \tilde{\pi}'\end{pmatrix} - \begin{pmatrix} \pi \\ \pi'\end{pmatrix} \right)|X_1,\ldots,X_n \rsa \mc{S}
$$
almost surely.  Furthermore, for $u,v \in \Istc$ and $u',v' \in \Ist$,
\begin{equation*}
\Cov(\mc{S}(u,u'), \mc{S}(v,v')) =  E\left[\left(\Lambda\inv\tilde{X}\right) \otimes \mc{L}(u,u', v,v') \otimes \left(\Lambda\inv\tilde{X}\right)\T\right],
\end{equation*}
where
$$
\mc{L}(u,u',v,v') =  \begin{pmatrix} C_T(S(u), S(v)) & C_T^{(0,1)}(S(u), S(v'))S'(v') \\ S'(u')C_T^{(1,0)}(S(u'), S(v)) & S'(u')C_T^{(1,1)}(S(u'),S(v'))S'(v') \end{pmatrix}.
$$

\end{Lemma}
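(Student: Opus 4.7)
The plan is to express $\sqrt{n}(\tilde{\pi}-\pi)$ and $\sqrt{n}(\tilde{\pi}'-\pi')$ as normalized sums of independent, conditionally mean-zero, $L^\infty$-valued random functions, and then to apply a Banach-space central limit theorem for independent sums, conditional on $(X_i)$. The key algebraic identity is that $S_i(u) = \tilde{X}_i^\top \pi(u)$ by \eqref{eq: WmodelQ}, so $n^{-1}\sum_i \tilde{X}_i S_i(u) = \hat{\Lambda}\pi(u)$; and from $R_i = T_i\circ S_i$ with $E(T_i(u)|X_i)=u$, one gets $E(R_i(u)|X_i) = S_i(u)$ and, after exchanging differentiation and conditional expectation under (T2)--(T3), also $E(R_i'(u')|X_i) = S_i'(u')$. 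Writing $W_i(u) = R_i(u) - S_i(u)$ and $V_i(u') = [T_i'(S_i(u')) - 1] S_i'(u')$, this yields
\[
\sqrt{n}(\tilde{\pi}(u) - \pi(u)) = \hat{\Lambda}^{-1} n^{-1/2}\sum_i \tilde{X}_i W_i(u), \quad \sqrt{n}(\tilde{\pi}'(u') - \pi'(u')) = \hat{\Lambda}^{-1} n^{-1/2}\sum_i \tilde{X}_i V_i(u').
\]
Since $\hat{\Lambda}\to\Lambda$ almost surely by the SLLN and (T3), Slutsky's lemma reduces the problem to conditional weak convergence of the centered process $Z_n(u,u') = n^{-1/2}\sum_i(\tilde{X}_i W_i(u), \tilde{X}_i V_i(u'))$ in $L^\infty_{p+1}(\Istc)\times L^\infty_{p+1}(\Ist)$.

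Next I would verify the two ingredients of Banach-space weak convergence. For finite-dimensional marginals at any fixed points $(u_\ell, u_\ell')_{\ell \le k}$, the summands are, conditional on $(X_i)$, independent, zero-mean vectors with second moments bounded using (T1) and the envelope $\sup_u C_T(u,u)<\infty$; the Lindeberg-Feller CLT applies, the Lindeberg condition following (almost surely in $(X_i)$) from the uniform-integrability-type moments in (T3) combined with the SLLN. Tightness in $L^\infty$ follows from a modulus-of-continuity bound: for any $u,v \in \Istc$,
\[
|W_i(u) - W_i(v)| \leq \sup_w |T_i'(w) - 1| \cdot |S_i(u) - S_i(v)|,
\]
with an analogous estimate for $V_i$ obtained through the random Lipschitz constant $L_i$ of $T_i'$ from (T2); combined with $\sup_u |S_i'(u)| \leq \sup_t \qp(X_i,t)/\inf_u \bfp(u)$ (available by (T4)), this produces a random envelope for the modulus of $Z_n$ whose second moment is finite by (T3). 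A standard maximal-inequality/chaining argument for independent $L^\infty$-valued sums then yields conditional asymptotic equicontinuity of $Z_n$.

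The conditional covariance of $Z_n(u,u')$ with $Z_n(v,v')$ is, by direct computation, $n^{-1}\sum_i(\tilde{X}_i\tilde{X}_i^\top)\otimes \mc{L}_i(u,u',v,v')$, where $\mc{L}_i$ is $\mc{L}$ with $S$ replaced by $S_i$; the SLLN together with continuity of $C_T$ and its partials (from (T1)--(T2)) delivers the limit $E[\tilde{X}\tilde{X}^\top \otimes \mc{L}(u,u',v,v')]$ almost surely, and conjugation by $\Lambda^{-1}$ identifies the stated covariance of $\mc{S}$. The main obstacle is arranging for the above bounds to hold \emph{almost surely in $(X_i)$} rather than merely in $P$-probability, so that the resulting conditional CLT is an almost-sure statement in the $X$-sequence; the stronger moments $E(\normE{\tilde{X}}^4\sup_u|T'(u)|^2)$ and $E(\normE{\tilde{X}}^6 L^2)$ in (T3) are the precise inputs used, via Markov's inequality and Borel--Cantelli, to upgrade the $P$-probabilistic controls of the modulus envelope and the Lindeberg remainder to pathwise statements valid outside a single null set of $X$-sequences.
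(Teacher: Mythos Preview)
Your proposal is correct and follows essentially the same route as the paper's proof. The paper invokes the Jain--Marcus framework (Example 2.11.13 in van der Vaart and Wellner) directly on summands $Z_{ni}(u,u') = n^{-1/2}(\hat{\Lambda}^{-1}\tilde{X}_i)\otimes(R_i(u),R_i'(u'))^\top$, verifying the Lipschitz envelope condition $\sum_i \EX(M_{ni}^2)=O(1)$, the uniform Lindeberg condition (via $\max_i\|\tilde{X}_i\|=o(n^{1/2})$ a.s.\ from Borel--Cantelli), and pointwise covariance convergence; your centering $W_i=R_i-S_i$, $V_i=R_i'-S_i'$ together with the algebraic identity $n^{-1}\sum_i\tilde{X}_iS_i=\hat{\Lambda}\pi$ is exactly the same decomposition, only with $\hat{\Lambda}^{-1}$ factored out and handled by Slutsky rather than carried inside the summands---a cosmetic difference that leads to the same envelope and Lindeberg calculations.
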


\begin{proof}
The proof follows from a multivariate extension (see Problem 1.5.3 of \ci{well:96}) of Example 2.11.13 in \ci{well:96}, which is itself a generalization of a result by \ci{jain:75}.  Letting $\hat{\Lambda} = n\inv\son\tilde{X}_i\tilde{X}_i\T,$ for $(u,u') \in \Istc \times \Ist,$ set $Z_{ni}(u,u') = n^{-1/2}\left(\hLambda\inv \tilde{X}_i\right) \otimes (R_i(u), R_i'(u'))\T.$  Then
$$
\sqrt{n}\begin{pmatrix}
\tilde{\pi}(u) \\ \tilde{\pi}'(u')
\end{pmatrix}
= \son Z_{ni}(u,u'), \quad 
\sqrt{n}\begin{pmatrix}
{\pi}(u) \\ {\pi}'(u')
\end{pmatrix}
= \son \EX\left(Z_{ni}(u,u')\right).
$$
Using assumption (T2) and (T4), one can derive the bounds
\begin{equation}
\label{eq: Rbounds}
\begin{split}
|R_i(u) - R_i(v)| &\leq C_1 \left(\sup_{u \in \R} |T_i'(u)|\right) \normE{\tilde{X}_i} |u-v|, \\
|R_i'(u') - R_i'(v')| &\leq C_2\left[L_i \normE{\tilde{X}_i} + \sup_{u \in \R} |T_i'(u)|\right]\normE{\tilde{X}_i}|u'-v'|,
\end{split}
\end{equation}
where $L_i$ is the Lipschitz constant of $T_i'$ and $C_1,C_2$ are constants.  Hence, setting 
$$
M_{ni} = n^{-1/2}\normF{\hLambda \inv}\normE{\tilde{X}_i}^2\left[\normE{\tilde{X}_i}^2 L_i^2 + \sup_{u \in \R} |T'(u)|^2\right]^{1/2},
$$
we obtain $ \normE{Z_{ni}(u,u') - Z_{ni}(v,v')} \leq C M_{ni}\normE{(u,u') - (v,v')}$ for some $C > 0.$  By (T3), 
\begin{align*}
\son \EX(M_{ni}^2)  &= \frac{1}{n} \son \tilde{X}_i\T \hLambda^{-2} \tilde{X_i} \normE{\tilde{X}_i}^2\left[\normE{\tilde{X}_i}^2 L_i^2 + \sup_{u \in \R}|T'(u)|^2\right] \\
&\leq O(1)\cdot \frac{1}{n} \son \left[\normE{\tilde{X}_i}^6 L_i^2 + \normE{\tilde{X}_i}^4 \sup_{u \in \R} |T'(u)|^2\right] \\
&= O(1)
\end{align*}
almost surely.  

Next, we verify the uniform Lindeberg condition.  By monotonicity of $T$ and assumptions (T1) and (T4), 
\begin{align*}
\label{eq: Rsup}
\sup_{u \in \Istc} |R_i(u)|^2 &\leq \sup_{u \in \R} C_T(u,u) + C_3\normE{\tilde{X}_i}^2, \\
\sup_{u' \in \Ist} |R_i'(u')|^2 &\leq C_4 \normE{\tilde{X}_i}^2 \sup_{u' \in \Ist} |T_i'(u)|^2 
\end{align*}
Then, by (T4), 
\begin{equation}
\label{eq: Rfin}
E\left(\normE{\tilde{X}_i}^2\left[\sup_{u \in \Istc} |R(u)|^2 + \sup_{u' \in \Ist} |R'(u')|^2\right]\right) < \infty.
\end{equation}
By a Borel-Cantelli argument, $\max_{1 \leq i \leq n} \normE{\tilde{X}_i} = o(n^{1/2})$ almost surely.  Hence, for any $\epsilon, M > 0,$ almost surely for large $n$ we will have $$\max_{1 \leq i \leq n} \normE{\hLambda\inv \tilde{X}_i}^2 < \frac{n\epsilon^2}{M}.$$  Hence, for some constant $C$,
\begin{align*}
\limsup_{n \rightarrow \infty} &\son \EX\left[\sup_{u \in \Istc,\, u' \in \Ist} \lVert Z_{ni}(u,u')\rVert_E^2 \mathbf{1}\left(\sup_{u \in \Istc,\, u' \in \Ist} \normE{Z_{ni}(u,u')} > \epsilon\right)\right] \\
&\leq \limsup_{n \rightarrow \infty} \frac{O(1)}{n}\son \normE{\tilde{X}_i}^2 \EX\left[ \left(\sup_{u \in \Istc} |R_i(u)|^2 + \sup_{u' \in \Ist} |R_i'(u)|^2\right) \right. \\
& \hspace{2.5cm} \times  \left .\mathbf{1}\left(\sup_{u \in \Istc} |R_i(u)|^2 + \sup_{u' \in \Ist} |R_i'(u)|^2 > M\right)\right] \\
&\leq C\limsup_{n \rightarrow \infty} \frac{1}{n}\son \normE{\tilde{X}_i}^2 \EX\left[ \left(\sup_{u \in \Istc} |R_i(u)|^2 + \sup_{u' \in \Ist} |R_i'(u)|^2\right) \right. \\ 
&\hspace{2.5cm} \times \left. \mathbf{1}\left(\sup_{u \in \Istc} |R_i(u)|^2 + \sup_{u' \in \Ist} |R_i'(u)|^2 > M\right)\right] \\
&= CE\left[\normE{\tilde{X}}^2  \left(\sup_{u \in \Istc} |R(u)|^2 + \sup_{u' \in \Ist} |R'(u)|^2\right) \mathbf{1}\left(\sup_{u \in \Istc} |R(u)|^2 + \sup_{u' \in \Ist} |R'(u)|^2 > M\right)\right]
\end{align*}
almost surely.  By \eqref{eq: Rfin} and dominated convergence, the last line converges to 0 as $M \rightarrow \infty.$

Lastly, we demonstrate pointwise convergence of the covariance.  By (T1), we have $\Cov(R_i(u), R_i(v)|X_i) = C_T(S_i(u), S_i(v) ).$  Furthermore, using (T2), (T3), and the conditional dominated convergence theorem, 
\begin{align*}
\Cov(R_i(u), R_i'(v') | X_i) &= C_T^{(0,1)}(S_i(u), S_i(v))S_i'(v') , \\ 
\Cov(R_i'(u'), R_i'(v)|'X_i) &= S_i'(u')C_T^{(1,1)}(S_i(u), S_i(v))S_i'(v').
\end{align*}
Hence, by the law of large numbers,
\begin{align*}
&\CX\left(\son Z_{ni}(u,u'), \sum_{k = 1}^n Z_{nk}(v,v')\right) \\
& \hspace{1cm} =  \left[ \frac{1}{n} \son \left(\hLambda\inv\tilde{X}_i\right) \otimes \begin{pmatrix} \Cov(R_i(u), R_i(v)|X_i) & \Cov(R_i(u), R_i'(v')|X_i) \\ \Cov(R_i'(u'), R_i(v)|X_i) & \Cov(R_i'(u'), R_i'(v')|X_i) \end{pmatrix} \otimes \left(\hLambda\inv\tilde{X}_i\right)\T \right]  \\
& \hspace{1cm} \rightarrow  E\left[  \left(\Lambda\inv\tilde{X}\right) \otimes \begin{pmatrix} C_T(S(u), S(v)) & C_T^{(0,1)}(S(u), S(v'))S'(v') \\ S'(u')C_T^{(1,0)}(S(u'), S(v)) & S'(u')C_T^{(1,1)}(S'(u'), S'(v'))S'(v') \end{pmatrix} \otimes \left(\Lambda\inv\tilde{X}\right)\T \right] 
\end{align*}	
almost surely.
\end{proof}

%%%%%%%%%%%%%%%%%%%% Lemma - Regularity of Estimates %%%%%%%%%%%%%%%%%%%%%%%

\begin{Lemma}
\label{lma: regular}
For any $x \in \R^p,$ define $\tilde{q}_\oplus(x,t) = n\inv \son s_{in}(x)q_i(t).$  Then, under (T1)--(T4), 
$$
\PX\left(\inf_{t \in (0,1)} \tilde{q}_\oplus(x,t) > 0\right), \quad \PX\left(\min_{1 \leq i \leq n} \inf_{t \in (0,1)} \tilde{q}_\oplus(X_i,t) > 0 \right)
$$
both converge to 1 almost surely.
\end{Lemma}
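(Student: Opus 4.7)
My plan is to reduce positivity of $\tilde{q}_\oplus$ to positivity of $\tilde{x}^\top \tilde{\pi}'(\bQp(t))$ and then show that $\tilde{\pi}'$ stays uniformly close to $\pi'$ on $\Ist$, while $\tilde{x}^\top \pi'$ is bounded away from $0$. First, from \eqref{eq: tQp}, one has $\tilde{q}_\oplus(x,t) = \bqp(t)\,\tilde{x}^\top \tilde{\pi}'(\bQp(t))$, and Proposition~\ref{prop: wMeanUnique} yields $\sup_u \bfp(u) < \infty$, so $\bqp(t) = 1/\bfp(\bQp(t)) \geq 1/\sup_u \bfp(u) > 0$ uniformly in $t \in (0,1)$. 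Hence positivity of $\tilde{q}_\oplus(x,\cdot)$ on $(0,1)$ is equivalent to positivity of $\tilde{x}^\top \tilde{\pi}'(\cdot)$ on $\Ist$, and it suffices to work with the latter.

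Second, differentiating the model identity $\Qp(x,t) = \tilde{x}^\top \pi(\bQp(t))$ at $t = \bFp(u)$ gives
\[
\tilde{x}^\top \pi'(u) \;=\; \frac{\bfp(u)}{\fp\bigl(x,\Qp(x,\bFp(u))\bigr)},
\]
so (T4) supplies a uniform lower bound $\tilde{x}^\top \pi'(u) \geq (\inf_{u\in\Ist}\bfp(u))/M =: c_0 > 0$ valid for almost every $x \in \mk{S}_X$ and every $u \in \Ist$. Meanwhile, Lemma~\ref{lma: Qjoint} provides the conditional weak convergence of $\sqrt{n}(\tilde{\pi}'-\pi')$ in $L_{p+1}^\infty(\Ist)$ to a tight zero-mean Gaussian process almost surely, so by the continuous mapping theorem,
\[
\sup_{u \in \Ist} \normE{\tilde{\pi}'(u) - \pi'(u)} = \Opx(n^{-1/2}) \quad \text{almost surely}.
\]

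For the first claim, with $x$ fixed, the Cauchy--Schwarz estimate $\sup_u|\tilde{x}^\top(\tilde{\pi}'-\pi')(u)| \leq \normE{\tilde{x}}\cdot \Opx(n^{-1/2}) = \opx(1)$ combined with the bound $\tilde{x}^\top \pi'(u) \geq c_0$ shows that $\tilde{x}^\top \tilde{\pi}'(\bQp(t)) \geq c_0/2$ uniformly in $t \in (0,1)$ with conditional probability tending to $1$, a.s. For the second claim, the same reasoning applies at $x = X_i$, provided one controls $\max_{1 \leq i \leq n} \normE{\tilde{X}_i}$. A Borel--Cantelli argument using $E(\normE{X}^4) < \infty$ from (T3) yields $\max_i \normE{\tilde{X}_i} = o(n^{1/4})$ almost surely, so
\[
\max_{1 \leq i \leq n} \sup_{u\in\Ist}|\tilde{X}_i^\top(\tilde{\pi}'-\pi')(u)| \leq \Bigl(\max_i \normE{\tilde{X}_i}\Bigr)\cdot \Opx(n^{-1/2}) = \opx(1),
\]
while the lower bound $\tilde{X}_i^\top \pi'(u) \geq c_0$ holds simultaneously for all $i$ almost surely since it holds for almost every $x$.

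The principal obstacle is the uniformity-in-$i$ step: the fluctuation bound has rate $n^{-1/2}$, so one must combine it with a growth bound $\max_i \normE{\tilde{X}_i} = o(n^{1/2})$; this is available from the fourth-moment assumption in (T3), but it is the only place where one needs uniform moment control across the sample rather than a fixed-$x$ argument.
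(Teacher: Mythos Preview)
Your proposal is correct and follows essentially the same route as the paper: rewrite $\tilde{q}_\oplus(x,t)=\bqp(t)\,\tilde{x}^\top\tilde{\pi}'(\bQp(t))$, use (T4) to bound the population quantity below, invoke Lemma~\ref{lma: Qjoint} for $\sup_{u\in\Ist}\normE{\tilde{\pi}'(u)-\pi'(u)}=\Opx(n^{-1/2})$, and handle the maximum over $i$ via Borel--Cantelli. The only notable difference is that the paper uses merely $E(\normE{X}^2)<\infty$ to obtain $\max_i\normE{\tilde{X}_i}=o(n^{1/2})$, which already suffices to make the product $o(n^{1/2})\cdot\Opx(n^{-1/2})=\opx(1)$; your appeal to the fourth moment in (T3) gives the sharper $o(n^{1/4})$ but is stronger than needed here.
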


\begin{proof}
Observe that $\tilde{q}_\oplus(x,t) = \tilde{x}\T \tilde{\pi}'(\bQp(t)) \bqp(t)$ and $\qp(x,t) = \tilde{x}\T \pi'(\bQp(t))\bqp(t).$  With $M$ as in (T4), $\inf_{t \in (0,1)} \tilde{q}_\oplus(x,t) > M\inv - \sup_{t \in (0,1)} |\tilde{q}_\oplus(x,t) - \qp(x,t)|.$  But, by (T4) and Lemma~\ref{lma: Qjoint},
$$
\sup_{t \in (0,1)} |\tilde{q}_\oplus(x,t) - \qp(x,t)| \leq \normE{\tilde{x}}\left[ \sup_{t \in (0,1)} \bqp(t)\right] \sup_{u \in \Ist}\normE{\tilde{\pi}'(u) - \pi'(u)} = \Opx(n^{-1/2}) = \opx(1),
$$
proving the first claim.  Since $E(\normE{X}^2) < \infty,$ a Borel-Cantelli argument shows that $\max_{1 \leq i \leq n} \normE{\tilde{X}_i} = o(n^{1/2}) = \opx(n^{1/2})$ almost surely.  Hence,
\[
\begin{split}
\min_{1 \leq i \leq n} \inf_{t \in (0,1)} \tilde{q}_\oplus(X_i,t) &> M\inv - \max_{1 \leq i \leq n} \normE{\tilde{X}_i}\left[ \sup_{t \in (0,1)} \bqp(t)\right] \sup_{u \in \Ist} \normE{\tilde{\pi}'(u) - \pi'(u)} \\
&= M\inv - \opx(n^{1/2})\Opx(n^{-1/2}) = M\inv - \opx(1),
\end{split}
\]
proving the second claim.
\end{proof}

%%%%%%%%%%%%%%%%%%%% Lemma - GP convergence %%%%%%%%%%%%%%%%%%%%%%%

\begin{Lemma}
\label{lma: GP}
Let $\mc{T} \subset \R$ be a bounded set, and suppose $\mc{M},$ $\mc{M}_n,$ are zero-mean (possibly multivariate) Gaussian process on $L^\infty(\mc{T}),$ with non-degenerate covariances functions $C$ and $C_n,$ respectively.  Suppose that
\begin{enumerate}[i)]
\item $C$ is uniformly continuous,
\item $C_n \rightarrow C$ pointwise, and
\item $|C_n(s,t) - C_n(s,t')| = O(|t-t'|),$ uniformly in $s,t,t'.$
\end{enumerate}
Then $\mc{M}_n \rightarrow \mc{M}$ in $L^\infty(\mc{T}).$
\end{Lemma}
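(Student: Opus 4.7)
The plan is to invoke the standard two-step criterion for weak convergence in $L^\infty(\mc{T}),$ namely convergence of finite-dimensional distributions together with asymptotic tightness of the sequence $\{\mc{M}_n\},$ as in Theorem~1.5.4 of \ci{well:96}.

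For convergence of finite-dimensional distributions, fix any finite collection $t_1,\ldots,t_k \in \mc{T}.$ Since $\mc{M}_n$ is a zero-mean Gaussian process, $(\mc{M}_n(t_1),\ldots,\mc{M}_n(t_k))$ is jointly Gaussian with covariance block matrix $\bigl(C_n(t_i,t_j)\bigr)_{i,j},$ which by hypothesis (ii) converges entrywise to $\bigl(C(t_i,t_j)\bigr)_{i,j}.$ Convergence of mean-zero Gaussian vectors reduces to convergence of their covariance matrices via L\'evy continuity of characteristic functions, so the fidis converge. The multivariate-valued case follows by applying the same argument to one-dimensional linear combinations in the codomain (Cram\'er--Wold).

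For asymptotic tightness, since $\mc{T}$ is bounded I may take $\mc{T}\subset[0,1].$ Symmetry of $C_n$ together with hypothesis (iii) gives, for a constant $K$ independent of $n,$
\[
E \normE{\mc{M}_n(s) - \mc{M}_n(t)}^2 \;=\; \mathrm{tr}\bigl(C_n(s,s) - 2C_n(s,t) + C_n(t,t)\bigr) \;\leq\; K|s-t|.
\]
Gaussianity then provides uniform sub-Gaussian tails for the increments, and under the intrinsic semimetric $d_n(s,t) = \sqrt{K|s-t|}$ the covering number of $\mc{T}$ satisfies $N(\varepsilon,\mc{T},d_n) \lesssim \varepsilon^{-2}$ uniformly in $n,$ so Dudley's entropy integral is finite. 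A standard Gaussian chaining estimate (Corollary~2.2.8 of \ci{well:96}) yields a uniform-in-$n$ modulus of continuity: for every $\eta>0,$
\[
\lim_{\delta \to 0}\, \limsup_{n \to \infty}\, P\!\left(\sup_{|s-t|<\delta} \normE{\mc{M}_n(s) - \mc{M}_n(t)} > \eta \right) \;=\; 0.
\]
Combined with fidis-convergence at a countable dense subset of $\mc{T},$ this yields asymptotic tightness in $L^\infty(\mc{T})$ by Theorem~1.5.7 of \ci{well:96}, so $\mc{M}_n \rsa \mc{M}.$

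The main obstacle is the Dudley chaining step in the vector-valued setting, but this reduces to the scalar case by projecting onto fixed unit directions $a$: each coordinate process $a\T\mc{M}_n$ is again a zero-mean Gaussian process on $L^\infty(\mc{T}),$ its covariance $a\T C_n a$ inherits from (iii) exactly the same uniform Lipschitz bound, and control of the Euclidean sup-norm in finite dimensions follows from simultaneous control of each such projection. Hypothesis (i) is used only to confirm that the limiting covariance $C$ is a legitimate continuous covariance so that $\mc{M}$ is a bona fide tight Gaussian element of $L^\infty(\mc{T})$; non-degeneracy is not needed for the argument and enters only indirectly through the applications (e.g., in normalizing suprema in Corollaries~\ref{cor: Gsize} and \ref{cor: CB}).
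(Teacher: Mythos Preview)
Your proof is correct and follows the same overall strategy as the paper: verify convergence of finite-dimensional distributions and establish asymptotic tightness via the Gaussian chaining bound in Corollary~2.2.8 of \ci{well:96}. The one substantive difference is in how tightness is obtained. The paper first upgrades the pointwise convergence (ii) to uniform convergence of $C_n$ to $C$ (using (i) and (iii) together with an $\epsilon$-net argument), deduces that the intrinsic semimetrics $\rho_n$ converge uniformly to the limit semimetric $\rho$, and then shows asymptotic $\rho$-equicontinuity by transferring a chaining bound for $\rho_n$-balls to $\rho$-balls. You bypass this detour entirely: since (iii) gives a single semimetric $d(s,t)=\sqrt{K|s-t|}$ that dominates every $\rho_n$ uniformly in $n$, you chain once with respect to $d$ and obtain asymptotic $d$-equicontinuity directly. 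Your route is shorter and makes clearer that (i) is needed only to ensure the limit $\mc{M}$ is a tight element of $L^\infty(\mc{T})$; the paper's route, on the other hand, yields the auxiliary fact $\sup_{s,t}|C_n(s,t)-C(s,t)|\to 0$ as a byproduct, which is not needed here but can be useful elsewhere.
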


\begin{proof}
We show the result for the univariate case.  The extension to multivariate processes is straightforward.  Clearly, $\mc{M}_n$ converges marginally to $\mc{M},$ so it remains to show tightness.

For any $\eta > 0,$ choose $\epsilon > 0$ such that $|C(s,t) - C(s,t')| < \eta$ whenever $|t-t'| < \epsilon.$  Let $t_1,\ldots,t_L \in \mc{T}$ be such that, for any $t \in \mc{T},$ $|t - t_l|<\epsilon$ for some $l = 1,\ldots,L = O(\epsilon\inv).$ For any $l = 1,\ldots,L$ and $t \in \mc{T}$, let $t_m$ satisfy $|t - t_m| < \epsilon.$  Then
$$
|C_n(t_l,t) - C(t_l,t)| \leq O(\epsilon) + \max_{m = 1,\ldots,L} |C_n(t_l,t_m) - C(t_l,t_m)| + \eta.
$$  
It follows that $\max_{l = 1,\ldots,L} \sup_{t \in \mc{T}} |C_n(t_l, t) - C(t_l, t)| = o(1).$  A similar bound yields $\sup_{s,t} |C_n(s,t) - C_n(s,t)| = o(1).$

For $s,t \in \mc{T},$ let 
\[
\begin{split}
\rho(s,t) &= \left[C(s,s) + C(t,t) - 2C(s,t)\right]^{1/2} \\
\rho_n(s,t) &= \left[C_n(s,s) + C_n(t,t) - 2C_n(s,t)\right]^{1/2}
\end{split}
\]
be the standard deviation metrics of $\mc{M}$ and $\mc{M}_n,$ respectively, so that $\sup_{s,t} |\rho_n(s,t) - \rho(s,t)| = o(1).$  For $\delta > 0,$ let $n$ be large enough that $\sup_{s,t \in \mc{T}} |\rho_n(s,t) - \rho(s,t)| < \delta/2$ and $ \sup_{|s-t| < \epsilon} \rho_n(s,t) < D\epsilon^{1/2}$ for some $D > 0.$  Then, using Corollary 2.2.8 of \ci{well:96}, 
\[
\begin{split}
E\left(\sup_{\rho(s,t) < \delta/2} |\mc{M}_n(s) - \mc{M}_n(t)|\right) & \leq E\left(\sup_{\rho_n(s,t) < \delta} |\mc{M}_n(s) - \mc{M}_n(t)| \right)\\
&= O\left(\int_0^\delta \sqrt{-\log(\tau)} \d \tau\right) = o(1)
\end{split}
\]
as $\delta \rightarrow 0.$  Thus, $\mc{M}_n$ is asymptotically $\rho$-equicontinuous in probability, and thus asymptotically tight.
\end{proof}

%%%%%%%%%%%%%%%%%%%%%% Lemma - uniform power $$$$$$$$$$$$$$$$$$$$$

\begin{Lemma}
\label{lma: rj}
For any $\epsilon > 0$ and some $R > 0,$ let $r_{jn}$, $j = 1,2,3,$ be as defined as in \eqref{eq: rj}.  Then, under the assumption of Theorem~\ref{thm: Gpower}, there exists $R>0$ such that $r_{jn} = o(1)$, $ j = 1,2,3.$
\end{Lemma}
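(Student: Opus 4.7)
The plan is to handle each of the three quantities $r_{jn}$ separately; the unifying theme is that conditions (G1)--(G4), together with the fixed marginal law of $X$ across $\mk{G}\ast$, upgrade the single-model arguments in Lemmas~\ref{lma: Qjoint} and \ref{lma: regular} to their uniform counterparts on $\mk{G}\ast$. Once these uniform CLT and LLN versions are in hand, the three convergences $r_{jn} \to 0$ fall out by combining the pointwise computations already done in the proofs of Theorem~\ref{thm: global} and Corollary~\ref{cor: Gsize}.

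For $r_{2n}$, I start from (G4), which gives $\qp(x,t) = \fp(x, \Qp(x,t))\inv \geq M_2\inv$ uniformly in $x$ and $\mc{G} \in \mk{G}\ast$, so $A_n^c$ requires $\max_i \sup_t |\tilde{q}_\oplus(X_i,t) - \qp(X_i,t)|$ to exceed $M_2\inv$. Writing this difference as $\tilde{X}_i\T [\tilde{\pi}'(\bQp(t)) - \pi'(\bQp(t))]\bqp(t)$ and using (G3) to bound $\bqp \leq M_1$, it suffices to establish $\max_i \normE{\tilde{X}_i}\sup_u\normE{\tilde{\pi}'(u) - \pi'(u)} = \opx(1)$ uniformly on $\mk{G}\ast$. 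The second factor is $\Opx(n^{-1/2})$ by rerunning Lemma~\ref{lma: Qjoint} with its Lindeberg/moment constants made uniform via (G1)--(G2); the first is $o(n^{1/2})$ almost surely because the marginal of $X$ is fixed across $\mk{G}\ast$.

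For $r_{1n}$, I decompose
\[
\Delta_n \leq 2\left|\int_\Ist (\hat\rho(u)-\rho(u))\T \hSigma \rho(u)\bfp(u)\,du\right| + |\hat{W}_n - W|.
\]
Cauchy--Schwarz in the $\hSigma$-weighted inner product bounds the first summand by $\bigl[\int_\Ist (\hat\rho-\rho)\T \hSigma (\hat\rho-\rho)\bfp\,du\bigr]^{1/2} \hat{W}_n^{1/2}$. A uniform CLT for $\sqrt{n}(\hat\rho-\rho)$, mirroring Lemma~\ref{lma: Qjoint} but with tightness made uniform by (G1)--(G2), shows the first factor is $\Opx(n^{-1/2})$, while $\hat{W}_n^{1/2} = \sqrt{W}(1+\opx(1))$ follows from bounding $\hat{W}_n - W = \int_\Ist \rho\T(\hSigma-\Sigma)\rho\bfp\,du$ by $\lambda_{\min}(\Sigma)\inv \normF{\hSigma - \Sigma}\,W = \Opx(n^{-1/2})W$, using that $\Sigma$ is fixed on $\mk{G}\ast$. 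Combining, $\Delta_n/W = \Opx((nW)^{-1/2}) = \Opx((na_n^2)^{-1/2}) = \opx(1)$ uniformly on $\mk{G}_{A,n}$, and hence $r_{1n} \to 0$.

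For $r_{3n}$, I control $\hat{b}_\alpha^G$ directly. Since it is the $1-\alpha$ quantile (conditional on the data) of $\sum_j \hat{\lambda}_j \omega_j$ with $\omega_j \sim \chi^2_p$ i.i.d., Markov's inequality yields $\hat{b}_\alpha^G \leq \alpha\inv p \sum_j \hat{\lambda}_j = \alpha\inv p \int_0^1 \hat{C}_Q(t,t)\,dt$. A uniform LLN based on (G1)--(G2) shows this integral converges almost surely, uniformly on $\mk{G}\ast$, to $\int_0^1 C_Q(t,t)\,dt = \sum_j \lambda_j$, which is itself uniformly bounded by (G1). Taking $R$ larger than twice this bound makes $\PX(\hat{b}_\alpha^G > R) = 0$ eventually almost surely, uniformly on $\mk{G}\ast$, so $r_{3n} \to 0$. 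The hard part throughout is establishing uniformity in the underlying Lindeberg--CLT and LLN estimates; conditions (G1)--(G2) are designed precisely to supply the uniform integrability required to promote the single-model arguments to the full class $\mk{G}\ast$.
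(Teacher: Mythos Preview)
Your overall decomposition matches the paper's: Cauchy--Schwarz for the cross term in $\Delta_n$, the bound $|\hat{W}_n - W| \leq C\normF{\hSigma - \Sigma}W$, the inequality $\hat{b}_\alpha^G \leq \alpha\inv p \i01 \hat{C}_Q(t,t)\,\d t$, and the reduction of $A_n^c$ to control of $\tilde{\pi}' - \pi'$. The difference is in execution. The paper never invokes a CLT: for $r_{1n}$ it computes $\EX(\hat\rho(u)) = \rho(u)$ and $\VX(\hat\rho(u))$ explicitly, then applies Chebyshev to get $\PX(|\int(\hat\rho-\rho)\T\hSigma\rho\,\bfp\,\d u| > W/8) \leq 64pC_1\hat{W}_n/(nW^2)$ directly; for $r_{2n}$ it does a $\delta$-net chaining with $\delta = n^{-1/4}$ and explicit Markov bounds on the grid and increments. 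This is more elementary than establishing a uniform functional CLT and yields the required $\sup_{m\geq n}$ control immediately, since the bounds are explicit functions of sample moments of the $X_i$'s (whose law is fixed) and of $\nu_n$, which is handled via (G2). Your route via ``uniform CLT'' is not wrong in principle, but you would end up redoing essentially the same chaining argument inside the CLT proof, so you gain nothing and lose directness.

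There is one genuine slip in your $r_{3n}$ argument. You conclude that $\PX(\hat{b}_\alpha^G > R) = 0$ eventually almost surely from a LLN for $\i01 \hat{C}_Q(t,t)\,\d t$. But $\hat{C}_Q$ depends on the $Q_i$'s, not just on the $X_i$'s, so even if $\i01\hat{C}_Q(t,t)\,\d t \to \sum_j \lambda_j$ almost surely over the full product space, this does not make the \emph{conditional} probability $\PX(\cdot) = 0$; it only forces $E[\PX(\cdot)] \to 0$, which is not enough for $\sup_{m\geq n}\PX(\cdot)$. The paper fixes this by bounding the conditional expectation $\EX\bigl[\i01 \tilde{C}_Q(t,t)\,\d t\bigr] \leq C$ uniformly (whenever $\normF{\hLambda\inv}$ and $n\inv\son\normE{\tilde{X}_i}^2$ are bounded), and then applying conditional Markov to get $\PX(\hat{b}_\alpha^G > R) \leq Cp/(R\alpha)$ on that event. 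Choosing $R$ large then makes this below $\epsilon$ deterministically on an $X$-event of probability tending to one, which is exactly what $r_{3n}\to 0$ requires.
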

\begin{proof}
Begin with $j =1,$ and define $W,$ $\hat{W}_n$ as in \eqref{eq: Vdef}.  Then
$$
\left| \int_\Ist(\hat{\rho}(u) - \rho(u))\T\hSigma \rho(u) \bfp(u) \d u\right| \leq \hat{W}_n^{1/2} \left[\int_\Ist (\hat{\rho}(u) - \rho(u))\T\hSigma (\hat{\rho}(u) - \rho(u)) \bfp(u) \d u\right]^{1/2}.
$$
Next, a direct calculation gives $\EX(\hat{\rho}(u)) = \rho(u)$ and $$\VX(\hat{\rho}(u)) = \hSigma\inv\left(\frac{1}{n^2} \son (X_i - \xbar)(X_i - \xbar)\T C_T(S_i(u),S_i(v))\right)\hSigma\inv,$$
so that, with $C_1 = \sup_{\mc{G} \in \mk{G}\ast} \sup_{u \in \Ist} C_T(u,u) < \infty,$
\[
\begin{split}
&\EX\left[\left\{\int_\Ist(\hat{\rho}(u) - \rho(u))\T\hSigma\rho(u) \bfp(u) \d u\right\}^2\right] \leq\hat{W}_n\mathrm{tr}\left(\hSigma \int_\Ist \VX(\hat{\rho}(u)) \bfp(u) \d u\right) \\
&\hspace{1cm} = \frac{\hat{W}_n}{n^2}\son \left\{\mathrm{tr}\left[(X_i - \xbar)(X_i - \xbar)\T \hSigma\inv\right] \int_\Ist C_T(S_i(u), S_i(v)) \bfp(u) \d u \right\} \\
&\hspace{1cm}  \leq \frac{C_1\hat{W}_n}{n^2} \mathrm{tr}\left(\hSigma\inv \son (X_i - \xbar)(X_i - \xbar)\T\right)  = \frac{pD\hat{W}_n}{n}.
\end{split}
\]
Thus,
\begin{equation*}
\PX\left(\left|\int_\Ist (\hat{\rho}(u) - \rho(u)) \hSigma \rho(u) \bfp(u) \d u \right| > \frac{W}{8}\right) \leq \frac{64pC_1\hat{W}_n}{nW^2} = o\left(\frac{\hat{W}_n}{W}\right),
\end{equation*}
where the $o(\cdot)$ term is uniform over $\mc{G} \in \mk{G}_{A,n}$ as $nW \rightarrow \infty$ uniformly.  Hence, choosing $\eta > 0$ such that $\eta < \epsilon / 3$, for large $n$ we will have
\begin{equation}
\label{eq: V8}
\begin{split}
&P\left(\sup_{m\geq n} \PX\left(\left|\int_\Ist (\hat{\rho}(u) - \rho(u))\T \hSigma \rho(u) \bfp(u) \d u \right| > \frac{W}{8}\right) > \frac{\epsilon}{2}\right) \leq P\left(\sup_{m \geq n} \hat{W}_m > \frac{W\epsilon}{2\eta}\right) \\
& \hspace{1cm} \leq P\left(\sup_{m \geq n} |\hat{W}_m - W| > \frac{W}{2}\right). 
\end{split}
\end{equation}

Next, write
$$
\hat{W}_n - W = \int_{u: \rho(u) \neq 0} \rho(u)\T\Sigma\rho(u)\left[\frac{\rho(u)\T\hSigma \rho(u)}{\rho(u)\T\Sigma \rho(u)} - 1\right] \bfp(u) \d u.
$$
An elementary calculation combined with Lemma~4.3 of \ci{bosq:00} gives
$$
\sup_{u: \rho(u) \neq 0} \left|\frac{\rho(u)\T\hSigma \rho(u)}{\rho(u)\T\Sigma \rho(u)} - 1\right| \leq C \normF{\hSigma - \Sigma},
$$
for some $C$ depending only on the distribution of $X.$  As a consequence, $|\hat{W}_n - W| \leq CW\normF{\hSigma - \Sigma}.$  Then, for any $\epsilon > 0,$ choose $\eta$ and $n$ so that \eqref{eq: V8} holds. For such $n,$
\[
\begin{split}
&P\left(\sup_{m \geq n} \PX\left(\Delta_m > \frac{W}{2}\right) > \epsilon\right) \\
&\hspace{0.5cm} \leq P\left(\sup_{m \geq n}\PX\left(\left|\int_\Ist (\hat{\rho}(u) - \rho(u)) \hSigma \rho(u) \bfp(u) \d u \right| > \frac{W}{8}\right) > \frac{\epsilon}{2}\right)\\
&\hspace{1cm} + P\left(\sup_{m \geq n} \mathbf{1}\left(|\hat{W}_m - W| > \frac{V}{2}\right) > \frac{\epsilon}{2}\right) \\
&\hspace{0.5cm} \leq 2P\left(\sup_{m \geq n} |\hat{W}_m - W| > \frac{W}{2}\right) \leq 2P\left(\sup_{m \geq n} \normF{\hSigma - \Sigma} > \frac{1}{2}\right) = o(1)
\end{split}
\]
uniformly in $\mk{G}_{A,n}$ since the distribution of $X$ is fixed and $\normF{\hSigma - \Sigma} \rightarrow 0$ almost surely.

Next, consider $j = 2$.  For any $\delta > 0,$ let $\{u_k\}_{k = 1}^K$ be a $\delta$-covering of $\Ist,$ where $K = O(\delta\inv).$ Observe that, by (G3) and (G4), $$\tilde{x}\T \pi'\circ \bQp(t) = \qp(x,t)/\bqp(t) \geq M_1M_2\inv =: M_3 > 0$$ for almost all $x$ and all $t \in (0,1),$ so that whenever
$$
\max_{1\leq i \leq n} \sup_{u \in \Ist} |\tilde{X}_i\T(\tilde{\pi}'(u) - \pi'(u))| < M_3,
$$
the event $A_n$ in \eqref{eq: An} will hold.  Thus, we obtain the bound
\[
\begin{split}
\PX(A_n^c) &\leq \PX\left(\max_{1 \leq i\leq n} \max_k |\tilde{X}_i\T(\tilde{\pi}'(u_k) - \pi'(u_k))| \geq \frac{M_3}{2}\right) \\
&\hspace{1cm} + \PX\left(\max_{1 \leq i \leq n} \sup_{|u-v| < \delta} |\tilde{X}_i\T(\tilde{\pi}'(u) - \tilde{\pi}'(v))| \geq \frac{M_3}{2}\right).
\end{split}
\]

Let $C_2 = \sup_{\mc{G} \in \mk{G}\ast} \sup_{u \in \R} C_T^{(1,1)}(u,u)<\infty.$ By (G1), there is a constant $C_3 <\infty$ such that $\sup_{u \in \Ist}\normE{\pi'(u)} < C_3$ uniformly in $\mk{G}\ast.$  Since $S_i(u) = \tilde{X}_i\T\pi(u),$ $|S_i'(u)| \leq C_3\normE{\tilde{X}_i}.$   Furthermore, $\max_{1 \leq i \leq n} \normE{\tilde{X}_i} = o(n^{1/4})$ and $\normF{\hLambda\inv} = O(1)$ almost surely.  Then, for any $\eta > 0$ and some $B >0,$ for large enough $n$ we will have
\begin{equation}
\label{eq: An_bd1}
\begin{split}
&\PX\left(\max_{1 \leq i \leq n} \max_k |\tilde{X}_i\T(\tilde{\pi}'(u_k) - \pi'(u_k)) \geq \frac{M_3}{2}\right) \leq 4M_3^{-2} K \max_k \EX\left[\max_{1 \leq i \leq n} |\tilde{X}_i\T(\tilde{\pi}'(u_k) - \pi'(u_k))|^2\right] \\
&\hspace{1cm} \leq 4M_3^{-2} K\max_{k} \mathrm{tr}\left[\EX\left(\hLambda (\tilde{\pi}'(u_k) - \pi'(u_k))(\tilde{\pi}'(u_k) - \pi'(u_k))\T \hLambda\right)\right] \left(\max_{1 \leq i \leq n} \normE{\hLambda\inv \tilde{X}_i}^2\right) \\
&\hspace{1cm} = O\left(KB^2\eta^2n^{1/2}\right) \max_k \left[\frac{1}{n} \son \normE{\tilde{X}_i}^2 C_T^{(1,1)}(S_i(u_k), S_i(u_k))(S_i'(u_k))^2\right] \\
&\hspace{1cm} = O\left(C_3^2C_2'K \eta^2n^{1/2}\right)\left(\frac{1}{n^2}\son \normE{\tilde{X}_i}^4\right)\\
&\hspace{1cm} = O\left(\eta^2\delta\inv n^{-1/2}\right)
\end{split}
\end{equation}
almost surely, where the $O(\cdot)$ term depends only on the distribution of $X,$ and is thus uniform over $\mk{G}\ast.$  

Next, define 
$$
\nu_n = \frac{1}{n} \sum_{i = 1}^n \normE{\tilde{X}_i}^2\left\{E(L_i|X_i) \normE{\tilde{X}_i} + E\left(\sup_{u \in \R} |T_i'(u)| \, | X_i\right)\right\}
$$
By (G1), there exists $C_4<\infty$ such that $\sup_{u \in \Ist} \normE{\pi''(u)} < C_4$ uniformly in $\mk{G}\ast.$ Then, using \eqref{eq: Rbounds},
$$
\normE{\tilde{\pi}'(u) - \tilde{\pi}'(v)} \leq C_4(1+C_2)\normF{\hLambda\inv}\nu_n|u-v|.
$$
Hence, for large enough $n,$
\begin{equation}
\label{eq: An_bd2}
\begin{split}
\PX\left(\max_{1 \leq i \leq n} \sup_{|u-v| < \delta} |\tilde{X}_i\T(\tilde{\pi}'(u) - \tilde{\pi}'(v))| \geq \frac{M_3}{2}\right) &\leq  2C_4(1+C_2)M_3\inv \delta\normF{\hLambda\inv}\left(\max_{1 \leq i \leq n} \normE{\tilde{X}_i}\right)\nu_n \\
&= O(\delta \eta n^{1/4} \nu_n),
\end{split}
\end{equation}
where the $O(\cdot)$ terms is again uniform over $\mk{G}\ast.$  

Finally, by (G2) and Proposition A.5.1 of \ci{well:96},
$$
\limsup_{M \rightarrow \infty} \limsup_{n \rightarrow \infty} \sup_{\mc{G} \in \mk{G}\ast}  P\left( \sup_{m \geq n} \nu_m > M\right) \rightarrow 0.
$$
Setting $\delta = n^{-1/4},$ \eqref{eq: An_bd1} and \eqref{eq: An_bd2} imply that, for any $M > 0$ and for large $n,$
\[
\begin{split}
P\left(\sup_{m \geq n} \PX\left(A_n^c\right) > \epsilon\right) &\leq P\left(\sup_{m \geq n} O\left(\eta^2m^{-1/4}\right) > \frac{\epsilon}{2}\right) \\
&\hspace{1cm} P\left(\sup_{m \geq n} O(M\eta) > \frac{\epsilon}{2}\right) + P\left(\sup_{m \geq n} \nu_m > M\right).
\end{split}
\]
The last term can be made arbitrarily small uniformly in $\mk{G}\ast$ by choosing $M$ large.  Then, one can choose $\eta$ small and $n$ large so that the first and second terms vanish for any $\mc{G} \in \mk{G}\ast.$ Hence $r_{2n} = o(1).$

Lastly, consider $j = 3$ and let $R > 0$ be arbitrary.  Let $\hat{H}$ be the cdf of $W_n = \sum_{j = 1}^{\infty} \hat{\lambda}_j \omega_j$ (recall that $\omega_j \overset{\mathrm{iid}}{\sim} \chi^2_p$) conditional on the data, so that $\hat{H}(\hat{b}_\alpha^G) = 1 - \alpha.$  Then $\hat{b}_\alpha^G> R$ implies that $\hat{H}(R) < 1 - \alpha.$  Denote by $P_\omega$ the probability measure of the $\omega_j$, treating the observed data as constant.  Then
\[
%\begin{split}
1 - \hat{H}(R)=P_\omega\left(\sum_{j = 1}^{\infty} \hlambda_j V_j > R\right) \leq pR\inv \sum_{j = 1}^{\infty} \hlambda_j \leq pR\inv \i01 \hat{C}_Q(t,t) \d t,
%\end{split}
\]
where we have used the identity $\sum_{j = 1}^\infty \hlambda_j = \i01 \hat{C}_Q(t,t)\d t.$  

Set $\tilde{C}_Q(t,t) = n\inv \son (Q_i(t) - \tilde{Q}_\oplus(X_i,t))^2,$ so that $\hat{C}_Q(t,t) = \tilde{C}_Q(t,t)$ whenever $A_n$ holds.  Whenever $\normF{\hLambda\inv}^2 \leq B_1$ and $n\inv \son \normE{\tilde{X}_i}^2 \leq B_2$, using (G1) one can derive the bound
\begin{equation}
\EX\left[\i01 \tilde{C}_Q(t,t) \d t\right] \leq D\left[\frac{n-2}{n} + \frac{B_1(1 + B_2^2)}{n}\right] = O(1).
\end{equation}
uniformly over $\mk{G}\ast.$  Thus, for large $n,$ the above is bounded by some constant $C$ so that
\begin{equation}
\label{eq: balpha_bd}
\begin{split}
&P\left(\sup_{m \geq n} \PX(\hat{b}_\alpha^G > R) > \epsilon\right) \\
&\hspace{0.5cm} \leq P\left(\sup_{m \geq n} \PX\left(\i01 \tilde{C}_Q(t,t) \d t > Rp\inv \alpha\right) > \epsilon\right) + P\left(\sup_{m \geq n} \PX(A_m^c)\right) \\
&\hspace{0.5cm} \leq \mathbf{1}\left(C > Rp\inv\alpha\right) + P\left(\sup_{m \geq n} \normF{\hLambda\inv}^2 > B_1\right) + P\left(\sup_{m \geq n} \frac{1}{n}\son \normE{\tilde{X}_i}^2 > B_2\right) + r_{2n},
\end{split}
\end{equation}
which is uniform over $\mk{G}\ast$ since the distribution of $X$ is fixed.  By choosing $R$ large, the first term vanishes.  The second and third terms can also be made small, uniformly in $\mk{G}\ast,$ since $\normF{\hLambda\inv}$ and $n\inv \son \normE{\tilde{X}_i}^2$ are both $O(1)$ almost surely.  Thus, $r_{3n} \rightarrow 0.$
\end{proof}

%%%%%%%%%%%%%%%%%%%%%%%%%%%%%%%%% Hadamard Lemma %%%%%%%%%%%%%%%%%%%%%%%%%%%%

\begin{Lemma}
\label{lma: hadamard}
Let $\mathbb{E} = L^\infty\zo \times L^\infty(0,1),$ $\mathbb{E}_0 = C[0,1]\times C_b(0,1),$ and
$$
\mathbb{E}_\phi = \left\{(h_1,h_2) \in \mathbb{E}: \inf_{t \in (0,1)} h_2(t) > 0, \, h_1 \text{\emph{ strictly increasing, }} \I_x^\delta \subset (h_1(0), h_1(0)).\right\}
$$
Define the map $\phi: \mathbb{E}_\phi \rightarrow L^\infty(\I_x^\delta)$ as $\phi(h_1,h_2) = \frac{1}{h_2 \circ h_1\inv}.$  Then, provided $\sup_{u \in \Ist} |{\bfp}'(u)| < \infty,$ $\phi$ is Hadamard differentiable at $(\Qp(x), \qp(x)) \in \mathbb{E}_\phi,$ tangentially to $\mathbb{E}_0,$ with
\begin{equation}
\label{eq: hadamard}
\phi'_{(\Qp(x),\qp(x))}(h_1,h_2) = \left[\frac{\partial}{\partial u} \fp(x) \right] h_1 \circ \Fp(x) -\fp(x)^2 h_2 \circ \Fp(x).
%-\fp(x)^2\left[\left\{h_2 + \frac{(\partial / \partial t) \qp(x)}{\qp(x)}h_1\right\} \circ \Fp(x)\right].
\end{equation}
\end{Lemma}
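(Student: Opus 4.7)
The plan is to establish Hadamard differentiability of $\phi$ by decomposing it as a composition of three simpler maps and applying the chain rule for Hadamard differentiable maps (Theorem 3.9.3 of \ci{well:96}). Write $\phi = \psi_3 \circ \psi_2 \circ \psi_1$, where $\psi_1 \colon (h_1, h_2) \mapsto (h_1^{-1}, h_2)$ acts by quantile inversion on the first coordinate, $\psi_2 \colon (F, h_2) \mapsto h_2 \circ F$ is the composition map, and $\psi_3 \colon g \mapsto 1/g$ is pointwise reciprocation. Evaluated at the base point $(\Qp(x), \qp(x))$, the intermediate images are $(\Fp(x), \qp(x))$ and $\qp(x) \circ \Fp(x) = 1/\fp(x)$, the latter being bounded away from zero uniformly on the interior interval $\I_x^\delta$.

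First I would verify that the perturbed points $(\Qp(x) + t_n k_{1n}, \qp(x) + t_n k_{2n})$ remain in $\mathbb{E}_\phi$ along any sequence $t_n \downarrow 0$ with $(k_{1n}, k_{2n}) \to (k_1, k_2) \in \mathbb{E}_0$, so that the chain of maps is well-defined; this uses the positivity of $\qp(x)$ on compact subsets of $(0,1)$ and the fact that $\I_x^\delta$ is compactly contained in $(\Qp(x,0), \Qp(x,1))$. Next I would invoke the classical Hadamard differentiability of the quantile functional (Lemma 3.9.23 of \ci{well:96}) to obtain
\[
(\psi_1)'_{(\Qp(x), \qp(x))}(k_1, k_2) = \bigl(-\fp(x) \cdot k_1 \circ \Fp(x),\; k_2\bigr),
\]
uniformly on $\I_x^\delta$, for any $k_1 \in C\zo$. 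Then Hadamard differentiability of $\psi_2$ at $(\Fp(x), \qp(x))$ follows from a direct Taylor expansion: for $G \in C(\I_x^\delta)$ and $k_2 \in C_b(0,1)$,
\[
(\psi_2)'_{(\Fp(x), \qp(x))}(G, k_2) = k_2 \circ \Fp(x) + [\qp(x)' \circ \Fp(x)] \cdot G,
\]
which requires that $\qp(x)$ be differentiable with bounded derivative on the range $\Fp(x)(\I_x^\delta) \subset [\delta, 1-\delta]$; this follows from the assumed bound $\sup_{u \in \Ist} |\bfp'(u)| < \infty$ combined with the regression identity $\qp(x, t) = \bqp(t)\tilde{x}^\top \pi'(\bQp(t))$ implicit in model \eqref{eq: Wmodel}. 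Finally, $\psi_3$ is Hadamard differentiable at $1/\fp(x)$ with derivative $h \mapsto -\fp(x)^2 \cdot h$, since $\inf_{u \in \I_x^\delta} \fp(x, u) > 0$.

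Combining by the chain rule and simplifying with the identity $\partial_u \fp(x, u) = -\fp(x, u)^3\, \qp(x)'(\Fp(x, u))$ produces the closed form in \eqref{eq: hadamard}. The main technical obstacle will be justifying the derivative of $\psi_2$ uniformly on $\I_x^\delta$, since $\qp(x)$ may blow up near the endpoints $t \in \{0,1\}$; the restriction to the interior subinterval $[\delta, 1-\delta]$ bounds $\qp(x)'$ uniformly on the compact range $\Fp(x)(\I_x^\delta)$, which is what allows the Taylor remainder to vanish uniformly in the perturbation along Hadamard-admissible sequences. Once this is controlled, the remainder of the argument is routine chain-rule bookkeeping.
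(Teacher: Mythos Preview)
Your approach is correct and essentially the same as the paper's: both establish the result via a chain-rule decomposition into elementary pieces (inversion, composition, reciprocation), and both identify the uniform control of $\qp'(x)$ on $[\delta,1-\delta]$ as the key technical ingredient. The only structural difference is that the paper merges your $\psi_1$ and $\psi_2$ into a single map $\phi_1(h_1,h_2)=h_2\circ h_1^{-1}$ and proves its Hadamard differentiability directly from the definition via a mean-value-theorem argument, whereas you split inversion and composition and invoke Lemma~3.9.23 of \ci{well:96} for the former; that lemma is stated for distribution functions rather than quantile functions, but the adaptation is immediate.
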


\begin{proof}
First, define $\mathbb{F}_\phi = \{h \in L^\infty(\I_x^\delta):\, \inf_{u \in \I_x^\delta} h(u) > 0\}$ and $\mathbb{F}_0 = C(\I_x^\delta).$  Define $\phi_1:\mathbb{E}_\phi \rightarrow \mathbb{F}_\phi$ by $\phi_1(h_1,h_2) = h_2\circ h_1\inv,$ and $\phi_2: \mathbb{F}_\phi \rightarrow L^\infty(\I_x^\delta)$ by $\phi_2(h) = 1/h.$  We will show that each map is Hadamard differentiable, then apply the chain rule.

For $(h_2,h_2) \in \mathbb{E}_0,$ let $r_k$ be a real-valued sequence converging to zero.  Furthermore, define a sequence $(h_{1k}, h_{2k}) \in \mathbb{E}$ such that $h_{jk}$ converges uniformly to $h_j,$ and $$(g_{1k}, g_{2k}) = (\Qp(x), \qp(x)) + r_k(h_{1k}, h_{2k}) \in \mathbb{E}_\phi$$ for all $k.$  Then, for any $u \in \I_x^\delta,$ there exists $t_k(u)$ between $g_{1k}\inv(u)$ and $\Fp(x,u)$ such that
\begin{equation}
\label{eq: diff1}
\phi_1(g_{1k}, g_{2k})(u) - \phi_1(\Qp(x), \qp(x))(u) = r_kh_{2k}\circ g_{1k}\inv(u) + \frac{\partial}{\partial t} \qp(x,t_k(u))(g_{1k}\inv(u) - \Fp(x,u)).
\end{equation}
Define $t_k = g_{1k}\inv(u).$  Then there exists $u_k$ between $\Qp(x,t_k)$ and $u$ such that 
$$
g_{1k}\inv(u) - \Fp(x,u) = -\frac{r_k}{\qp(x) \circ \Fp(x, u_k)}h_{1k}(t_k).
$$
Since $\inf_{t \in (0,1)} \bqp(x,t) > 0,$ we have $\sup_{u \in \I_x^\delta} |g_{1k}\inv(u) - \Fp(x,u)| \rightarrow 0.$  As a consequence,
\begin{equation}
\label{eq: h2k}
\sup_{u \in \I_x^\delta} |h_{2k}\circ g_{1k}\inv(u) - h_2\circ \Fp(x,u)| \leq \sup_{t \in \zo} |h_{2k}(u) - h_2(u)| +  \sup_{u \in \I_x^\delta} |h_2\circ g_{1k}\inv(u) - h_2\circ \Fp(x,u)|.
\end{equation}
The first term goes to zero by assumption on $h_{2k},$ while the second converges to zero since $h_2$ is uniformly continuous on any compact subset of $[0,1].$  

Next, since $\bqp$ is continuously differentiable, it follows that $(\partial / \partial t) \qp(x)$ is continuous on $(0,1).$  Then, since $|t_k(u) - \Fp(x,u)| \leq |g_{1k}\inv(u) - \Fp(x,u)|,$ we have that $$(\partial / \partial t) \qp(x, t_k(u)) \rightarrow \frac{\partial}{\partial t} \qp(x, \Fp(x,u)) = -\frac{(\partial / \partial u) \fp(x,u)}{\fp(x,u)^3}$$
uniformly. 

Finally, by continuity of $h_1$ and $\qp(x),$ one can show that $$r_k\inv(g_{1k}\inv(u) - \Fp(x,u)) \rightarrow \frac{h_1\circ \Fp(x,u)}{\qp(x)\circ \Fp(x,u)} =\fp(x,u)h_1\circ \Fp(x,u)$$ uniformly.   Plugging these results into \eqref{eq: diff1}, we find that $\phi_1$ is differentiable at $(\Qp(x), \qp(x)) \in \mathbb{E}_\phi$, tangentially to $\mathbb{E}_0,$ with
\begin{equation}
\label{eq: phi1HD}
\begin{split}
\phi_{1,(\Qp(x), \qp(x))}'(h_1,h_2)& = \lim_{k \rightarrow \infty} \frac{\phi_1(g_{1k}, g_{2k}) - \phi_1(\Qp(x), \qp(x))}{r_k} \\
&= \left[h_2 + \frac{(\partial /\partial t)\qp(x)}{\qp(x)} h_1\right] \circ \Fp(x) \\
&= h_2 \circ \Fp(x) - \frac{(\partial / \partial u) \fp(x)}{\fp(x)^2}h_1\circ \Fp(x).
\end{split}
\end{equation}

Using similar methods, one can show that $\phi_2$ is Hadamard differentiable at $\phi_1(\Qp(x), \qp(x)) = \qp(x)\circ \Fp(x) \in \mathbb{F}_\phi,$ tangentially to $\mathbb{F}_0$, with
\begin{equation}
\label{eq: Phi2HD}
\phi_{2, \qp(x) \circ \Fp(x)}'(h) = -h\fp(x)^2.
\end{equation}
Since $\phi_1(\mathbb{E}_0) \subset \mathbb{F}_0,$ the result holds by the chain rule.

\end{proof}

\begin{thebibliography}{58}
% BibTex style file: imsart-nameyear.bst, 2013-01-28
% Default style options (sort=1,type=nameyear).
% Used options (sort=1,type=nameyear).

\bibitem[\protect\citeauthoryear{Agueh and Carlier}{2011}]{ague:11}
\begin{barticle}[author]
\bauthor{\bsnm{Agueh},~\bfnm{Martial}\binits{M.}} \AND
  \bauthor{\bsnm{Carlier},~\bfnm{Guillaume}\binits{G.}}
(\byear{2011}).
\btitle{Barycenters in the {W}asserstein space}.
\bjournal{SIAM Journal on Mathematical Analysis}
\bvolume{43}
\bpages{904--924}.
\end{barticle}
\endbibitem

\bibitem[\protect\citeauthoryear{Aitchison}{1986}]{aitc:86}
\begin{bbook}[author]
\bauthor{\bsnm{Aitchison},~\bfnm{John}\binits{J.}}
(\byear{1986}).
\btitle{The Statistical Analysis of Compositional Data}.
\bpublisher{Chapman \& Hall, Ltd.}
\end{bbook}
\endbibitem

\bibitem[\protect\citeauthoryear{Al-Mufti et~al.}{2018}]{al:18}
\begin{barticle}[author]
\bauthor{\bsnm{Al-Mufti},~\bfnm{Fawaz}\binits{F.}},
  \bauthor{\bsnm{Thabet},~\bfnm{Ahmad~M}\binits{A.~M.}},
  \bauthor{\bsnm{Singh},~\bfnm{Tarundeep}\binits{T.}},
  \bauthor{\bsnm{El-Ghanem},~\bfnm{Mohammad}\binits{M.}},
  \bauthor{\bsnm{Amuluru},~\bfnm{Krishna}\binits{K.}} \AND
  \bauthor{\bsnm{Gandhi},~\bfnm{Chirag~D}\binits{C.~D.}}
(\byear{2018}).
\btitle{Clinical and Radiographic Predictors of Intracerebral Hemorrhage
  Outcome}.
\bjournal{Interventional neurology}
\bvolume{7}
\bpages{118--136}.
\end{barticle}
\endbibitem

\bibitem[\protect\citeauthoryear{Ambrosio, Gigli and
  Savar\'{e}}{2008}]{ambr:08}
\begin{bbook}[author]
\bauthor{\bsnm{Ambrosio},~\bfnm{L.}\binits{L.}},
  \bauthor{\bsnm{Gigli},~\bfnm{N.}\binits{N.}} \AND
  \bauthor{\bsnm{Savar\'{e}},~\bfnm{G.}\binits{G.}}
(\byear{2008}).
\btitle{Gradient Flows in Metric Spaces and in the Space of Probability
  Measures}.
\bpublisher{Springer}.
\end{bbook}
\endbibitem

\bibitem[\protect\citeauthoryear{Barber et~al.}{2017}]{barb:17}
\begin{barticle}[author]
\bauthor{\bsnm{Barber},~\bfnm{Rina~Foygel}\binits{R.~F.}},
  \bauthor{\bsnm{Reimherr},~\bfnm{Matthew}\binits{M.}},
  \bauthor{\bsnm{Schill},~\bfnm{Thomas}\binits{T.}} \betal{et~al.}
(\byear{2017}).
\btitle{The function-on-scalar LASSO with applications to longitudinal GWAS}.
\bjournal{Electronic Journal of Statistics}
\bvolume{11}
\bpages{1351--1389}.
\end{barticle}
\endbibitem

\bibitem[\protect\citeauthoryear{Barden, Le and Owen}{2013}]{bard:13}
\begin{barticle}[author]
\bauthor{\bsnm{Barden},~\bfnm{Dennis}\binits{D.}},
  \bauthor{\bsnm{Le},~\bfnm{Huiling}\binits{H.}} \AND
  \bauthor{\bsnm{Owen},~\bfnm{Megan}\binits{M.}}
(\byear{2013}).
\btitle{Central limit theorems for {Fr{\'e}chet} means in the space of
  phylogenetic trees}.
\bjournal{Electronic Journal of Probability}
\bvolume{18}
\bpages{1--25}.
\end{barticle}
\endbibitem

\bibitem[\protect\citeauthoryear{Barras et~al.}{2009}]{barr:09}
\begin{barticle}[author]
\bauthor{\bsnm{Barras},~\bfnm{Christen~D}\binits{C.~D.}},
  \bauthor{\bsnm{Tress},~\bfnm{Brian~M}\binits{B.~M.}},
  \bauthor{\bsnm{Christensen},~\bfnm{Soren}\binits{S.}},
  \bauthor{\bsnm{MacGregor},~\bfnm{Lachlan}\binits{L.}},
  \bauthor{\bsnm{Collins},~\bfnm{Marnie}\binits{M.}},
  \bauthor{\bsnm{Desmond},~\bfnm{Patricia~M}\binits{P.~M.}},
  \bauthor{\bsnm{Skolnick},~\bfnm{Brett~E}\binits{B.~E.}},
  \bauthor{\bsnm{Mayer},~\bfnm{Stephan~A}\binits{S.~A.}},
  \bauthor{\bsnm{Broderick},~\bfnm{Joseph~P}\binits{J.~P.}},
  \bauthor{\bsnm{Diringer},~\bfnm{Michael~N}\binits{M.~N.}} \betal{et~al.}
(\byear{2009}).
\btitle{Density and shape as CT predictors of intracerebral hemorrhage growth}.
\bjournal{Stroke}
\bvolume{40}
\bpages{1325--1331}.
\end{barticle}
\endbibitem

\bibitem[\protect\citeauthoryear{Bigot et~al.}{2017}]{bigo:17}
\begin{barticle}[author]
\bauthor{\bsnm{Bigot},~\bfnm{J{\'e}r{\'e}mie}\binits{J.}},
  \bauthor{\bsnm{Gouet},~\bfnm{Ra{\'u}l}\binits{R.}},
  \bauthor{\bsnm{Klein},~\bfnm{Thierry}\binits{T.}} \AND
  \bauthor{\bsnm{L{\'o}pez},~\bfnm{Alfredo}\binits{A.}}
(\byear{2017}).
\btitle{Geodesic {PCA} in the {W}asserstein space by Convex {PCA}}.
\bjournal{Annales de l{'}Institut Henri Poincar{\'e} B: Probability and
  Statistics}
\bvolume{53}
\bpages{1-26}.
\end{barticle}
\endbibitem

\bibitem[\protect\citeauthoryear{Bolstad et~al.}{2003}]{bols:03}
\begin{barticle}[author]
\bauthor{\bsnm{Bolstad},~\bfnm{B~M.}\binits{B.~M.}},
  \bauthor{\bsnm{Irizarry},~\bfnm{R.~A.}\binits{R.~A.}},
  \bauthor{\bsnm{{\AA}strand},~\bfnm{M.}\binits{M.}} \AND
  \bauthor{\bsnm{Speed},~\bfnm{T.~P.}\binits{T.~P.}}
(\byear{2003}).
\btitle{A comparison of normalization methods for high density oligonucleotide
  array data based on variance and bias}.
\bjournal{Bioinformatics}
\bvolume{19}
\bpages{185--193}.
\end{barticle}
\endbibitem

\bibitem[\protect\citeauthoryear{Bosq}{2000}]{bosq:00}
\begin{bbook}[author]
\bauthor{\bsnm{Bosq},~\bfnm{Denis}\binits{D.}}
(\byear{2000}).
\btitle{Linear Processes in Function Spaces: Theory and Applications}.
\bpublisher{Springer-Verlag}, \baddress{New York}.
\end{bbook}
\endbibitem

\bibitem[\protect\citeauthoryear{Boulouis et~al.}{2016}]{boul:16}
\begin{barticle}[author]
\bauthor{\bsnm{Boulouis},~\bfnm{Gregoire}\binits{G.}},
  \bauthor{\bsnm{Morotti},~\bfnm{Andrea}\binits{A.}},
  \bauthor{\bsnm{Brouwers},~\bfnm{H~Bart}\binits{H.~B.}},
  \bauthor{\bsnm{Charidimou},~\bfnm{Andreas}\binits{A.}},
  \bauthor{\bsnm{Jessel},~\bfnm{Michael~J}\binits{M.~J.}},
  \bauthor{\bsnm{Auriel},~\bfnm{Eitan}\binits{E.}},
  \bauthor{\bsnm{Pontes-Neto},~\bfnm{Octavio}\binits{O.}},
  \bauthor{\bsnm{Ayres},~\bfnm{Alison}\binits{A.}},
  \bauthor{\bsnm{Vashkevich},~\bfnm{Anastasia}\binits{A.}},
  \bauthor{\bsnm{Schwab},~\bfnm{Kristin~M}\binits{K.~M.}} \betal{et~al.}
(\byear{2016}).
\btitle{Noncontrast computed tomography hypodensities predict poor outcome in
  intracerebral hemorrhage patients}.
\bjournal{Stroke}
\bvolume{47}
\bpages{2511--2516}.
\end{barticle}
\endbibitem

\bibitem[\protect\citeauthoryear{Broadhurst et~al.}{2006}]{broa:06}
\begin{binproceedings}[author]
\bauthor{\bsnm{Broadhurst},~\bfnm{Robert~E}\binits{R.~E.}},
  \bauthor{\bsnm{Stough},~\bfnm{Joshua}\binits{J.}},
  \bauthor{\bsnm{Pizer},~\bfnm{Stephen~M}\binits{S.~M.}} \AND
  \bauthor{\bsnm{Chaney},~\bfnm{Edward~L}\binits{E.~L.}}
(\byear{2006}).
\btitle{A statistical appearance model based on intensity quantile histograms}.
In \bbooktitle{3rd IEEE International Symposium on Biomedical Imaging: Nano to
  Macro, 2006.}
\bpages{422--425}.
\bpublisher{IEEE}.
\end{binproceedings}
\endbibitem

\bibitem[\protect\citeauthoryear{Cao, Yang and Todem}{2012}]{cao:12}
\begin{barticle}[author]
\bauthor{\bsnm{Cao},~\bfnm{Guanqun}\binits{G.}},
  \bauthor{\bsnm{Yang},~\bfnm{Lijian}\binits{L.}} \AND
  \bauthor{\bsnm{Todem},~\bfnm{David}\binits{D.}}
(\byear{2012}).
\btitle{Simultaneous inference for the mean function based on dense functional
  data}.
\bjournal{Journal of nonparametric statistics}
\bvolume{24}
\bpages{359--377}.
\end{barticle}
\endbibitem

\bibitem[\protect\citeauthoryear{Chiou and M\"{u}ller}{2007}]{mull:07:3}
\begin{barticle}[author]
\bauthor{\bsnm{Chiou},~\bfnm{Jeng-Min}\binits{J.-M.}} \AND
  \bauthor{\bsnm{M\"{u}ller},~\bfnm{Hans-Georg}\binits{H.-G.}}
(\byear{2007}).
\btitle{Diagnostics for functional regression via residual processes}.
\bjournal{Computational Statistics and Data Analysis}
\bvolume{51}
\bpages{4849--4863}.
\end{barticle}
\endbibitem

\bibitem[\protect\citeauthoryear{Claeskens and Van~Keilegom}{2003}]{clae:03}
\begin{barticle}[author]
\bauthor{\bsnm{Claeskens},~\bfnm{Gerda}\binits{G.}} \AND
  \bauthor{\bsnm{Van~Keilegom},~\bfnm{Ingrid}\binits{I.}}
(\byear{2003}).
\btitle{Bootstrap Confidence Bands for Regression Curves and Their
  Derivatives}.
\bjournal{Annals of Statistics}
\bvolume{31}
\bpages{1852--1884}.
\end{barticle}
\endbibitem

\bibitem[\protect\citeauthoryear{Cornea et~al.}{2017}]{corn:17}
\begin{barticle}[author]
\bauthor{\bsnm{Cornea},~\bfnm{Emil}\binits{E.}},
  \bauthor{\bsnm{Zhu},~\bfnm{Hongtu}\binits{H.}},
  \bauthor{\bsnm{Kim},~\bfnm{Peter}\binits{P.}} \AND
  \bauthor{\bsnm{Ibrahim},~\bfnm{Joseph~G.}\binits{J.~G.}}
(\byear{2017}).
\btitle{Regression models on {R}iemannian symmetric spaces}.
\bjournal{Journal of the Royal Statistical Society: Series B (Statistical
  Methodology)}
\bvolume{79}
\bpages{463--482}.
\bdoi{10.1111/rssb.12169}
\end{barticle}
\endbibitem

\bibitem[\protect\citeauthoryear{Degras}{2011}]{degr:11}
\begin{barticle}[author]
\bauthor{\bsnm{Degras},~\bfnm{David~A}\binits{D.~A.}}
(\byear{2011}).
\btitle{Simultaneous confidence bands for nonparametric regression with
  functional data}.
\bjournal{Statistica Sinica}
\bpages{1735--1765}.
\end{barticle}
\endbibitem

\bibitem[\protect\citeauthoryear{Delcourt et~al.}{2016}]{delc:16}
\begin{barticle}[author]
\bauthor{\bsnm{Delcourt},~\bfnm{Candice}\binits{C.}},
  \bauthor{\bsnm{Zhang},~\bfnm{Shihong}\binits{S.}},
  \bauthor{\bsnm{Arima},~\bfnm{Hisatomi}\binits{H.}},
  \bauthor{\bsnm{Sato},~\bfnm{Shoichiro}\binits{S.}},
  \bauthor{\bsnm{Salman},~\bfnm{Rustam Al-Shahi}\binits{R.~A.-S.}},
  \bauthor{\bsnm{Wang},~\bfnm{Xia}\binits{X.}},
  \bauthor{\bsnm{Davies},~\bfnm{Leo}\binits{L.}},
  \bauthor{\bsnm{Stapf},~\bfnm{Christian}\binits{C.}},
  \bauthor{\bsnm{Robinson},~\bfnm{Thompson}\binits{T.}},
  \bauthor{\bsnm{Lavados},~\bfnm{Pablo~M}\binits{P.~M.}} \betal{et~al.}
(\byear{2016}).
\btitle{Significance of hematoma shape and density in intracerebral hemorrhage:
  the intensive blood pressure reduction in acute intracerebral hemorrhage
  trial study}.
\bjournal{Stroke}
\bvolume{47}
\bpages{1227--1232}.
\end{barticle}
\endbibitem

\bibitem[\protect\citeauthoryear{Dubey and M{\"u}ller}{2019}]{mull:18:5}
\begin{barticle}[author]
\bauthor{\bsnm{Dubey},~\bfnm{Paromita}\binits{P.}} \AND
  \bauthor{\bsnm{M{\"u}ller},~\bfnm{Hans-Georg}\binits{H.-G.}}
(\byear{2019}).
\btitle{Fr\'echet Analysis of Variance for Random Objects}.
\bjournal{Biometrika, to appear -- arXiv preprint arXiv:1710.02761}.
\end{barticle}
\endbibitem

\bibitem[\protect\citeauthoryear{Egozcue, Diaz-Barrero and
  Pawlowsky-Glahn}{2006}]{egoz:06}
\begin{barticle}[author]
\bauthor{\bsnm{Egozcue},~\bfnm{J.~J.}\binits{J.~J.}},
  \bauthor{\bsnm{Diaz-Barrero},~\bfnm{J.~L.}\binits{J.~L.}} \AND
  \bauthor{\bsnm{Pawlowsky-Glahn},~\bfnm{V.}\binits{V.}}
(\byear{2006}).
\btitle{Hilbert space of probability density functions based on {A}itchison
  geometry}.
\bjournal{Acta Mathematica Sinica}
\bvolume{22}
\bpages{1175-1182}.
\end{barticle}
\endbibitem

\bibitem[\protect\citeauthoryear{Eubank and Speckman}{1993}]{euba:93:1}
\begin{barticle}[author]
\bauthor{\bsnm{Eubank},~\bfnm{R.~L.}\binits{R.~L.}} \AND
  \bauthor{\bsnm{Speckman},~\bfnm{P.~L.}\binits{P.~L.}}
(\byear{1993}).
\btitle{Confidence Bands in Nonparametric Regression}.
\bjournal{Journal of the American Statistical Association}
\bvolume{88}
\bpages{1287--1301}.
\end{barticle}
\endbibitem

\bibitem[\protect\citeauthoryear{Fan and Zhang}{2008}]{fan:08}
\begin{barticle}[author]
\bauthor{\bsnm{Fan},~\bfnm{Jianqing}\binits{J.}} \AND
  \bauthor{\bsnm{Zhang},~\bfnm{Wenyang}\binits{W.}}
(\byear{2008}).
\btitle{Statistical methods with varying coefficient models}.
\bjournal{Stat. Interface}
\bvolume{1}
\bpages{179--195}.
\bmrnumber{MR2425354}
\end{barticle}
\endbibitem

\bibitem[\protect\citeauthoryear{Faraway}{1997}]{fara:97}
\begin{barticle}[author]
\bauthor{\bsnm{Faraway},~\bfnm{Julian~J.}\binits{J.~J.}}
(\byear{1997}).
\btitle{Regression analysis for a functional response}.
\bjournal{Technometrics}
\bvolume{39}
\bpages{254--261}.
\bmrnumber{MR1462586}
\end{barticle}
\endbibitem

\bibitem[\protect\citeauthoryear{Fletcher}{2013}]{flet:13}
\begin{barticle}[author]
\bauthor{\bsnm{Fletcher},~\bfnm{P~Thomas}\binits{P.~T.}}
(\byear{2013}).
\btitle{Geodesic regression and the theory of least squares on {R}iemannian
  manifolds}.
\bjournal{International Journal of Computer Vision}
\bvolume{105}
\bpages{171--185}.
\end{barticle}
\endbibitem

\bibitem[\protect\citeauthoryear{Fletcher et~al.}{2004}]{flet:04}
\begin{barticle}[author]
\bauthor{\bsnm{Fletcher},~\bfnm{P~Thomas}\binits{P.~T.}},
  \bauthor{\bsnm{Lu},~\bfnm{Conglin}\binits{C.}},
  \bauthor{\bsnm{Pizer},~\bfnm{Stephen~M}\binits{S.~M.}} \AND
  \bauthor{\bsnm{Joshi},~\bfnm{Sarang}\binits{S.}}
(\byear{2004}).
\btitle{Principal geodesic analysis for the study of nonlinear statistics of
  shape}.
\bjournal{IEEE Transactions on Medical Imaging}
\bvolume{23}
\bpages{995--1005}.
\end{barticle}
\endbibitem

\bibitem[\protect\citeauthoryear{Fr{\'e}chet}{1948}]{frec:48}
\begin{barticle}[author]
\bauthor{\bsnm{Fr{\'e}chet},~\bfnm{Maurice}\binits{M.}}
(\byear{1948}).
\btitle{Les {\'e}l{\'e}ments al{\'e}atoires de nature quelconque dans un espace
  distanci{\'e}}.
\bjournal{Annales de l'Institut Henri Poincar{\'e}}
\bvolume{10}
\bpages{215--310}.
\end{barticle}
\endbibitem

\bibitem[\protect\citeauthoryear{Groeneboom and Jongbloed}{2010}]{groe:10}
\begin{barticle}[author]
\bauthor{\bsnm{Groeneboom},~\bfnm{Piet}\binits{P.}} \AND
  \bauthor{\bsnm{Jongbloed},~\bfnm{Geurt}\binits{G.}}
(\byear{2010}).
\btitle{Generalized continuous isotonic regression}.
\bjournal{Statistics \& probability letters}
\bvolume{80}
\bpages{248--253}.
\end{barticle}
\endbibitem

\bibitem[\protect\citeauthoryear{Hevesi et~al.}{2018}]{heve:18}
\begin{barticle}[author]
\bauthor{\bsnm{Hevesi},~\bfnm{Mario}\binits{M.}},
  \bauthor{\bsnm{Bershad},~\bfnm{Eric~M}\binits{E.~M.}},
  \bauthor{\bsnm{Jafari},~\bfnm{Mostafa}\binits{M.}},
  \bauthor{\bsnm{Mayer},~\bfnm{Stephan~A}\binits{S.~A.}},
  \bauthor{\bsnm{Selim},~\bfnm{Magdy}\binits{M.}},
  \bauthor{\bsnm{Suarez},~\bfnm{Jose~I}\binits{J.~I.}} \AND
  \bauthor{\bsnm{Divani},~\bfnm{Afshin~A}\binits{A.~A.}}
(\byear{2018}).
\btitle{Untreated hypertension as predictor of in-hospital mortality in
  intracerebral hemorrhage: A multi-center study}.
\bjournal{Journal of critical care}
\bvolume{43}
\bpages{235--239}.
\end{barticle}
\endbibitem

\bibitem[\protect\citeauthoryear{Hron et~al.}{2016}]{mena:16:2}
\begin{barticle}[author]
\bauthor{\bsnm{Hron},~\bfnm{K}\binits{K.}},
  \bauthor{\bsnm{Menafoglio},~\bfnm{A}\binits{A.}},
  \bauthor{\bsnm{Templ},~\bfnm{M}\binits{M.}},
  \bauthor{\bsnm{Hruzova},~\bfnm{K}\binits{K.}} \AND
  \bauthor{\bsnm{Filzmoser},~\bfnm{P}\binits{P.}}
(\byear{2016}).
\btitle{Simplicial principal component analysis for density functions in Bayes
  spaces}.
\bjournal{MOX-report}
\bvolume{25}
\bpages{2014}.
\end{barticle}
\endbibitem

\bibitem[\protect\citeauthoryear{Hsing and Eubank}{2015}]{hsin:15}
\begin{bbook}[author]
\bauthor{\bsnm{Hsing},~\bfnm{Tailen}\binits{T.}} \AND
  \bauthor{\bsnm{Eubank},~\bfnm{Randall}\binits{R.}}
(\byear{2015}).
\btitle{Theoretical Foundations of Functional Data Analysis, with an
  Introduction to Linear Operators}.
\bpublisher{John Wiley \& Sons}.
\end{bbook}
\endbibitem

\bibitem[\protect\citeauthoryear{Jain and Marcus}{1975}]{jain:75}
\begin{barticle}[author]
\bauthor{\bsnm{Jain},~\bfnm{Naresh~C}\binits{N.~C.}} \AND
  \bauthor{\bsnm{Marcus},~\bfnm{Michael~B}\binits{M.~B.}}
(\byear{1975}).
\btitle{Central limit theorems for {C(S)}-valued random variables}.
\bjournal{Journal of Functional Analysis}
\bvolume{19}
\bpages{216--231}.
\end{barticle}
\endbibitem

\bibitem[\protect\citeauthoryear{Kneip and Utikal}{2001}]{knei:01}
\begin{barticle}[author]
\bauthor{\bsnm{Kneip},~\bfnm{Alois}\binits{A.}} \AND
  \bauthor{\bsnm{Utikal},~\bfnm{Klaus~J.}\binits{K.~J.}}
(\byear{2001}).
\btitle{Inference for density families using functional principal component
  analysis}.
\bjournal{Journal of the American Statistical Association}
\bvolume{96}
\bpages{519--542}.
\end{barticle}
\endbibitem

\bibitem[\protect\citeauthoryear{Levina and Bickel}{2001}]{levi:01}
\begin{binproceedings}[author]
\bauthor{\bsnm{Levina},~\bfnm{Elizaveta}\binits{E.}} \AND
  \bauthor{\bsnm{Bickel},~\bfnm{Peter}\binits{P.}}
(\byear{2001}).
\btitle{The earth mover's distance is the {M}allows distance: Some insights
  from statistics}.
In \bbooktitle{Proceedings Eighth IEEE International Conference on Computer
  Vision. ICCV 2001}
\bvolume{2}
\bpages{251--256}.
\bpublisher{IEEE}.
\end{binproceedings}
\endbibitem

\bibitem[\protect\citeauthoryear{Lin and Dunson}{2014}]{lin:14}
\begin{barticle}[author]
\bauthor{\bsnm{Lin},~\bfnm{Lizhen}\binits{L.}} \AND
  \bauthor{\bsnm{Dunson},~\bfnm{David~B}\binits{D.~B.}}
(\byear{2014}).
\btitle{Bayesian monotone regression using Gaussian process projection}.
\bjournal{Biometrika}
\bvolume{101}
\bpages{303--317}.
\end{barticle}
\endbibitem

\bibitem[\protect\citeauthoryear{Marron and Alonso}{2014}]{marr:14}
\begin{barticle}[author]
\bauthor{\bsnm{Marron},~\bfnm{J~Steve}\binits{J.~S.}} \AND
  \bauthor{\bsnm{Alonso},~\bfnm{Andr{\'e}s~M}\binits{A.~M.}}
(\byear{2014}).
\btitle{Overview of object oriented data analysis}.
\bjournal{Biometrical Journal}
\bvolume{56}
\bpages{732--753}.
\end{barticle}
\endbibitem

\bibitem[\protect\citeauthoryear{Morgenstern et~al.}{2010}]{morg:10}
\begin{barticle}[author]
\bauthor{\bsnm{Morgenstern},~\bfnm{Lewis~B}\binits{L.~B.}},
  \bauthor{\bsnm{Hemphill~III},~\bfnm{J~Claude}\binits{J.~C.}},
  \bauthor{\bsnm{Anderson},~\bfnm{Craig}\binits{C.}},
  \bauthor{\bsnm{Becker},~\bfnm{Kyra}\binits{K.}},
  \bauthor{\bsnm{Broderick},~\bfnm{Joseph~P}\binits{J.~P.}},
  \bauthor{\bsnm{Connolly~Jr},~\bfnm{E~Sander}\binits{E.~S.}},
  \bauthor{\bsnm{Greenberg},~\bfnm{Steven~M}\binits{S.~M.}},
  \bauthor{\bsnm{Huang},~\bfnm{James~N}\binits{J.~N.}},
  \bauthor{\bsnm{Macdonald},~\bfnm{R~Loch}\binits{R.~L.}},
  \bauthor{\bsnm{Mess{\'e}},~\bfnm{Steven~R}\binits{S.~R.}} \betal{et~al.}
(\byear{2010}).
\btitle{Guidelines for the management of spontaneous intracerebral hemorrhage:
  a guideline for healthcare professionals from the American Heart
  Association/American Stroke Association}.
\bjournal{Stroke}
\bvolume{41}
\bpages{2108--2129}.
\end{barticle}
\endbibitem

\bibitem[\protect\citeauthoryear{Nerini and Ghattas}{2007}]{neri:07}
\begin{barticle}[author]
\bauthor{\bsnm{Nerini},~\bfnm{D.}\binits{D.}} \AND
  \bauthor{\bsnm{Ghattas},~\bfnm{B.}\binits{B.}}
(\byear{2007}).
\btitle{Classifying densities using functional regression trees: Applications
  in oceanology}.
\bjournal{Computational Statistics and Data Analysis}
\bvolume{51}
\bpages{4984 - 4993}.
\end{barticle}
\endbibitem

\bibitem[\protect\citeauthoryear{Niethammer, Huang and Vialard}{2011}]{niet:11}
\begin{bincollection}[author]
\bauthor{\bsnm{Niethammer},~\bfnm{Marc}\binits{M.}},
  \bauthor{\bsnm{Huang},~\bfnm{Yang}\binits{Y.}} \AND
  \bauthor{\bsnm{Vialard},~\bfnm{Fran{\c{c}}ois-Xavier}\binits{F.-X.}}
(\byear{2011}).
\btitle{Geodesic regression for image time-series}.
In \bbooktitle{Medical Image Computing and Computer-Assisted
  Intervention--MICCAI 2011}
\bpages{655--662}.
\bpublisher{Springer}.
\end{bincollection}
\endbibitem

\bibitem[\protect\citeauthoryear{Panaretos, Pham and Yao}{2014}]{pana:14}
\begin{barticle}[author]
\bauthor{\bsnm{Panaretos},~\bfnm{Victor~M}\binits{V.~M.}},
  \bauthor{\bsnm{Pham},~\bfnm{Tung}\binits{T.}} \AND
  \bauthor{\bsnm{Yao},~\bfnm{Zhigang}\binits{Z.}}
(\byear{2014}).
\btitle{Principal flows}.
\bjournal{Journal of the American Statistical Association}
\bvolume{109}
\bpages{424--436}.
\end{barticle}
\endbibitem

\bibitem[\protect\citeauthoryear{Panaretos and Zemel}{2016}]{pana:16}
\begin{barticle}[author]
\bauthor{\bsnm{Panaretos},~\bfnm{Victor~M}\binits{V.~M.}} \AND
  \bauthor{\bsnm{Zemel},~\bfnm{Yoav}\binits{Y.}}
(\byear{2016}).
\btitle{Amplitude and phase variation of point processes}.
\bjournal{The Annals of Statistics}
\bvolume{44}
\bpages{771--812}.
\end{barticle}
\endbibitem

\bibitem[\protect\citeauthoryear{Panaretos and Zemel}{2019}]{pana:19}
\begin{barticle}[author]
\bauthor{\bsnm{Panaretos},~\bfnm{Victor~M}\binits{V.~M.}} \AND
  \bauthor{\bsnm{Zemel},~\bfnm{Yoav}\binits{Y.}}
(\byear{2019}).
\btitle{Statistical aspects of Wasserstein distances}.
\bjournal{Annual review of statistics and its application}
\bvolume{6}
\bpages{405--431}.
\end{barticle}
\endbibitem

\bibitem[\protect\citeauthoryear{Parzen}{1979}]{parz:79}
\begin{barticle}[author]
\bauthor{\bsnm{Parzen},~\bfnm{E.}\binits{E.}}
(\byear{1979}).
\btitle{Nonparametric statistical modeling}.
\bjournal{Journal of the American Statistical Association}
\bvolume{74}
\bpages{105-121}.
\end{barticle}
\endbibitem

\bibitem[\protect\citeauthoryear{Pass}{2013}]{pass:13}
\begin{barticle}[author]
\bauthor{\bsnm{Pass},~\bfnm{Brendan}\binits{B.}}
(\byear{2013}).
\btitle{Optimal transportation with infinitely many marginals}.
\bjournal{Journal of Functional Analysis}
\bvolume{264}
\bpages{947--963}.
\end{barticle}
\endbibitem

\bibitem[\protect\citeauthoryear{Patrangenaru and Ellingson}{2015}]{patr:15}
\begin{bbook}[author]
\bauthor{\bsnm{Patrangenaru},~\bfnm{Victor}\binits{V.}} \AND
  \bauthor{\bsnm{Ellingson},~\bfnm{Leif}\binits{L.}}
(\byear{2015}).
\btitle{Nonparametric Statistics on Manifolds and Their Applications to Object
  Data Analysis}.
\bpublisher{CRC Press}.
\end{bbook}
\endbibitem

\bibitem[\protect\citeauthoryear{Petersen and M\"{u}ller}{2016}]{mull:16:1}
\begin{barticle}[author]
\bauthor{\bsnm{Petersen},~\bfnm{Alexander}\binits{A.}} \AND
  \bauthor{\bsnm{M\"{u}ller},~\bfnm{Hans-Georg}\binits{H.-G.}}
(\byear{2016}).
\btitle{Functional data analysis for density functions by transformation to a
  {H}ilbert space}.
\bjournal{Annals of Statistics}
\bvolume{44}
\bpages{183-218}.
\end{barticle}
\endbibitem

\bibitem[\protect\citeauthoryear{Petersen and M\"{u}ller}{2019}]{mull:19:3}
\begin{barticle}[author]
\bauthor{\bsnm{Petersen},~\bfnm{Alexander}\binits{A.}} \AND
  \bauthor{\bsnm{M\"{u}ller},~\bfnm{Hans-Georg}\binits{H.-G.}}
(\byear{2019}).
\btitle{Fr\'echet regression for random objects with {E}uclidean predictors}.
\bjournal{Annals of Statistics}
\bvolume{47}
\bpages{691--719}.
\end{barticle}
\endbibitem

\bibitem[\protect\citeauthoryear{Rychlik}{2012}]{rych:12}
\begin{bbook}[author]
\bauthor{\bsnm{Rychlik},~\bfnm{Tomasz}\binits{T.}}
(\byear{2012}).
\btitle{Projecting statistical functionals}
\bvolume{160}.
\bpublisher{Springer Science \& Business Media}.
\end{bbook}
\endbibitem

\bibitem[\protect\citeauthoryear{Salazar et~al.}{2019}]{sala:19}
\begin{barticle}[author]
\bauthor{\bsnm{Salazar},~\bfnm{Pascal}\binits{P.}},
  \bauthor{\bsnm{Di~Napoli},~\bfnm{Mario}\binits{M.}},
  \bauthor{\bsnm{Jafari},~\bfnm{Mostafa}\binits{M.}},
  \bauthor{\bsnm{Jafarli},~\bfnm{Alibay}\binits{A.}},
  \bauthor{\bsnm{Ziai},~\bfnm{Wendy}\binits{W.}},
  \bauthor{\bsnm{Petersen},~\bfnm{Alexander}\binits{A.}},
  \bauthor{\bsnm{Mayer},~\bfnm{Stephan~A}\binits{S.~A.}},
  \bauthor{\bsnm{Bershad},~\bfnm{Eric~M}\binits{E.~M.}},
  \bauthor{\bsnm{Damani},~\bfnm{Rahul}\binits{R.}} \AND
  \bauthor{\bsnm{Divani},~\bfnm{Afshin~A}\binits{A.~A.}}
(\byear{2019}).
\btitle{Exploration of Multiparameter Hematoma 3D Image Analysis for Predicting
  Outcome After Intracerebral Hemorrhage}.
\bjournal{Neurocritical Care, to appear}.
\end{barticle}
\endbibitem

\bibitem[\protect\citeauthoryear{Salman, Labovitz and Stapf}{2009}]{salm:09}
\begin{barticle}[author]
\bauthor{\bsnm{Salman},~\bfnm{Rustam Al-Shahi}\binits{R.~A.-S.}},
  \bauthor{\bsnm{Labovitz},~\bfnm{Daniel~L}\binits{D.~L.}} \AND
  \bauthor{\bsnm{Stapf},~\bfnm{Christian}\binits{C.}}
(\byear{2009}).
\btitle{Spontaneous intracerebral haemorrhage}.
\bjournal{BMJ}
\bvolume{339}
\bpages{b2586}.
\end{barticle}
\endbibitem

\bibitem[\protect\citeauthoryear{Satterthwaite}{1941}]{satt:41}
\begin{barticle}[author]
\bauthor{\bsnm{Satterthwaite},~\bfnm{Franklin~E}\binits{F.~E.}}
(\byear{1941}).
\btitle{Synthesis of variance}.
\bjournal{Psychometrika}
\bvolume{6}
\bpages{309--316}.
\end{barticle}
\endbibitem

\bibitem[\protect\citeauthoryear{Shen and Faraway}{2004}]{shen:04}
\begin{barticle}[author]
\bauthor{\bsnm{Shen},~\bfnm{Q.}\binits{Q.}} \AND
  \bauthor{\bsnm{Faraway},~\bfnm{J.~J.}\binits{J.~J.}}
(\byear{2004}).
\btitle{An \textsc{F} test for linear models with functional responses}.
\bjournal{Statistica Sinica}
\bvolume{14}
\bpages{1239-1257}.
\end{barticle}
\endbibitem

\bibitem[\protect\citeauthoryear{Srivastava, Jermyn and Joshi}{2007}]{sriv:07}
\begin{barticle}[author]
\bauthor{\bsnm{Srivastava},~\bfnm{Anuj}\binits{A.}},
  \bauthor{\bsnm{Jermyn},~\bfnm{Ian}\binits{I.}} \AND
  \bauthor{\bsnm{Joshi},~\bfnm{Shantanu}\binits{S.}}
(\byear{2007}).
\btitle{Riemannian analysis of probability density functions with applications
  in vision}.
\bjournal{Proceedings from IEEE Conference on Computer Vision and Pattern
  Recognition}
\bvolume{25}
\bpages{1--8}.
\end{barticle}
\endbibitem

\bibitem[\protect\citeauthoryear{Talsk{\'a} et~al.}{2018}]{tals:18}
\begin{barticle}[author]
\bauthor{\bsnm{Talsk{\'a}},~\bfnm{R}\binits{R.}},
  \bauthor{\bsnm{Menafoglio},~\bfnm{A}\binits{A.}},
  \bauthor{\bsnm{Machalov{\'a}},~\bfnm{J}\binits{J.}},
  \bauthor{\bsnm{Hron},~\bfnm{K}\binits{K.}} \AND
  \bauthor{\bsnm{Fi{\v{s}}erov{\'a}},~\bfnm{E}\binits{E.}}
(\byear{2018}).
\btitle{Compositional regression with functional response}.
\bjournal{Computational Statistics \& Data Analysis}
\bvolume{123}
\bpages{66--85}.
\end{barticle}
\endbibitem

\bibitem[\protect\citeauthoryear{Van~der Vaart and Wellner}{1996}]{well:96}
\begin{bbook}[author]
\bauthor{\bparticle{Van~der} \bsnm{Vaart},~\bfnm{Aad}\binits{A.}} \AND
  \bauthor{\bsnm{Wellner},~\bfnm{John}\binits{J.}}
(\byear{1996}).
\btitle{Weak Convergence and Empirical Processes}.
\bpublisher{Springer, New York}.
\end{bbook}
\endbibitem

\bibitem[\protect\citeauthoryear{Villani}{2003}]{vill:03}
\begin{bbook}[author]
\bauthor{\bsnm{Villani},~\bfnm{C.}\binits{C.}}
(\byear{2003}).
\btitle{Topics in Optimal Transportation}.
\bpublisher{American Mathematical Society}.
\end{bbook}
\endbibitem

\bibitem[\protect\citeauthoryear{Wang and Yang}{2009}]{wang:09:2}
\begin{barticle}[author]
\bauthor{\bsnm{Wang},~\bfnm{Jing}\binits{J.}} \AND
  \bauthor{\bsnm{Yang},~\bfnm{Lijian}\binits{L.}}
(\byear{2009}).
\btitle{Polynomial spline confidence bands for regression curves}.
\bjournal{Statistica Sinica}
\bpages{325--342}.
\end{barticle}
\endbibitem

\bibitem[\protect\citeauthoryear{Zhang and M\"{u}ller}{2011}]{mull:11:4}
\begin{barticle}[author]
\bauthor{\bsnm{Zhang},~\bfnm{Z.}\binits{Z.}} \AND
  \bauthor{\bsnm{M\"{u}ller},~\bfnm{Hans-Georg}\binits{H.-G.}}
(\byear{2011}).
\btitle{Functional density synchronization}.
\bjournal{Computational Statistics and Data Analysis}
\bvolume{55}
\bpages{2234--2249}.
\end{barticle}
\endbibitem

\bibitem[\protect\citeauthoryear{Zhang and Wang}{2016}]{zhan:16}
\begin{barticle}[author]
\bauthor{\bsnm{Zhang},~\bfnm{Xiaoke}\binits{X.}} \AND
  \bauthor{\bsnm{Wang},~\bfnm{Jane-Ling}\binits{J.-L.}}
(\bye